\def\P{\mathbb{P}}
\def\R{\mathbb{R}}
\def\1{\mathbf{1}}
\def\stocmode{0}
\def\arxivmode{0}
\def\fastmode{0}
\def\showauthornotes{0}
\def\showkeys{0}
\def\showdraftbox{1}
\def\showcolorlinks{0}
\def\usemicrotype{1}
\def\showfixme{1}
\newtheorem{theorem}{Theorem}[section]
\newtheorem*{theorem*}{Theorem}
\newtheorem{proposition}[theorem]{Proposition}
\newtheorem*{proposition*}{Proposition}
\newtheorem{lemma}[theorem]{Lemma}
\newtheorem*{lemma*}{Lemma}
\newtheorem{corollary}[theorem]{Corollary}
\newtheorem*{conjecture*}{Conjecture}
\newtheorem*{fact*}{Fact}
\newtheorem*{exercise*}{Exercise}
\newtheorem*{hypothesis*}{Hypothesis}
\newtheorem{conjecture}[theorem]{Conjecture}
\theoremstyle{definition}
\newtheorem{definition}[theorem]{Definition}
\newtheorem{exercise-easy}[theorem]{Exercise}
\newtheorem{exercise-med}[theorem]{Exercise}
\newtheorem{exercise-hard}[theorem]{Exercise$^\star$}
\newtheorem*{claim*}{Claim}
\newtheorem*{remark*}{Remark}
\newtheorem*{observation*}{Observation}
\let\mathbb\varmathbb
\definecolor{bleudefrance}{rgb}{0.01, 0.1, 1.0}
\definecolor{azure}{rgb}{0.0, 0.5, 1.0}
\newcommand{\savehyperref}[2]{\texorpdfstring{\hyperref[#1]{#2}}{#2}}
\newcommand{\Sref}[1]{\hyperref[#1]{\S\ref*{#1}}}
\newcommand{\mynotes}[1]{{\sffamily\small\color{teal}{#1}}\medskip}
\newcommand{\Authornote}[2]{{\sffamily\small\color{blue}{[#1: #2]}}\medskip}
\newcommand{\Authornotecolored}[3]{{\sffamily\small\color{#1}{[#2: #3]}}}
\newcommand{\Authorcomment}[2]{{\sffamily\small\color{gray}{[#1: #2]}}}
\newcommand{\Authorstartcomment}[1]{\sffamily\small\color{gray}[#1: }
\newcommand{\Authorfnote}[2]{\footnote{\color{red}{#1: #2}}}
\newcommand{\Authorfixme}[1]{\Authornote{#1}{\textbf{??}}}
\newcommand{\Authormarginmark}[1]{\marginpar{\textcolor{red}{\fbox{\Large #1:!}}}}
\newcommand{\myexplain}[1]{{\sffamily\small\color{red}{\noindent [Explanation:\medskip\newline \begin{quote}#1\hfill]\end{quote}}}\medskip}
\newcommand{\mynotes}[1]{}
\newcommand{\Authornote}[2]{}
\newcommand{\Authornotecolored}[3]{}
\newcommand{\Authorcomment}[2]{}
\newcommand{\Authorstartcomment}[1]{}
\newcommand{\Authorfnote}[2]{}
\newcommand{\Authorfixme}[1]{}
\newcommand{\Authormarginmark}[1]{}
\newcommand{\myexplain}[1]{}
\renewcommand{\myexplain}[1]{{\sffamily\small\color{red}{\noindent \begin{quote}{\bf Explanation:} \medskip\newline #1\end{quote}}}\medskip}
\newcommand{\Esymb}{\mathbb{E}}
\DeclareMathOperator*{\E}{\Esymb}
\newcommand{\textparen}[1]{\text{(#1)}}
\newcommand{\because}[1]{\textparen{because #1}}
\renewcommand{\because}[1]{\textparen{because #1}}
\newcommand\bdot\bullet
\renewcommand{\leq}{\leqslant}
\renewcommand{\geq}{\geqslant}
\let\epsilon=\varepsilon
\numberwithin{equation}{section}
\newcommand\MYcurrentlabel{xxx}
\newcommand{\MYstore}[2]{%
  \global\expandafter \def \csname MYMEMORY #1 \endcsname{#2}%
}
\newcommand{\MYload}[1]{%
  \csname MYMEMORY #1 \endcsname%
}
\newcommand{\MYnewlabel}[1]{%
  \renewcommand\MYcurrentlabel{#1}%
  \MYoldlabel{#1}%
}
\newcommand{\MYdummylabel}[1]{}
\newcommand{\torestate}[1]{%
  % overwrite label command
  \let\MYoldlabel\label%
  \let\label\MYnewlabel%
  #1%
  \MYstore{\MYcurrentlabel}{#1}%
  % restore old label command
  \let\label\MYoldlabel%
}
\newcommand{\restatetheorem}[1]{%
  % overwrite label command with dummy
  \let\MYoldlabel\label
  \let\label\MYdummylabel
  \begin{theorem*}[Restatement of \prettyref{#1}]
    \MYload{#1}
  \end{theorem*}
  \let\label\MYoldlabel
}
\newcommand{\restatelemma}[1]{%
  % overwrite label command with dummy
  \let\MYoldlabel\label
  \let\label\MYdummylabel
  \begin{lemma*}[Restatement of \prettyref{#1}]
    \MYload{#1}
  \end{lemma*}
  \let\label\MYoldlabel
}
\newcommand{\restateprop}[1]{%
  % overwrite label command with dummy
  \let\MYoldlabel\label
  \let\label\MYdummylabel
  \begin{proposition*}[Restatement of \prettyref{#1}]
    \MYload{#1}
  \end{proposition*}
  \let\label\MYoldlabel
}
\newcommand{\restatefact}[1]{%
  % overwrite label command with dummy
  \let\MYoldlabel\label
  \let\label\MYdummylabel
  \begin{fact*}[Restatement of \prettyref{#1}]
    \MYload{#1}
  \end{fact*}
  \let\label\MYoldlabel
}
\newcommand{\restate}[1]{%
  % overwrite label command with dummy
  \let\MYoldlabel\label
  \let\label\MYdummylabel
  \MYload{#1}
  \let\label\MYoldlabel
}
\newcommand{\addreferencesection}{
  \phantomsection
\ifnum\stocmode=0
  \addcontentsline{toc}{section}{References}
\else
  \addcontentsline{toc}{section}{References \hspace*{1in} --------- End of extended abstract ---------}
\fi

}
\let\origparagraph\paragraph
\renewcommand{\paragraph}[1]{\vspace*{-10pt}\origparagraph{#1.}}
\begin{document}
\title{Multi-Resolution Hashing for Fast Pairwise Summations
}
\author{Moses Charikar\\Department of Computer Science\\Stanford University \\\texttt{moses@cs.stanford.edu} \and Paris Siminelakis\\Department of Electrical Engineering\\
Stanford University\\\texttt{psimin@stanford.edu}}
\maketitle
\thispagestyle{empty}
\begin{abstract}
A basic computational primitive in the analysis of massive datasets is summing simple functions over a large number of objects. Modern applications pose an additional challenge in that such functions often depend on a parameter vector $y$ (query) that is unknown a priori.  Given a set of points $X\subset  \R^{d}$ and a pairwise function $w:\R^{d}\times \R^{d}\to [0,1]$, we study the problem of designing a data-structure that enables sublinear-time approximation of the summation  $Z_{w}(y)=\frac{1}{|X|}\sum_{x\in X}w(x,y)$ for any query $y\in \R^{d}$. By combining ideas from {Harmonic Analysis} (partitions of unity and approximation theory) with \emph{Hashing-Based-Estimators}~[Charikar, Siminelakis FOCS'17], we provide a general framework for designing such data structures through hashing that reaches far beyond what previous techniques allowed.
 
A key design principle is a collection of $T\geq 1$ hashing schemes with collision probabilities $p_{1},\ldots, p_{T}$ such that $\sup_{t\in [T]}\{p_{t}(x,y)\} = \Theta(\sqrt{w(x,y)})$.
This leads to a data-structure that approximates $Z_{w}(y)$ using a  sub-linear number of samples from each hash family. 
 Using this new framework along with  \emph{Distance Sensitive Hashing} [Aumuller, Christiani, Pagh, Silvestri PODS'18],  we show that such a collection can be constructed and evaluated efficiently for any {log-convex} function $w(x,y)=e^{\phi(\langle x,y\rangle)}$ of the {inner product  on the unit sphere} $x,y\in \mathcal{S}^{d-1}$.  

Our method leads to  data structures with  {sub-linear query time}  that significantly improve upon  random sampling and can be used for {Kernel Density} or {Partition Function Estimation}.  We provide extensions of our result from the sphere to $\mathbb{R}^{d}$  and from scalar functions to vector functions.  
\end{abstract}
\newpage
\thispagestyle{empty}
\tableofcontents

\newpage
\setcounter{page}{1}

\section{Introduction}
The analysis of massive datasets very often involves summing simple functions over a very large number of objects~\cite{muthukrishnan2005data,woodruff2014sketching,mcgregor2014graph}. While in all cases one can compute the sum of interest exactly in time and space polynomial  or even linear  in the size of the input, practical considerations, such as space usage and update/query time, require developing significantly more efficient algorithms that can {provably approximate} the quantity in question arbitrarily well.  For $\alpha\geq 1$, we say that $\hat{\mu}$ is an $\alpha$-approximation to $\mu$  if $\alpha^{-1}\mu \leq \hat{\mu}\leq \alpha \mu$ and an $(1\pm \epsilon)$-approximation if $(1-\epsilon)\mu \leq \hat{\mu}\leq (1+\epsilon)\mu$.

%In the context of streaming examples include the fundamental problems of computing norms~\cite{alon1996space,indyk2005optimal}, subset sums~\cite{duffield2007priority}, and frequencies (heavy-hitters)~\cite{charikar2002finding,larsen2016heavy,braverman2017bptree}. Analogously, in graph analysis~\cite{goldreich2014estimating}  similar problems are that of counting small subgraphs~\cite{eden2017approximately,eden2018approximating}, and in numerical linear algebra  the computation of Schatten $p$-norms of matrices~\cite{li2016approximating,blasiok2017streaming}. 

Modern applications in Machine Learning pose an additional challenge in that such functions depend on a parameter vector $y\in \R^{d}$ that is unknown a priori or changes with time. Such examples include {outlier detection}~\cite{zou2014unsupervised}, {mode estimation}~\cite{cheng1995mean,arias2015estimation}, and {empirical risk minimization} (ERM)~\cite{vapnik2006estimation,shalev2014understanding}. Moreover, very often in order to train  faster and obtain better models~\cite{hardt2016train} it is required to estimate sums of vector functions (e.g gradients in ERM). Motivated by such applications, we seek sub-linear time algorithms for {summing pairwise functions} in high dimensions.

Given a set of points $X=\{x_{1},\ldots, x_{n}\}\subset \R^{d}$, a   non-negative function $w:\R^{d}\times \R^{d}\to [0,1]$,\footnote{For bounded non-negative functions we can always make this assumption, since if $w_{\max}=\sup\{w(x,y)|x,y\in \R^{d}\}\notin \{0,1\}$ we can assume without loss of generality that we are given $w/w_{\max}$. If $w_{\max}=0$ then the sum is identically zero.} and  a parameter $\epsilon>0$, we study the problem of designing a {data structure} that for any query $y\in \R^{d}$ in {sub-linear time}  provides a $(1\pm \epsilon)$-approximation to the sum:
\vspace*{-0.1in}
\begin{equation}
Z_{w}(y) = \frac{1}{n}\sum_{i=1}^{n}w(x_{i},y)
\end{equation} 
%if $Z_{w}(y)\geq \tau$ or certifies that $Z_{w}(y)<\tau$. 
The actual value of the sum $Z_{w}(y)\in [0,1]$ for a given query $y$, will be denoted by $\mu$ and, as we see next, we can use  a lower bound   $\tau\leq \mu$ to bound the complexity of the problem.  

A prominent method to approximate such sums is constructing \emph{unbiased estimators of low variance}.
The simplest and extremely general approach to get such estimators is through {uniform random sampling}. Letting $\chi\in (0,1)$ be an upper bound on the failure probability, a second moment argument shows that  storing and querying  a uniform random sample of size $O\left(\frac{1}{\epsilon^{2}}\frac{1}{\tau}\log(1/\chi)\right)$  is sufficient   and necessary  in general~\cite{li2001improved,charikar2017hashing}, to approximate the sum $\mu=Z_{w}(y)$ for any $\mu\geq \tau$.  The dependence on $\epsilon, \chi$ is standard and easily shown to be necessary, so the question is \emph{for which class of functions can we improve the dependence on $\tau$?}

In this paper, we focus on the class of {log-convex} functions of the inner product between two vectors on the unit sphere.  Such functions can be written as $w(x,y)=e^{\phi(\langle x, y\rangle)}$ for some convex function $\phi:[-1,1]\to\R$ of the inner product between $x,y\in \mathcal{S}^{d-1}$.  Approximate summation of such functions has several fundamental applications in Machine Learning, including:
\begin{itemize}
\item \emph{Partition Function Estimation}~\cite{wainwright2008graphical,koller2009probabilistic}: a basic workhorse in statistics are exponential families where, given a parameter vector $y\in \R^{d}$, for all $x\in X\subseteq\R^{d}$ a probability distribution is defined by setting $p_{y}(x)\propto e^{\langle x, y\rangle}$. The normalizing constant $Z(y)=\sum_{x\in X}e^{\langle x,y\rangle}$ is called the \emph{partition function}. Approximating this quantity is important for hypothesis testing and inference.

\item \emph{Kernel Density Estimation:} a non-parametric way~\cite{devroye2012combinatorial} to estimate the ``density of a set $X$ at $y$" is  through  $Z(y) =\frac{1}{n\sigma^{d}}\sum_{i=1}^{n}\exp(-\frac{\|x_{i}-y\|^{2}}{\sigma^{2}})$. Such an estimate is used in algorithms for outlier detection~\cite{schubert2014generalized,gan2017scalable}, topological data analysis~\cite{joshi2011comparing} and clustering~\cite{arias2015estimation}.

\item \emph{Logistic activation and Stochastic Gradients:} let $\phi(\rho)=-\log(1+e^{- \rho})$ be the logistic function. For $X\subset \mathcal{S}^{d-1}$ we can express the sum of the output of $n$ neurons with weight vectors $x_{1},\ldots,x_{n}$ and input $y$ as
$
Z(y) = \frac{1}{n}\sum_{i=1}^{n}e^{-\log(1+e^{-  \langle x_{i},y\rangle})}= \frac{1}{n}\sum_{i=1}^{n}\frac{1}{1+e^{- \langle x_{i},y\rangle}}
$. This quantity can also be viewed as the sum of the gradient norms $\sum_{x\in X}\|\nabla_{y}\log(1+e^{\langle x,y\rangle})\|$, that is related to computing a stochastic approximation to the gradient at $y$ in Logistic Regression.
%\item \emph{Rank-one Ising Model:} let $\phi(\rho)=dr^{2} \cdot \rho^{2}$, for $X\subseteq \{\pm1\}^{d}$ and $y\in r\mathcal{S}^{d-1}$ 
%\[
% \sum_{x\in X}e^{\langle x,y\rangle^{2}}=\sum_{x\in X} e^{\sum_{ij}y_{i}y_{j}x_{i}x_{j}}=\frac{1}{n}\sum_{i=1}^{n}e^{dr^{2}\sum_{i,j}\frac{x_{i}}{\sqrt{d}}\frac{y_{i}}{r}\frac{x_{j}}{\sqrt{d}}\frac{y_{j}}{r}}=e^{\phi(\langle \frac{x}{\sqrt{d}},\frac{y}{\|y\|}\rangle)}
%\]
\end{itemize}
%\vspace*{-0.2in}

\begin{table}[t]
\caption{Examples of log-convex functions of  inner product  $\rho=\langle \frac{x}{\|x\|},\frac{y}{\|y\|}\rangle$ for $x,y\in r\mathcal{S}^{d-1}$.}
\label{tbl:examples}
\vskip 0.15in
\begin{center}
\begin{small}
\begin{sc}
\begin{tabular}{lccr}
\toprule
$w(x,y)$ & $\phi(\rho)$ & $L(\phi)$ \\
\midrule
$e^{\langle x, y\rangle}$& $r^{2}\rho$ & $r^{2}$\\
$e^{-\|x-y\|_{2}^{2}}$    & $2r^{2}(\rho-1)$ & $2r^{2}$ \\
$(\|x-y\|_{2}^{2}+1)^{-1}$ & $-\log(1+(1-\rho)2r^{2})$ & $2r^{2}$\\
$(1+\exp(-\langle x,y\rangle))^{-1}$    & $- \log(1+e^{-r^{2}\rho})$ & $r^{2}$\\
$(\langle x, y\rangle+cr^{2})^{-k}$& $-k\log(r^{2}(\rho+c))$    &$\frac{k}{c-1}$\\
\bottomrule
\end{tabular}
\end{sc}
\end{small}
\end{center}
\vskip -0.3in
\end{table}
\vspace*{-0.05in}
More examples of log-convex functions are presented in Table \ref{tbl:examples}. Obtaining fast algorithms for approximating   summations  gives speedups to all of the above applications. For such functions we denote $Z_{w}(y)$ as $Z_{\phi}(y)$. Let $L(\phi)$ be the lipschitz constant of the function $\phi$, we have that $Z_{\phi}(y)\geq  e^{-2L(\phi)}$ and hence random sampling requires $O(\frac{1}{\epsilon^{2}}e^{2L(\phi)})$ samples. For $L\geq \frac{1}{2}\log n$ random sampling offers \emph{no improvement} over the trivial algorithm.   In this work we design the first sub-linear algorithms for the problem of summing general log-convex functions on the unit sphere.

\vspace*{-0.15in}
\subsection{Our results}
At a high level,  we significantly generalize the recent approach of Hashing-Based-Estimators~\cite{charikar2017hashing} to handle more general functions. This is done by combining classical ideas from \emph{Harmonic analysis} (partitions of unity and approximation theory) with  recent results for \emph{similarity search}. We give a general technique for approximating pairwise summations that gives the following result for log-convex functions:

\begin{theorem}[Main Result]\label{thm:simple}
Given a log-convex function $\phi:[-1,1]\to \R$ with lipschitz constant $L(\phi) < (1-\delta) \log n$ for $\delta>0$, there exists a data structure that for $\epsilon>0$ and any set of $n$ vectors $X\subset \mathcal{S}^{d-1}$   can provide a $(1\pm\epsilon)$-approximation to $Z_{\phi}(y)$ for any query $y\in \mathcal{S}^{d-1}$  with constant probability  and query time $ n^{1-\delta+o(1)}/\epsilon^{2}$ using  space/pre-processing time $ n^{2-\delta +o(1)}/\epsilon^{2} $.
\end{theorem}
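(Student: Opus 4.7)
The plan is to combine the Hashing-Based-Estimators (HBE) framework of Charikar-Siminelakis with a multi-resolution decomposition, as previewed in the introduction. The target is a collection $\{\mathcal{H}_t\}_{t=1}^T$ of hash families on $\mathcal{S}^{d-1}$ whose collision probabilities $p_t$ satisfy $\sup_t p_t(x,y) = \Theta(\sqrt{w(x,y)})$. Given such a collection, HBE yields an unbiased estimator of $Z_w(y)$ using $\tilde O(1/(\epsilon^2\sqrt{\mu}))$ hash queries per family. Because $L(\phi) < (1-\delta)\log n$ forces $\phi(\rho) \ge -2L(\phi)$ for all $\rho \in [-1,1]$ (using $\max \phi = 0$ coming from $w \le 1$), we get $\mu \ge e^{-2L(\phi)} \ge n^{-2(1-\delta)}$, so $1/\sqrt{\mu} \le n^{1-\delta}$, and if $T = n^{o(1)}$ the total query time is $n^{1-\delta+o(1)}/\epsilon^2$.

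First I would construct a smooth partition of unity on $[-1,1]$: pick $T = \Theta(L(\phi)) = O(\log n)$ intervals $I_1,\ldots,I_T$ and non-negative bumps $\psi_t$ summing to $1$, each supported on a slight enlargement of $I_t$, chosen so that $\phi$ varies by at most a constant across $I_t$. This is possible with $T = n^{o(1)}$ because $\phi$ is convex and its total variation on $[-1,1]$ is bounded by $2L(\phi)$. Define $w_t(x,y) = \psi_t(\langle x,y\rangle)\,e^{\phi(\langle x,y\rangle)}$ so that $w = \sum_t w_t$; on the support of $\psi_t$, the function $\sqrt{w_t}$ is essentially a bump in the inner product of height $\asymp e^{\phi(I_t)/2}$.

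Second, for each $t$ I would invoke the Distance Sensitive Hashing construction of Aum\"uller-Christiani-Pagh-Silvestri on $\mathcal{S}^{d-1}$ to build a spherical hash family $\mathcal{H}_t$ with collision probability $p_t(x,y) = \Theta(\sqrt{w_t(x,y)})$ and hash-evaluation cost $n^{o(1)}$. For any $(x,y)$ the inner product $\langle x,y\rangle$ lies in some $I_t$, so $\sup_t p_t(x,y) = \Theta(\sqrt{w(x,y)})$ as required. Then I would run HBE with each family $\mathcal{H}_t$ to obtain an unbiased estimator $\hat\mu_t$ of $\mu_t = \frac{1}{n}\sum_i w_t(x_i,y)$, and return $\hat Z = \sum_t \hat\mu_t$. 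A thresholding step lets one ignore pieces with $\mu_t < \epsilon\mu/T^2$, whose combined contribution to the error is at most $\epsilon\mu/T$; the retained pieces each demand only $\tilde O(1/(\epsilon^2\sqrt{\mu}))$ hash queries to be estimated within the per-piece error budget. Summed over $T = n^{o(1)}$ pieces and multiplied by the per-hash cost $n^{o(1)}$, this gives total query time $n^{1-\delta+o(1)}/\epsilon^2$; the space bound $n^{2-\delta+o(1)}/\epsilon^2$ comes from storing $n$ points across $\tilde O(T/(\epsilon^2\sqrt{\mu})) = n^{1-\delta+o(1)}/\epsilon^2$ hash tables in total.

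The main obstacle will be Step 2: verifying that for every bump $\psi_t$ the profile $\sqrt{w_t}$ lies within the class of collision probabilities efficiently realizable by DSH on $\mathcal{S}^{d-1}$, uniformly across $t$, and that the associated parameters keep per-hash evaluation at $n^{o(1)}$. Log-convexity of $\phi$ (hence of $\phi/2$), together with the bounded variation of $\phi$ across each $I_t$, is what shapes $\sqrt{w_t}$ into a truncated log-convex inner-product profile that the DSH construction can realize; this is where the design of the partition and of the hash families must be tightly coupled. A secondary difficulty is the careful accounting of per-piece error budgets and sample complexities so that the combined estimator achieves $(1\pm\epsilon)$ relative error with constant probability while preserving the claimed $\epsilon^{-2}$ and $n^{1-\delta+o(1)}$ dependences.
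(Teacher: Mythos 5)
Your high-level scaffolding matches the paper: a collection of DSH families on $\mathcal{S}^{d-1}$ whose collision-probability envelope $\sup_t p_t(x,y)$ tracks $\sqrt{w(x,y)}$, a $n^{o(1)}$ count of families, and the reduction $\mu \ge e^{-2L(\phi)} \ge n^{-2(1-\delta)}$ to pin down the query time. But the way you assemble and analyze the estimator has a genuine gap, and it is exactly where the paper's key technical idea lives.

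You fix a compactly supported partition of unity $\{\psi_t\}$ in $\rho$ first, set $w_t = \psi_t\,e^{\phi}$, and then try to realize, for each $t$, a \emph{single} hash family with $p_t \approx \sqrt{w_t}$, finally estimating each $\mu_t = \frac{1}{n}\sum_i w_t(x_i,y)$ with its own HBE and summing. This does not go through. First, DSH collision probabilities have the form $\exp\bigl(-\bigl(\tfrac{1-\rho}{1+\rho}+\gamma^2\tfrac{1+\rho}{1-\rho}\bigr)\tfrac{s^2}{2}\bigr)$: unimodal, smoothly decaying, never vanishing. To make the peak as narrow as a bump of width $\Theta(1/L)$ you need $s^2 = \Omega(L)$, and the per-hash cost is $\Theta(e^{s^2/2})$, which is $n^{\Omega(1)}$ when $L = \Theta(\log n)$ — not $n^{o(1)}$. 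Second, even ignoring cost, the variance bound for the HBE estimating $\mu_t$ alone involves $\sum_{j}\min\{p_t(x_j,y),p_t(x_i,y)\}/p_t(x_i,y)$ summed over \emph{all} points, including those outside the support of $\psi_t$ where $w_t = 0$ but $p_t \ne 0$; this tail of $p_t$ can inflate $\lvert H_t(y)\rvert$ and blow up the $t$-th variance far past $\mu_t^{3/2}$. You flag "whether DSH can realize $\sqrt{w_t}$" as the obstacle, but the deeper problem is that \emph{no} efficiently evaluable $p_t$ can both match a narrow bump and decay sharply outside it; trying to estimate each bump separately forces you to fight this.

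The paper's Multi-Resolution HBE avoids both issues by coupling the partition of unity to the collision probabilities rather than choosing them independently. The partition is the $p^2$-weighting $\tilde{w}_t = p_t^2/\sum_{t'}p_{t'}^2$, which is soft (not compactly supported), and Theorem \ref{thm:psquared} shows that with this weighting the \emph{combined} estimator's variance is governed solely by $p_*(x,y) = \sup_t p_t(x,y)$. That removes any requirement that an individual $p_t$ track a bump; each $p_t$ only needs to be a global lower bound for $w^{1/2}$ (here convexity of $\phi$ is essential, via Lemma \ref{lem:positive} and Proposition \ref{prop:infimum} to get tangent-line domination) and a good first-order approximation near its interpolation point $\rho_t$. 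Finally, the evaluation-cost versus fidelity trade-off you gloss over is handled by powering $\mathcal{H}_t = \tilde{\mathcal{H}}_t^{\otimes k^*}$ with $k^* \asymp (L(\phi)R(\phi))^{1/3}$ applied to a rescaled $\tilde{\phi}$, giving $n^{o(1)}$ cost even when $L = \Theta(\log n)$. The coupling of weights to collision probabilities (the $p^2$-weighting) and the reduction of the whole variance analysis to $p_*$ is the missing idea in your proposal.
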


We show that under popular conjectures a {restriction on $L(\phi)$ is necessary} in order to obtain sublinear algorithms for the problem even on average over $n$ queries. In fact, it turns out that the correct asymptotics  is precisely  $L=O(\log n)$ %as we show below 
even if one allows for polynomially large approximation factors. The proof and definition of the conjectures can be found in Section \ref{sec:lower}.
\begin{theorem}\label{thm:lower}
 Unless SETH and OVC fails, for every $\delta>0$ and $\alpha \geq 1$ there exists a constant $C(\delta, \alpha)>0$ such that for two sets $X,Y\subset \mathcal{S}^{d-1}$ of size $n$   with  $d= O_{\delta}(\log n)$ and $L > C(\delta,\alpha)  \cdot \log n$, there exists no $n^{2-O(\delta)}$ algorithm that  produces an $\alpha$-approximation to  $\frac{1}{n}\sum\limits_{y\in Y}\left(\frac{1}{n}\sum\limits_{x\in X} e^{L \cdot \langle x,y\rangle}\right)$.
\end{theorem}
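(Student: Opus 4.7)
The plan is a fine-grained reduction from the bichromatic maximum inner product problem, which under SETH/OVC is hard to approximate within a constant additive gap in the regime $d = O_{\delta}(\log n)$. The key structural observation is that, once $L = \Omega(\log n)$, the sum $\frac{1}{n^{2}}\sum_{x,y}e^{L\langle x,y\rangle}$ is a tight \emph{soft-max} of the $n^{2}$ pairwise inner products, so an $\alpha$-approximation of the sum yields an additive-error estimate of $\max_{x,y}\langle x,y\rangle$.

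Concretely, write $M(X,Y)=\max_{x\in X,y\in Y}\langle x,y\rangle$ and $Z=\frac{1}{n^{2}}\sum_{x,y}e^{L\langle x,y\rangle}$. The bounds $\frac{1}{n^{2}}e^{L M}\le Z\le e^{L M}$ translate to
\begin{equation*}
M(X,Y) - \frac{2\log n}{L} \;\le\; \frac{1}{L}\log Z \;\le\; M(X,Y).
\end{equation*}
If $\hat Z$ is any $\alpha$-approximation to $Z$, then $\frac{1}{L}\log\hat Z$ approximates $M(X,Y)$ up to additive error $\eta = (2\log n+\log\alpha)/L$. Choosing $C(\delta,\alpha) := (4+2\log\alpha)/\eta_{0}(\delta)$ and $L>C(\delta,\alpha)\log n$ forces $\eta\le \eta_{0}(\delta)/2$ for all sufficiently large $n$, where $\eta_{0}(\delta)>0$ is the gap supplied by the hardness result of the next step.

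Step 1 (invoking the hard problem). I would use the SETH/OVC-based hardness of the following gap bichromatic maximum inner product problem on the sphere: for every $\delta>0$ there exist $\eta_{0}(\delta)>0$, $c(\delta)>0$, and $\rho^{*}(\delta)$ such that, for $X,Y\subset\mathcal{S}^{d-1}$ of size $n$ with $d=c(\delta)\log n$, no $n^{2-O(\delta)}$-time algorithm can distinguish $M(X,Y)\ge\rho^{*}$ from $M(X,Y)\le\rho^{*}-\eta_{0}$. This follows from the standard OV hardness (under SETH) in $c'(\delta)\log n$ dimensions by a short amplification that blows the exact orthogonality gap up to a constant additive gap, followed by $v\mapsto v/\|v\|$ normalization onto $\mathcal{S}^{d-1}$. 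Step 2 (combining). Given any $\alpha$-approximation algorithm for $Z$ running in time $n^{2-O(\delta)}$, the soft-max identity turns it into an algorithm distinguishing the two cases of the gap problem within additive error $\eta_{0}/2<\eta_{0}$ and in the same running time, contradicting SETH/OVC.

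The main obstacle is Step~1: the textbook SETH/OVC hardness is for \emph{exact} orthogonality, whereas the reduction needs a constant additive gap version on the sphere in dimension $d=O_{\delta}(\log n)$. To close this gap I would use either a tensoring-style amplification of OV that separates orthogonal from non-orthogonal pairs by $\Omega(1)$ in the normalized inner product, or appeal to the Alman--Williams / Rubinstein framework for SETH-hardness of approximate bichromatic closest pair, both of which preserve the logarithmic-dimension regime. Everything else in the argument reduces to the soft-max calculation above together with the choice of $C(\delta,\alpha)$.
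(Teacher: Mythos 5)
Your proposal takes essentially the same route as the paper: the paper also reduces from Rubinstein's SETH/OVC hardness of approximate bichromatic closest pair (Theorem 4.1 of~\cite{rubinstein2018hardness}), then observes that $Z$ separates the two cases by a factor exceeding $\alpha$ once $L$ is large enough; your ``soft-max'' inequalities $\frac{1}{n^2}e^{LM}\le Z\le e^{LM}$ are exactly the bounds the paper applies to the hard instance. The only substantive thing you leave as a sketch is what the paper actually carries out: instantiating Rubinstein's theorem with a $\delta$-dependent $\epsilon$ (the paper picks $\epsilon = e^{-e^{\delta/c(\delta)}}$ so that the $O(\delta + c(\delta)\frac{\log^2\log(1/\epsilon)}{\log(1/\epsilon)})$ exponent loss collapses to $O(\delta)$), which then makes $T'=T'(\delta)$ a constant and gives both the gap $\eta_0(\delta)=\Theta(1/T')$ on normalized inner products and the final form $C(\delta,\alpha)=O(e^{e^{\delta/c(\delta)}})(1+\log\alpha/2\log n)$.
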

The precise dependence is  $C(\delta,\alpha ) = O(e^{e^{\frac{\delta}{c(\delta)}}})(1+  \log \alpha /2 \log n)$ where $c(\delta)$ is a constant. Even if we allow for approximation factor $\alpha = n^{s}$ with $s>0$, we see that $C(\delta, n^{s})$ is still a constant.  The intuition behind this result is that when $L=\Omega(\log n)$ the function $e^{L\langle x, y\rangle}$ varies fast enough so that  the presence or absence of a {single pair of ``relatively close" points can dominate the sum}.  Below, we give concrete examples for which our data structure  has $n^{0.5+o(1)}$ query time, i.e. $L\leq \log(n) / 2$.

\begin{corollary}\label{cor:examples}
 Let $\Phi_{r,k,c}$ be the set of functions in Table \ref{tbl:examples} with parameters $r\leq \frac{1}{2}\sqrt{\log n}$ and $0\leq k\leq \frac{c-1}{2}\log n$. Then for any $\phi\in \Phi_{r,k,c}$ and $X\subset r \mathcal{S}^{d-1}$, there exists a data structure using space $n^{1.5+o(1)}/\epsilon^{2}$ that for any $y\in r\mathcal{S}^{d-1}$ can produce a $(1\pm\epsilon)$-approximation to  $Z_{\phi}(y)$  in time $n^{0.5+o(1)}/\epsilon^{2}$.
\end{corollary}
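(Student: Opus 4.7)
The plan is to invoke Theorem~\ref{thm:simple} row-by-row on Table~\ref{tbl:examples}, after a trivial rescaling to the unit sphere. Since the normalized inner product $\rho = \langle x/\|x\|,y/\|y\|\rangle$ used to parametrize the kernels is scale-invariant, the map $x \mapsto x/r$ transports $X \subset r\mathcal{S}^{d-1}$ to $\mathcal{S}^{d-1}$ while preserving $w(x,y) = e^{\phi(\rho)}$ exactly, so Theorem~\ref{thm:simple} becomes applicable with the very same $\phi$.

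It then remains to verify the hypothesis $L(\phi) < (1-\delta)\log n$ for each row under the given parameter restrictions. Reading $L(\phi)$ off the third column of the table and substituting $r\leq \tfrac{1}{2}\sqrt{\log n}$ and $k\leq \tfrac{c-1}{2}\log n$ gives: $L(\phi) = r^2 \leq \tfrac{1}{4}\log n$ for rows one and four; $L(\phi) = 2r^2 \leq \tfrac{1}{2}\log n$ for rows two and three (in row three, differentiating $\phi(\rho) = -\log(1+(1-\rho)2r^2)$ yields $\phi'(\rho) = 2r^2/(1+(1-\rho)2r^2)$, maximized at $\rho=1$); and $L(\phi) = k/(c-1) \leq \tfrac{1}{2}\log n$ for row five, where $|\phi'(\rho)| = k/(\rho+c)$ peaks at $\rho=-1$. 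Log-convexity of each row is checked by a one-line second-derivative computation. Hence $L(\phi) \leq \tfrac{1}{2}\log n$ uniformly over $\Phi_{r,k,c}$.

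I would then invoke Theorem~\ref{thm:simple} with $\delta = \tfrac{1}{2} - \eta_n$ for any $\eta_n \to 0$ (e.g.\ $\eta_n = 1/\log\log n$), ensuring the strict inequality $L(\phi) < (1-\delta)\log n$ while keeping $\delta$ arbitrarily close to $\tfrac{1}{2}$. The theorem then delivers query time $n^{1-\delta+o(1)}/\epsilon^2 = n^{1/2+o(1)}/\epsilon^2$ and space $n^{2-\delta+o(1)}/\epsilon^2 = n^{3/2+o(1)}/\epsilon^2$, which matches the claim. I anticipate no real obstacle here: the corollary is a direct ``plug-in'' of the main theorem, and the only mild subtlety is the strict-versus-weak inequality at the boundary $L(\phi) = \tfrac{1}{2}\log n$, which the vanishing slack $\eta_n$ handles by absorption into the sub-polynomial $o(1)$ factor.
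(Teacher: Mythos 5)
Your proposal is correct and mirrors the paper's own proof: bound $L(\phi)\leq \tfrac{1}{2}\log n$ uniformly over the table, then invoke the main theorem. The only cosmetic difference is that you route through Theorem~\ref{thm:simple} with a vanishing slack $\delta = 1/2-\eta_n$ to turn the weak inequality $L(\phi)\leq\tfrac12\log n$ into a strict one, whereas the paper sidesteps this by observing directly that $\mu\geq e^{-2L(\phi)}\geq 1/n$ and plugging $1/\sqrt{\mu}\leq\sqrt{n}$ together with $L(\phi)^{5/6}M_\phi^4=n^{o(1)}$ into Theorem~\ref{thm:main}.
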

This corollary highlights the main point of our paper: we provide a  \emph{general technique}  that enables the design of data structures that {solve a variety of pairwise integration problems}. For the special case of the Gaussian kernel for points on a sphere, our data structure has the same dependence in $\epsilon, r$ (up to poly-logarithmic factors in $n$) as the currently best known algorithm~\cite{charikar2017hashing}.

\paragraph{Extensions} Our result is extended in a few different ways to be more broadly applicable
\begin{enumerate}
%\item \emph{Adaptivity:}  in reality, the query time of our estimator can be made to be adaptive to the actual density $\mu=Z_{\phi}(y)\geq \tau$ and have query time $O\left(M_{\phi}^{4}\frac{1}{\epsilon^{2}}\frac{1}{\sqrt{\mu}}\right)$ by a straightforward application of the Adaptive Mean Relaxation technique from \cite{charikar2017hashing}.
\item \emph{General subsets of  $\R^{d}$:} our method can be extended to bounded subsets of $\R^{d}$. Assuming that for all $x\in X$ we have $0<r_{0}\leq \|x\|\leq r_{X}$ and that $r_{0}\leq \|y\|\leq r_{Y}$,   in Section \ref{sec:reduction} we show that, by partitioning points in exponentially increasing spherical annuli (as in \cite{andoni2015optimal}) and by applying our result appropriately for each spherical annulus, we get a data structure with a space/query time overhead of a $O\left((\log(\max\{r_{X},r_{Y}\}/r_{0})L(\phi)r_{X}r_{Y})^{2}\right)$ factor   and where the Lipschitz constant increases  at most by an $r_{X}r_{Y}$ factor.
\item \emph{More general functions:} the previous  technique shows that our method applies also to the following wider family of functions $w(x,y)=p_{0}(\|x\|)e^{\phi(\langle x,y\rangle)+\mathcal{A}(y)}$, where $\log(p_{0}(\|x\|))$ is Lipschitz in each annulus and $\mathcal{A}(y)$ is arbitrary. Examples of such functions are the Gaussian kernel $e^{-\|x\|^{2}+2\langle x,y\rangle-\|y\|^{2}}$ and  the norm $\|\nabla_{y}\log(1+e^{\langle x,y\rangle})\|$ of the derivative of the logistic log-likelihood. The same principle can be applied to  solve weighted versions of the problem.
\item \emph{Vector functions:} Hashing-based-Estimators belong to a more general class of  randomly weighted importance sampling schemes, for which we show (Section \ref{sec:gradients}) that one can construct unbiased estimators for the sum of vector functions $\sum_{x}\vec{f}(x)$ with variance at most that of estimating  the sum of the norms $\sum_{x}\|\vec{f}(x)\|$. 
\end{enumerate}

\vspace*{-0.15in}

 \subsection{Motivation:  Partitions of Unity}\label{sec:partitions}
\begin{figure}[t]
\centering
\includegraphics[scale=0.5]{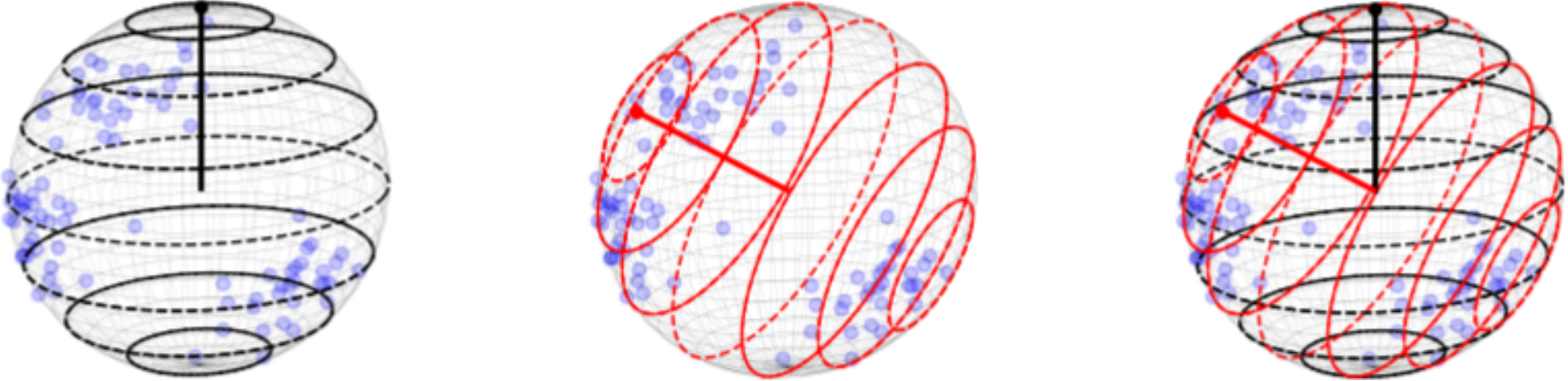}
\caption{Spherical partitions for two different query points (black and red) for a fixed dataset.}\label{fig:partitions}
\end{figure}
A general way to estimate sums over $X$ is to define a query-dependent partition  $\mathcal{P}(y)=\{P_{1}(y),\ldots, P_{T}(y)\}$ of $X$ and express the sum as $\sum_{t\in [T]}\left(\sum_{x\in P_{t}}w(x,y)\right)$. If for the specific partition there exist  $M\geq 1$ such that:
\vspace*{-0.1in}
\begin{equation}\label{eq:}
\frac{1}{M}\cdot w(x_{2},y) \leq w(x_{1},y) \leq M \cdot w(x_{2},y),\qquad \forall t\in [T], \forall x_{1},x_{2}\in P_{t}(y)  
\end{equation}
taking $O(M/\epsilon^{2})$ random samples would give us an accurate estimate of each term $\sum_{x\in P_{t}}w(x,y)$ and using at most $O(MT/\epsilon^{2})$ samples we would obtain a good estimate of the sum. The problem is that  generating and sampling  from such a partition efficiently for any query $y$ can be  computationally challenging. For example if $w(x,y)=e^{-\|x-y\|^{2}}$ and points $X\subset  r \mathcal{S}^{d-1}$ lie on a sphere, then such partitions are equivalent to being able to sample from a certain \emph{spherical range} around the query $y\in r\mathcal{S}^{d-1}$ (Figure \ref{fig:partitions}).  Computing such partitions in high dimensions can be expensive~\cite{ahle2017spherical,aumuller2018distance}.

\paragraph{Partitions of unity} 
%Given such a partition $\mathcal{P}$, let $\tilde{w}_{t}(x,y):=\mathbb{I}_{P_{t}(y)}(x)$, then it is easy to see that $\sum_{t\in [T]}\tilde{w}_{t}(x,y)=1,  \forall x\in X$. 
Instead of a partition $\mathcal{P}$, consider a collection of functions $\tilde{w}_{t}(x,y)$ such that $\sum_{t\in [T]}\tilde{w}_{t}(x,y)=1,  \forall x\in X$.
Each such function concentrates its mass on a small portion of the space -- this can be thought of as a soft partition.
Such a collection of functions is called a \emph{partition of unity}  (Figure \ref{fig:riemann}) and is widely used in Harmonic analysis. 
%Partitions of unity are useful when we have a ``linear property" $\mathcal{L}$. 
%In our case the ``properties" we care about are bounded first two moments. The methodology proceeds as
%\begin{enumerate}
%\item Express $w=(\sum_{t\in [T]}\tilde{w}_{t})w=\sum_{t\in [T]}w_{t}$ as a sum of ``localized functions" $w_{t} = \tilde{w}_{t} \cdot w  $.
%\item Use the``smoothed" or ``localized" functions $w_{t}$ to show that   property $\mathcal{L}$ holds.
%\item By ``linearity" argue that  $\mathcal{L}$ is true for the original function $w$.
%\end{enumerate}
We will use partitions of unity to define estimators for which we can control their first and second moments through {linearity of expectation} and provide a generic recipe to use them within the framework of {Hashing-based-Estimators} to bound the overall variance. 
\begin{figure}[t]
\centering
\includegraphics[scale=0.7]{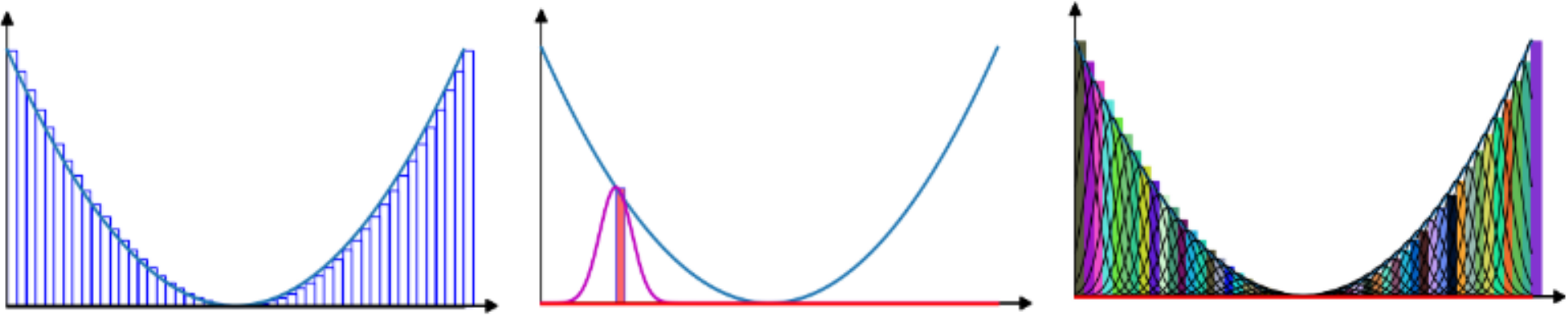}
\caption{Partitions of unity as a tool of rewriting integrals in terms of of localized functions}\label{fig:riemann}
\end{figure}

\vspace*{-0.15in}
\subsection{Our techniques}
The main conceptual contribution of this work is a new framework for approximating pairwise summations.  Our framework is based on a class  of estimators that we introduce, called \emph{Multi-resolution} Hashing-Based-Estimators, that significantly generalizes  previous work~\cite{charikar2017hashing}. 
The main idea is that, instead of a single hashing scheme, we have a collection of hash families $\mathcal{H}_t$ for $t \in [T]$, where each $\mathcal{H}_t$ is responsible for a different portion of the angular range around the query; $\mathcal{H}_t$ has relatively high collision probability within the range assigned to it and relatively low outside.
We divide up the task of estimating the summation of interest amongst these various hash families by assigning data points $x \in X$ to $t \in [T]$ via a soft partition (i.e. a partition of unity).
Our end goal is to produce an unbiased estimator and bound its variance by selecting the hashing scheme and partition of unity appropriately. 
While this overall scheme sounds complicated, we show that a particular choice of weights for the soft partition (as a function of collision probabilities) makes the analysis modular and tractable: \emph{for the purpose of analysis, the collection of hash families behaves like a single hash family whose collision probability is the supremum of the collision probabilities for $\mathcal{H}_t, t \in [T]$.}
We now flesh out this informal description.
 
\paragraph{Multi-resolution Hashing-Based-Estimators (MR-HBE)} Given a collection $\mathcal{H}_{1},\ldots, \mathcal{H}_{T}$ of hashing schemes with collision probabilities $p_{1},\ldots,p_{T}:\R^{d}\times \R^{d}\to [0,1]$ and  functions $\tilde{w}_{t}:\R^{d}\times \R^{d}\to \R_{+}$ for $t\in [T]$, such that $\sum_{t\in T}\tilde{w}_{t}(x,y)=1$ (\emph{partition of unity}) and $w_{t}(x,y):=\tilde{w}_{t}(x,y)w(x,y)>0 \Rightarrow   p_{t}(x,y)>0$,  we form an unbiased estimator by:
\begin{itemize}
\item \emph{Preprocessing:} for all $t\in [T]$,  sample a hash function $h_{t}\sim \mathcal{H}_{t}$ and evaluate it  on $X$ creating hash table $H_{t}$ . Let $H_{t}(z)\subseteq X$ denote the hash bucket where $z\in \R^{d}$ maps to under $h_{t}$.
\item \emph{Querying:} given a query $y\in \R^{d}$, for all $t\in [T]$ let $X_{t}\sim H_{t}(y)$ be a random element from $H_{t}(y)$ or $\bot$ if $H_{t}(y)=\emptyset$. Return $Z_{T}(y)=\frac{1}{|X|}\sum_{t\in T} \frac{w_{t}(X_{t},y)}{p_{t}(X_{t},y)}|H_{t}(y)|$.
\end{itemize}
\vspace*{-0.1in}
where it is understood that if   $X_{t}=\bot$ the corresponding term is $0$. The conditions on $\{\tilde{w}_{t}\}$ and $\{p_{t}\}$ ensure that the estimator is unbiased.  The motivation behind these estimators is to use the extra freedom in selecting $\{\tilde{w}_{t}\}$ and $\{p_{t}\}$ so that we can obtain better bounds on the overall variance. This is quite challenging as the variance of each of the $T$ terms in the sum depends on the whole data set through $|H_{t}(y)|$. This raises the question:
\vspace*{-0.05in}
\begin{center}
\emph{Do there exist design principles for $\{\tilde{w}_{t}\}$ and $\{p_{t}\}$ that lead to low variance?}
\end{center}
Through our analysis we introduce two key design principles:
\paragraph{Variance bounds and $p^{2}$-Weighting} For a fixed collection of weight functions $\{\tilde{w}_{t}\}$ and collision probabilities $\{p_{t}\}$, by utilizing a lemma from ~\cite{charikar2017hashing}, we get an explicit bound on the variance of the estimator for a query $y\in \R^{d}$ only as a function of $\{w_{t}(\cdot, y)\},\{p_{t}(\cdot, y)\}$ and $\mu:=Z_{\phi}(y)$. We then minimize a separable relaxation of our upper bound to obtain the $p^{2}$\emph{-weighting scheme} where
\vspace*{-0.05in}
\begin{equation}\label{eq:pou}
\tilde{w}_{t}(x,y)=\frac{p^{2}_{t}(x,y)}{\sum_{t'=1}^{T}p_{t'}^{2}(x,y)} \qquad \text{for all} \qquad x,y\in \R^{d}
\end{equation} 

\vspace*{-0.1in}
\paragraph{Approximation by a supremum of functions}  Using the $p^{2}$-weighting scheme and after some algebraic manipulations, we are able to get an upper bound on the variance that depends only on $w(x,y)$, $\mu=Z_{w}(y)$ and on the \emph{pointwise supremum} of the collision probabilities $p_{*}(x,y):=\sup_{t\in [T]}\{p_{t}(x,y)\}$. An interesting fact that comes out from the analysis is that the resulting bound is closely related to the variance of  a single HBE, i.e. $T=1$, with collision probability equal to $p_{*}(x,y)$. Exploiting  this connection and by providing a simplified proof for a theorem of ~\cite{charikar2017hashing}  that bounds the variance of \emph{scale-free HBE}, we identify the second design principle, namely designing $\{p_{t}\}$ such that:
\begin{equation}\label{eq:supremum}
p_{*}(x,y)=\sup_{t\in T}\{p_{t}(x,y)\} = \Theta (\sqrt{w(x,y)})
\end{equation}
Observe that so far our discussion has been about the variance, or on how many independent realizations of Multi-resolution HBE we need to efficiently estimate $Z_{\phi}(y)$, and we have not mentioned the time needed to compute each one. The natural question is then:  for which family of functions $w(x,y)$, does there exist a  \emph{family of hashing schemes}  $\{(\mathcal{H}_{t}, p_{t})\}$ satisfying \eqref{eq:supremum}  that can be \emph{efficiently constructed and evaluated?}
\begin{center} 
\end{center}
 
\begin{figure}[t]
\caption{Approximation of \emph{logistic} $\phi_{1}(\rho)=-\log(1+e^{-\rho})$ and \emph{squared inner product} $\phi_{2}(\rho)=\rho^{2}$ functions by elements of \eqref{eq:idealized}. The functions are normalized to be less than $0$ and at least $-1$.}\label{fig:examples}
\includegraphics[scale=0.5]{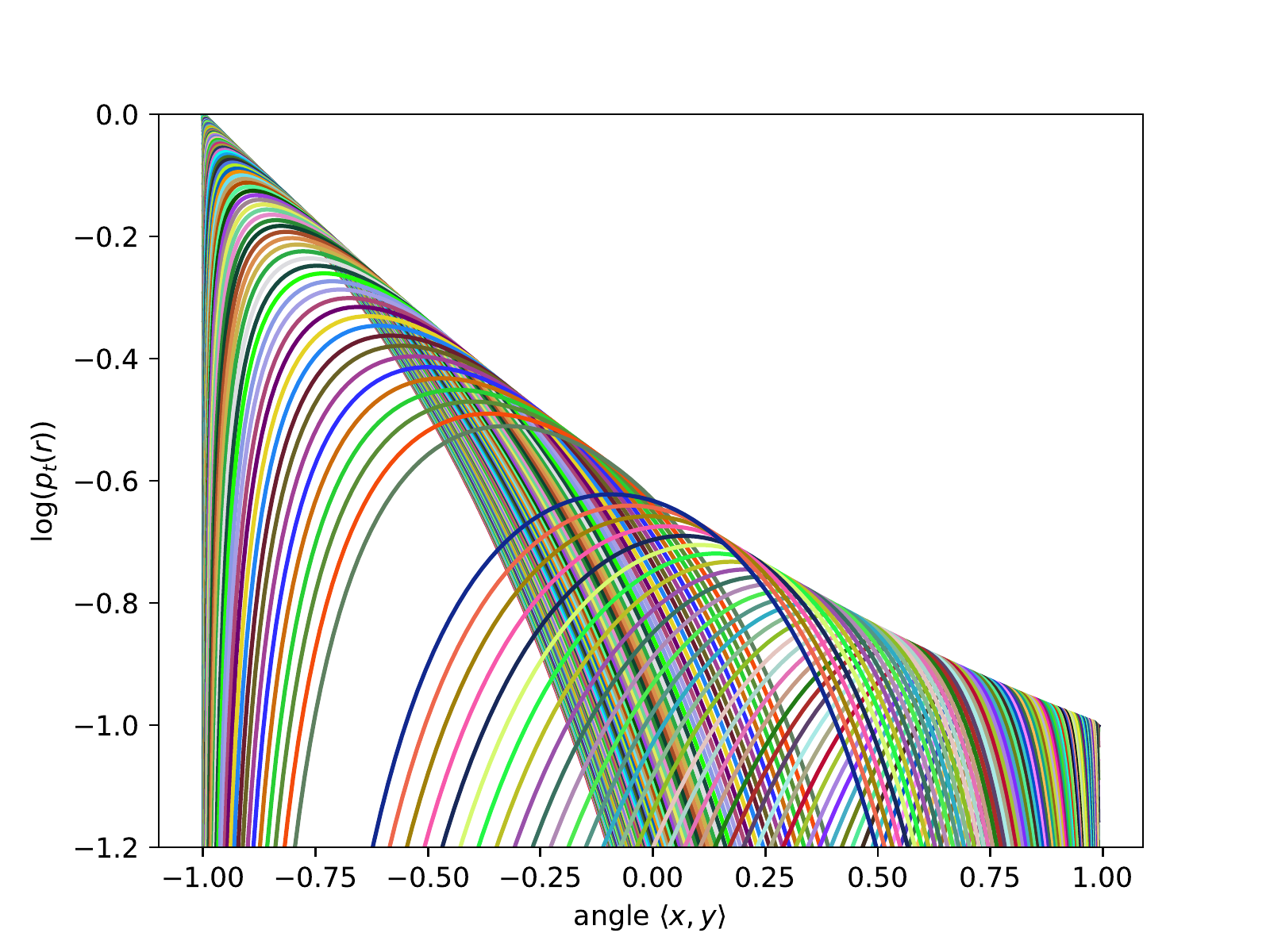}
\includegraphics[scale=0.5]{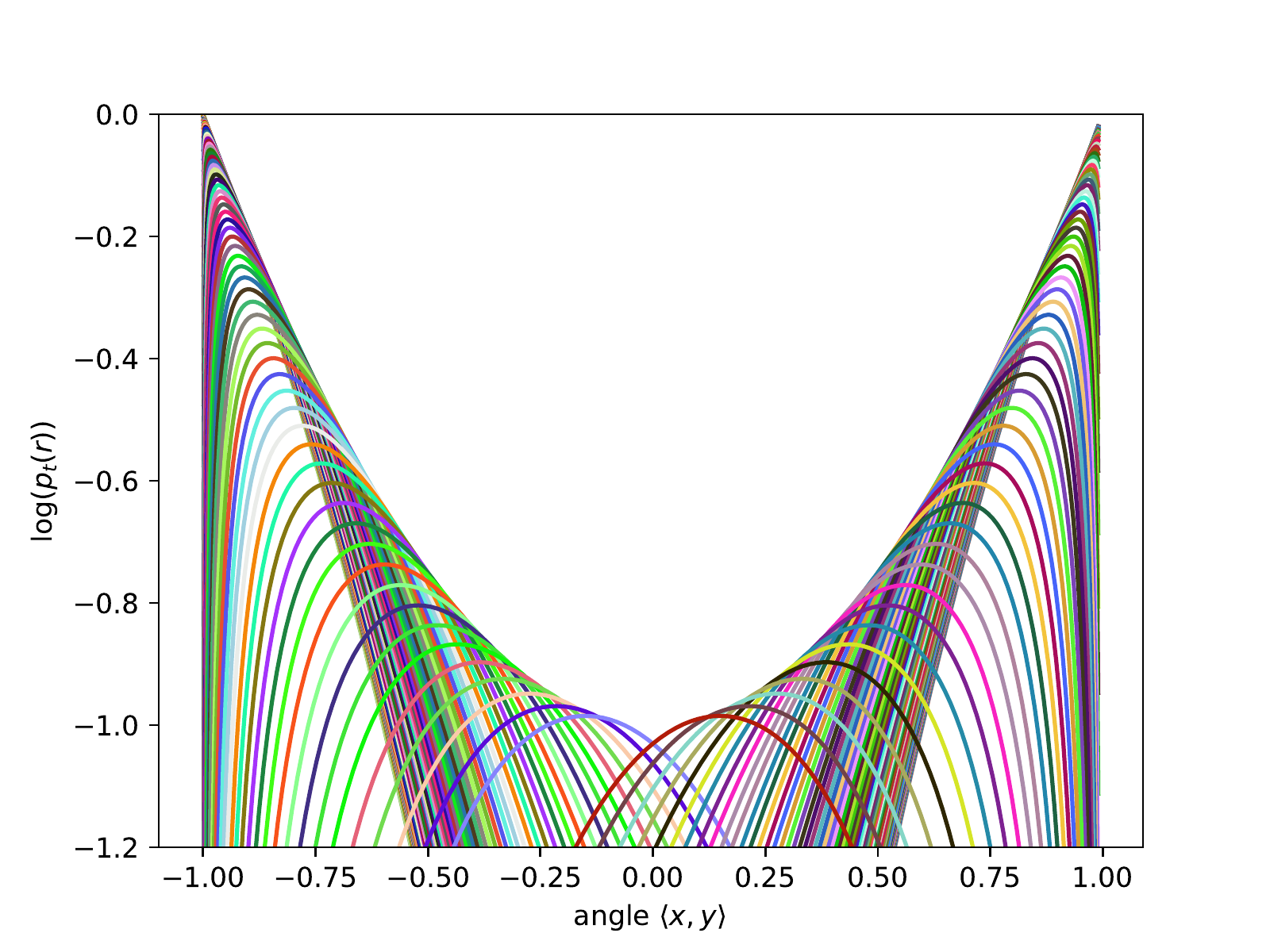}
\end{figure}

\vspace*{-0.3in}
\paragraph{Approximating Log-convex Functions via Distance Sensitive Hashing} We show that this is indeed possible for {log-convex functions of the inner product} by utilizing a family of hashing schemes introduced recently by Aumuller et al.~\cite{aumuller2018distance}, referred to as \emph{Distance Sensitive Hashing} (DSH). This family is defined through two parameters $\gamma\geq 0$ and $s >0$, with collision probability $p_{\gamma,s}(\rho)$ having the following dependence on the inner product $\rho=\langle x,y\rangle$ between two vectors $x,y\in\mathcal{S}^{d-1}$
\begin{equation}\label{eq:idealized} 
\log\left(1/p_{\gamma,s}(\rho)\right) =  \Theta\left(\left(\frac{1-\rho}{1+\rho}+\gamma^{2}\frac{1+\rho}{1-\rho} \right)\frac{s^{2}}{2} \right)
\end{equation}
We provide a slightly modified analysis of the collision probability that gives us better control of the hidden constants in the above equation. This allows us to show that for any convex function $\phi$, we can use a small number of DSH families to achieve \eqref{eq:supremum} that in this case is equivalent  to
\begin{equation}\label{eq:log-approx}
\bigl|\log \sup_{t\in [T]}\{p_{t}(\rho)\} -\frac{1}{2}(\phi(\rho)-\phi_{\max})\bigr| = O(1)
\end{equation}
The approximation is achieved by: (a) producing a sequence of explicit ``interpolation points" $\rho_{1},\ldots,\rho_{T}\in [-1,1]$, (b) using a single scheme to approximate the function $\phi$ locally  (value and derivative) around each $\rho_{t}$ \emph{(multi-resolution)}, (c) and then  using convexity of $\phi$ and ``concavity" of $p_{\gamma,s}$ to bound the error in \eqref{eq:log-approx} (Section \ref{sec:family}). 
The \emph{number of estimators} as well as the \emph{approximation error} in \eqref{eq:log-approx} are \emph{sub-linear} in the Lipschitz constant $L(\phi)$ of the function. This dependence of the error is the result of achieving a trade-off between evaluation time of the hash functions and fidelity of approximation, that affects the variance (Section \ref{sec:scale-free}).
An interesting fact is that to be able to achieve the above approximation guarantee using DSH, \emph{convexity of the function $\phi$ is instrumental} (Lemma \ref{lem:positive} and Proposition \ref{prop:infimum}). We give two examples of the resulting approximation in Figure \ref{fig:examples}.

\paragraph{Summary}
Our work provides a general technique that reduces the computational task of summing a pairwise function over a large dataset to the  task of constructing a family of hash functions whose square of the \emph{pointwise suprememum of collision probabilities approximates the function in question}. 
\subsection{Previous work}

Recent approaches on obtaining sub-linear algorithms for pairwise summation are based on two different ideas:   \emph{Hashing-based Importance Sampling} and \emph{Well-conditioned Partitions} .

\subsubsection{Hashing-based Importance Sampling}
Importance Sampling aims to reduce the variance of uniform random sampling by sampling  points according to some biased  distribution that assigns \emph{greater probability} to points with \emph{higher value} $w(x,y)$.  The challenge in our setting is that such a distribution needs to be adaptive to the query $y\in \R^{d}$ and to admit an efficient sampling algorithm at query time. 

\paragraph{Hashing-Based-Estimators (HBE)} In a previous work of the authors~\cite{charikar2017hashing},  the general approach of using hashing to create importance sampling schemes with provable low-variance was introduced under the name of Hashing-Based-Estimators.  Given a \emph{single} hashing scheme $\mathcal{H}$
with collision probability $p(x,y)=\P_{h\sim \mathcal{H}}[h(x)=h(y)]$    an \emph{unbiased estimator} for $Z_{w}(y)$ is constructed  through a two-step sampling process (corresponds to the $T=1$ case of Multi-Resolution HBE). The main technical contributions of~\cite{charikar2017hashing} that the current paper builds on (see Section \ref{sec:prelims}) are:
\begin{enumerate}
\item A  reduction that shows that the estimation problem can be reduced to the problem of efficiently constructing a $V$-bounded estimator (Theorem \ref{thm:v-bounded}).
\item A variational bound on the variance of importance sampling schemes (Lemma \ref{lem:holder})
\item The  concept of the ``scale-free" property $p(x,y)=\Theta(\sqrt{w(x,y)})$  for a \emph{single} hashing scheme.
\end{enumerate}

\paragraph{Limitations of HBE} The approach of HBE   hinges upon constructing a \emph{single hashing scheme} that has the \emph{scale-free property} (defined above). This can be quite difficult to achieve with hash functions that can be efficiently stored and evaluated.  In fact, the authors were able to carry out this approach  \emph{ for exactly three  functions}: the Gaussian $e^{-\|x-y\|_{2}^{2}}$, Exponential $e^{-\|x-y\|_{2}}$, and Generalized $t$-Student $1 / (1+\|x-y\|_{2}^{p})$ kernels using \emph{Locality Sensitive Hashing} schemes of Andoni-Indyk~\cite{andoni2006near} and Datar et al.~\cite{datar2004locality}. This is due the fact that these LSH schemes exhibited collision probabilities  that matched the aforementioned functions. Hence, there are severe restrictions on the classes of functions for which sub-linear algorithms can be obtained through HBE.

\paragraph{Comparison} In this work, we essentially remove the main bottleneck of the Hashing-based approach and make it more broadly applicable. This is done by using the idea of Partitions of Unity via  Multi-Resolution HBE, and identifying key design principles \eqref{eq:pou} and \eqref{eq:supremum} that provably lead to an overall low-variance estimator. In doing so we also provide a more general theorem for the variance of scale-free estimators (Theorem \ref{thm:scale-free}).

The bulk of our technical work goes into showing that this approach is generic enough to capture a large class of functions, namely \emph{log-convex functions of the inner product}.   This is critical as for applications in Optimization or Machine Learning, one cares about functions that depend on the \emph{inner products} between two vectors rather than their distance. We achieve this  by developing an Approximation Theory of Convex Functions using the family of collision probabilities given by Distance Sensitive Hashing schemes.

\subsubsection{Partition-based approaches and Smoothness}
The idea of partition-based approaches, is to efficiently partition points in a small number of parts such that some simple primitive (Random Sampling or Polynomial approximation) can be used to accurately estimate the contribution of each part. This approach in low dimensions, is known under the names of Fast-Multipole Methods~\cite{greengard1987fast} or Well Separated Pair Decomposition~\cite{callahan1995decomposition} and the   complexity scales typically as $\log(1/\epsilon)^{O(d)}$~\cite{greengard1991fast} for additive error $\epsilon$. 

Due to the explosion in Machine learning applications the problem was revisited in the high-dimensional case  through works on ``Dual-tree Algorithms"~\cite{gray2003nonparametric,yang2003improved,lee2006dual} that aimed to exploit an underlying low dimensional structure~\cite{ram2009linear} (when it exists). However, no theoretical results were known for the general case.

\paragraph{``Non-smooth" functions}The lower bound presented here, inspired by~\cite{backurs2017fine}, shows that this is for good reason. In high dimensions $d=\Omega(\log n)$, even for simple functions (e.g. Gaussian kernel),  and  under no restrictions on the rate that the function changes  we do not expect to be able to get sub-linear algorithms barring major progress in complexity theory (e.g. refuting SETH).

\paragraph{"Smooth" functions}In a recent work~\cite{backurs2018efficient}, it was established that indeed in high dimensions quick variation of the pairwise function  is the only obstacle in obtaining efficient algorithms.  In particular, the authors of~\cite{backurs2018efficient} introduced the following notion of $(C,L)$-smoothness that captures functions that   vary polynomially fast with distance:
\[
\max\left\{\frac{w(x,y)}{w(x',y)},\frac{w(x',y)}{w(x,y)}\right\}\leq C \left\{\frac{\|x-y\|}{\|x'-y\|},\frac{\|x'-y\|}{\|x-y\|}\right\}^{L}
\]
and showed that one can get $\mathrm{poly}(2^{L}, \log n, \frac{1}{\epsilon})$ algorithms   giving exponential improvement over the linear time algorithm for small values of $L=o(\log n)$. This was achieved by showing that one can efficiently construct query-dependent partitions (in time roughly $2^{O(L)}$)  that are \emph{``good on average"} when random sampling is used to approximate the contribution of each part. Interestingly, ideas related to hashing were instrumental to  both constructing and analyzing the partitions. The authors also  provided an intimate connection to the problem of Approximate Nearest Neighbor Search (ANNS) by   showing that for ``radial" and smooth functions one can   solve the problem given oracle access to an $c$-ANNS data structure using $\mathrm{poly}(c^{L}, \log n, \frac{1}{\epsilon})$ calls.

\paragraph{Comparison} The class of log-convex functions studied in this paper   \emph{does not} satisfy  this definition of smoothness (exponential vs polynomial). Moreover, even for $(O(1), \frac{1}{2}\log n)$-smooth functions this approach does not give any improvement over the linear time algorithm (cf. Corollary \ref{cor:examples}).

\subsection{Related work}
\subsubsection{Partition Function Estimation}
 For the special case of \emph{log-linear models}, there is a different approach that relies on LSH to approximate the partition function~\cite{mussmann2016learning,mussmann2017fast}. In the heart of this approach are two reductions. For $\alpha\geq 1$, the first one is reducing  the problem of obtaining a $\alpha$-approximation to the inverse of the Parition Function to obtaining an $\log(\alpha)$-\emph{additive approximation} for the problem of Maximum Inner Product Search (Gumbel trick). The second one, is reducing the problem of MIPS to the problem of $(1+\gamma(\alpha))$-approximate nearest neighbor search (ANNS). Using the best known data-structure for ANNS~\cite{andoni2015optimal}, this method requires  \emph{worst case} time/space $\Omega(n^{1-O(\gamma(\epsilon))})$, which is tight~\cite{andoni2017optimal}.  For vectors in $r\mathcal{S}^{d-1}$, the dependence is $\gamma(\alpha)=O(\frac{\log \alpha}{r^{2}})$. Hence, at least for adversarial data-sets this approach cannot bring forth significant improvements unless $r=O(\sqrt{\log \alpha})$. Nevertheless,  the authors \cite{mussmann2017fast} have shown experimentally that their method is still competitive compared to uniform sampling.

\vspace*{-0.1in}

\subsubsection{Core-sets}
A central notion in computational geometry~\cite{agarwal2005geometric}, learning theory~\cite{li2001improved} and approximation algorithms~\cite{har2011geometric} is that of a \emph{Coreset} or $\epsilon$-sample. Given a set $X\subseteq \mathcal{X}$ and a collection of functions $\mathcal{F}$ from a space $\mathcal{X}$ to $[0,1]$, an $\epsilon$-Coreset $(S,u)$ consists of a set $S\subset X$ and a function $u:S\to \R_{+}$ such that:
\vspace*{-0.05in}
\begin{equation}\label{eq:coreset}
(1-\epsilon) \sum_{x\in X} f(x) \leq \sum_{x\in S}u(x)f(x) \leq (1+\epsilon) \sum_{x\in X}f(x), \qquad \forall f\in \mathcal{F}
\end{equation}

\vspace*{-0.05in}
\paragraph{Kernel Density} In the context of Kernel Density Estimation~\cite{devroye2012combinatorial},  Coresets~\cite{joshi2011comparing,phillips2013varepsilon}  have received renewed attention in recent years resulting in near optimal constructions~\cite{phillips2018near} for certain cases. The literature has mostly been focused on obtaining \emph{additive error} $\epsilon>0$.  A general upper bound of $O(\frac{1}{\epsilon^{2}})$ was shown~\cite{bach2012equivalence,phillips2018improved} on coresets for characteristic kernels (e.g. Gaussian, Laplacian) using a greedy construction (\emph{kernel herding}~\cite{chen2010super}). The other approach~\cite{phillips2013varepsilon,phillips2018improved,phillips2018near} applies to Lipschitz kernels of bounded influence (decay fast enough), and constructs the core-set by starting with the full set of points  and \emph{reducing it by  half} each time. Using smoothness properties of the kernel one can then bound the error introduced by each such operation through the \emph{notion of discrepancy}~\cite{chazelle1996linear,chazelle2000discrepancy}. In this way, an upper bound of $O(\sqrt{d}/\epsilon\sqrt{\log(1/\epsilon)})$  was recently obtained~\cite{phillips2018near} for such  kernels (e.g. Gaussian, Laplacian). Furthermore,  corresponding lower bounds $\Omega(\sqrt{d}/\epsilon)$ and $\Omega(1/\epsilon^{2})$ were proved   for $d\leq 1/\epsilon^{2}$~\cite{phillips2018near} and  $d\geq 1/\epsilon^{2}$~\cite{phillips2018improved} respectively.
  
\vspace*{-0.05in}  
\paragraph{Machine Learning and Logistic Regression} The central paradigm of Machine Learning is that of Empirical Risk Minimization. Coresets provide a way to approximate the empirical risk in certain cases and speed up the training and evaluation of machine learning models~\cite{feldman2011scalable,baykal2018data}. The special case of logistic regression has  recently received special attention~\cite{huggins2016coresets,tolochinsky2018coresets,munteanu2018coresets}. A common theme of these approaches is that they show a \emph{lower bound on the sensitivity}~\cite{langberg2010universal} for logistic regression, and then  add restrictions on the norms~\cite{huggins2016coresets,tolochinsky2018coresets} or provide parametrized results depending on the complexity of the instance~\cite{munteanu2018coresets} . Our results are of similar nature as they are parametrized by the Lipshcitz constant of the convex function under consideration (corresponds to bound on norms) as well as by the complexity of the instance (density $\mu$, see Theorem \ref{thm:main})
 
\paragraph{Comparison with current approach}Our work gives better ``for any" guarantees  and (in some sense) sidesteps the issue of bounding the sensitivity, by allowing \emph{randomization and adaptivity} both within the weights  and on the set of points to be used for a given query. Our estimator can be written as $\sum_{x\in S_{y}} u(x) w(x,y)$ where both the set $S_{y}$ and weights $u(x)$ are \emph{random and depend} on $y$. In particular, in Section \ref{sec:gradients} we show how  both Hashing-based-Estimators and Importance sampling can be cast  under the same framework. Besides the difference in the  guarantees offered, our approach and Coresets are in some sense \emph{orthogonal}.   Even in the case where we obtain Coresets of small size,  our methods, when applicable,  can be used to \emph{accelerate the evaluation} of such \emph{Coresets}.

\vspace*{-0.1in}
\subsection{Outline of the paper}
In the next section, we describe the basis of our approach and introduce the main tools we need. In Section \ref{sec:variance}, we derive the key design principles for Multi-resolution HBE and show how they yield provable bounds on the variance. In Section \ref{sec:family}, we use an idealized version of the collision probabilities provided by Distance Sensitive Hashing  to approximate log-convex functions. In Section \ref{sec:scale-free}, we finish the construction of our estimators for the unit sphere and prove our main result. In Sections \ref{sec:reduction} and \ref{sec:gradients}, we show respectively how to extend this construction to Euclidean space and to estimate vector functions, whereas in Section \ref{sec:lower} we give the proof of the lower bound. Finally, in Section   \ref{sec:proofs}, we provide  the proofs for some intermediate lemmas and conclude with some open questions in Section \ref{sec:open}.

\vspace*{-0.1in}
\section{Preliminaries}\label{sec:prelims}
We introduce some parameters that capture the complexity of a function for our purposes. 
\begin{definition}\label{def:lipschitz}
Let $S\subset \R$, a function $f:S\to \R$ is called \emph{Lipschitz with constant} $0\leq L< \infty$ if for all  $x,y\in S$,   $|f(x)-f(y)|\leq L|x-y|$. For given $f$, we denote by $L(f)$ the  minimum such constant.
\end{definition}
Let also $R(f)=f_{\max}-f_{\min}$ denote the range of $f$.
\begin{proposition}\label{prop:affine_transform}
Given $a,b\in \R$, we have $L(a f + b) = |a|L(f)$ and $R(af+b) = |a|R(f)$.
\end{proposition}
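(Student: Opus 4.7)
The plan is to handle the two identities separately, each by direct manipulation of the defining formulas. Neither requires any nontrivial idea; the only care is in the case analysis on the sign of $a$ and the degenerate case $a=0$.

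For the Lipschitz claim, I would start from the identity $(af+b)(x)-(af+b)(y) = a\bigl(f(x)-f(y)\bigr)$, which gives $|(af+b)(x)-(af+b)(y)| = |a|\cdot|f(x)-f(y)|$ for all $x,y\in S$. Applying the Lipschitz bound for $f$ on the right shows that $|a|L(f)$ is a valid Lipschitz constant for $af+b$, hence $L(af+b)\leq |a|L(f)$. For the reverse inequality, when $a\neq 0$ I would invert the relation: for any valid Lipschitz constant $K$ of $af+b$, $|f(x)-f(y)| = |a|^{-1}|(af+b)(x)-(af+b)(y)| \leq |a|^{-1}K|x-y|$, so $L(f)\leq |a|^{-1}L(af+b)$, i.e., $|a|L(f)\leq L(af+b)$. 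The case $a=0$ is immediate because $af+b\equiv b$ is constant, so both sides are $0$.

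For the range claim, I would split on the sign of $a$. When $a\geq 0$, the affine map $t\mapsto at+b$ is nondecreasing, so $(af+b)_{\max}=af_{\max}+b$ and $(af+b)_{\min}=af_{\min}+b$, giving $R(af+b)=a(f_{\max}-f_{\min})=|a|R(f)$. When $a<0$, the map reverses order, so $(af+b)_{\max}=af_{\min}+b$ and $(af+b)_{\min}=af_{\max}+b$, yielding $R(af+b)=a(f_{\min}-f_{\max})=-a\,R(f)=|a|R(f)$.

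There is no real obstacle here; the proposition is a routine bookkeeping statement that just records how the two complexity parameters introduced in Definition~\ref{def:lipschitz} behave under affine transformations. The only substantive content is the observation that absolute values absorb the sign of $a$ consistently in both arguments.
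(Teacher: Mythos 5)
Your proposal is correct and follows essentially the same direct-computation approach as the paper's proof: the range identity is handled by the same case split on the sign of $a$, and the Lipschitz identity by factoring out $|a|$. In fact, your write-up is slightly more complete than the paper's own: the paper only verifies the one-sided bound $|(af+b)(x)-(af+b)(y)|\leq |a|L(f)|x-y|$, establishing $L(af+b)\leq |a|L(f)$, whereas the stated proposition asserts equality. Your inversion argument for $a\neq 0$ and the separate treatment of $a=0$ supply the missing reverse inequality, so your proof is a small but genuine tightening of the paper's.
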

\begin{proof}
If $a>0$, $R(af+b) = af_{\max}+b - (af_{\min}+b) = a R(f)$. If $a<0$, $R(af+b)= af_{\min}+b - (af_{\max}+b) = -aR(f)=|a|R(f)$. Finally, $|af(x)+b-(af(y)+b)| \leq |a||f(x)-f(y)|\leq |a|L|x-y|$.
\end{proof}
 Throughout the paper  for a query $y\in \R^{d}$ we use $\mu:=\mu(y)=Z_{w}(y)/w_{\max}$. For log-convex functions, we assume that $L(\phi)$ is greater than some small constant. Otherwise  $O(1/\epsilon^{2}\log(1/\chi))$ uniform random samples are sufficient to estimate any $\mu\in [e^{-R(\phi)},1]$.

\vspace*{-0.15in}
\subsection{Basis of the approach}

The  starting point of our work is the \emph{method of unbiased estimators}. Assume that we would like to estimate a quantity $\mu=\mu(y)$ using access to samples from a distribution $\mathcal{D}$, such that  for $\hat{Z}\sim \mathcal{D}$, $\E[\hat{Z}]=\mu$  and $\mathrm{Var}[\hat{Z}] \leq \mu^{2}V_{\mathcal{D}}(\mu)$. The quantity $V_{\mathcal{D}}(\mu)$ (depending possibly on $\mu$) bounds the \emph{relative variance} $\mathrm{RelVar}[\hat{Z}]:=\frac{\mathrm{Var}[\hat{Z}]}{(\E[\hat{Z}])^{2}}$. For $\epsilon>0$, we get through Chebyshev's inequality that the average of $O(\epsilon^{-2}V_{\mathcal{D}}(\mu))$ samples are sufficient to get $(1\pm\epsilon)$-multiplicative approximation to $\mu$ with constant probability. Moreover, using the median-of-means technique  ~\cite{alon1996space}, we can make the failure probability to be less than $\chi>0$ by only increasing the number of samples by a $O(\log(1/\chi))$ factor.

\paragraph{V-bounded Estimators}  The above discussion seems to suggest that as long as one has an unbiased estimator $\hat{Z}\sim \mathcal{D}$ for $\mu$ and a bound $V_{\mathcal{D}}(\mu)$ on the relative variance,  one can accurately estimate $\mu$. The caveat of course is that in cases where $V_{\mathcal{D}}$ is indeed a function of $\mu$, setting the requisite number of samples  requires knowledge of $\mu$. 
An unbiased estimator for which $\mu^{2}V_{\mathcal{D}}(\mu)$ is decreasing  and $V_{\mathcal{D}}(\mu)$ is increasing is called \emph{$V$-bounded}~\cite{charikar2017hashing}. An estimator has complexity $\mathcal{C} $, if using space  $O(\mathcal{C} n)$ we can evaluate it, i.e. sample from $\mathcal{D}$, in $O(\mathcal{C})$ time.  A general way to construct data-structures to solve estimation problems using $V$-bounded estimators was recently proposed.
\vspace*{-0.05in}
\begin{theorem}[\cite{charikar2017hashing}]\label{thm:v-bounded}
Given a $V$-bounded estimator
of complexity $\mathcal{C}$ and parameters $\epsilon,\tau,\chi \in (0,1)$, there exists a data structure that using space $O(\frac{1}{\epsilon^{2}}V_{\mathcal{D}}(\tau)\mathcal{C}\log(1/\chi)\cdot n)$  can provide a $(1\pm \epsilon)$ approximation to any $\mu\geq \tau$ in time $O(\frac{1}{\epsilon^{2}}\mathcal{C}V_{\mathcal{D}}(\mu)\log(1/\chi))$ with probability at least $1-\chi$. The data-structure can also detect when $\mu <\tau$.
\end{theorem}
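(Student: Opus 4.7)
The plan is to wrap the given $V$-bounded estimator in a doubling-guess scheme. Since $\mu$ is unknown but assumed to lie in $[\tau,1]$, I would maintain a geometric sequence of guesses $\tilde\mu_j = 2^{-j}$ for $j = 0,1,\ldots,J$ with $J = \lceil \log_2(1/\tau)\rceil$. The data structure pre-computes and stores $K = \Theta(\epsilon^{-2}\, V_{\mathcal{D}}(\tau)\log(1/\chi))$ independent copies of the estimator (the smallest guess $\tau$ requires the most samples by $V$-bounded monotonicity of $V_{\mathcal{D}}$); each copy uses $O(\mathcal{C}\cdot n)$ space, which matches the claimed overall space bound.

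At query time, I iterate $j = 0,1,\ldots,J$. At level $j$, I use only the first $K_j = \Theta(\epsilon^{-2}\, V_{\mathcal{D}}(\tilde\mu_j)\log(1/\chi))$ stored samples, split them into $t = \Theta(\log(J/\chi))$ equal groups, average within each group, and take the median $\hat\mu_j$. The stopping rule is: if $\hat\mu_j \geq (1-\epsilon/3)\tilde\mu_j$, return $\hat\mu_j$; otherwise continue. If the loop finishes without returning, report ``$\mu < \tau$''.

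Correctness uses both monotonicity properties. When $\mu \geq \tilde\mu_j$, the monotonicity of $V_{\mathcal{D}}$ ensures that $K_j$ samples suffice for Chebyshev to give $|\bar Z - \mu| \leq (\epsilon/3)\mu$ per group with probability $\geq 2/3$; median-of-means then boosts this to probability $\geq 1 - \chi/J$, so a union bound over levels yields total failure probability $\leq \chi$. When $\mu \ll \tilde\mu_j$, the monotonicity of $\mu^2 V_{\mathcal{D}}(\mu)$ bounds each group mean's variance by $O(\epsilon^2 \tilde\mu_j^2)$, which by Chebyshev forces $\hat\mu_j < (1-\epsilon/3)\tilde\mu_j$ with high probability and prevents premature termination. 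Termination therefore occurs no later than $j^\star = \lceil \log_2(1/\mu)\rceil$, and the returned $\hat\mu_{j^\star}$ is a $(1\pm\epsilon)$ approximation to $\mu$. The ``gray zone'' where $\mu$ is only a small constant factor below $\tilde\mu_j$ is harmless because there $\tilde\mu_j = \Theta(\mu)$, so the same variance bound already gives $\hat\mu_j = (1\pm O(\epsilon))\mu$.

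For query time, I sum $K_j \cdot \mathcal{C}$ across $j = 0,\ldots,j^\star$. The monotonicity of $\mu^2 V_{\mathcal{D}}(\mu)$ implies $V_{\mathcal{D}}(\tilde\mu_{j^\star - k}) \leq 4^{-k}\, V_{\mathcal{D}}(\tilde\mu_{j^\star})$, so the series is geometric and dominated by its last term $\Theta(\epsilon^{-2}\, V_{\mathcal{D}}(\mu)\, \mathcal{C}\log(1/\chi))$. The main obstacle --- though routine --- is avoiding a spurious $\log(1/\tau)$ factor in the query time: it is precisely the quadratic-growth property of $\mu^2 V_{\mathcal{D}}$ that forces this geometric dominance, so the query cost is tied to the correct level $j^\star$ rather than accumulating over all levels down to $\tau$.
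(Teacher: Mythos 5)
Your overall strategy—geometric guessing $\tilde\mu_j = 2^{-j}$, median-of-means at each level, and a stopping rule that exploits both monotonicity conditions—is the right skeleton and matches the approach in \cite{charikar2017hashing}. However, there is a genuine gap in the query-time analysis, and it stems from an inconsistent reading of the $V$-bounded monotonicity. The paper's own instantiation, $V_{\mathcal{D}}(\mu)\propto\mu^{-1/2}$ (Theorem \ref{thm:scale-free}), has $V_{\mathcal{D}}$ \emph{decreasing} and $\mu^2 V_{\mathcal{D}}(\mu)\propto\mu^{3/2}$ \emph{increasing}; this is also the only reading under which the theorem's space bound $O(\epsilon^{-2}V_{\mathcal{D}}(\tau)\mathcal{C}\log(1/\chi)\cdot n)$ dominates the per-query cost for all $\mu\ge\tau$. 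Your preprocessing claim (``the smallest guess $\tau$ requires the most samples'') correctly uses $V_{\mathcal{D}}$ decreasing, but your query-time claim $V_{\mathcal{D}}(\tilde\mu_{j^\star-k})\le 4^{-k}V_{\mathcal{D}}(\tilde\mu_{j^\star})$ silently uses the opposite convention. From $\mu^2 V_{\mathcal{D}}(\mu)$ increasing one only gets
\[
V_{\mathcal{D}}(\tilde\mu_{j^\star-k}) \;\ge\; 4^{-k}\,V_{\mathcal{D}}(\tilde\mu_{j^\star}),
\]
a \emph{lower} bound, which merely caps the growth rate $K_{j+1}\le 4K_j$ and does not make $\sum_{j\le j^\star} K_j$ geometric. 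If $V_{\mathcal{D}}$ is slowly varying (e.g.\ $V_{\mathcal{D}}(\mu)=\Theta(\log(1/\mu))$, which satisfies both monotonicity conditions on $\mu\le e^{-1/2}$), your sum picks up an extra $\Theta(\log(1/\mu))$ factor that the theorem does not permit.

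The missing idea is amortization through sample reuse: since you already use nested prefixes ($K_0\le K_1\le\dots$), the query routine should, at level $j$, evaluate only the $K_j-K_{j-1}$ \emph{new} samples and maintain running group sums for the median-of-means, recomputing only the medians (an $O(\log(J/\chi))$ overhead per level). The total evaluation work then telescopes to $O(K_{j^\star}\mathcal{C})=O(\epsilon^{-2}V_{\mathcal{D}}(\mu)\mathcal{C}\log(1/\chi))$ independently of how $V_{\mathcal{D}}$ varies between levels, and no geometric-decay claim is needed. A smaller issue: you allocate $K_j=\Theta(\epsilon^{-2}V_{\mathcal{D}}(\tilde\mu_j)\log(1/\chi))$ samples but split them into $t=\Theta(\log(J/\chi))$ groups, leaving each group with too few samples for Chebyshev to give constant failure probability; the groups need size $\Theta(\epsilon^{-2}V_{\mathcal{D}}(\tilde\mu_j))$, so either $K_j$ should carry a $\log(J/\chi)$ factor or the union bound over levels must be argued away differently (e.g.\ by a martingale/optional-stopping argument on the single nested stream of samples).
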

\vspace*{-0.05in} 
Our goal %in this paper 
is to construct such estimators through hashing and bound their complexity. %Then utilizing 
%Using 
The above theorem
turns our construction into an efficient data-structure for \emph{estimating pairwise summations}. 
\vspace*{-0.2cm}
\subsection{Analytical Tools}
The following variational inequality was first proved in \cite{charikar2017hashing} and bounds the \emph{maximum of a quadratic form} over the intersection of two weighted $\ell_{1}$-balls. This is going to be the key lemma that will allow us to obtain worst-case bounds on the variance of our estimators. 
\begin{lemma}[\cite{charikar2017hashing}]\label{lem:holder}
Given positive vector $w\in \R^{n}$, number $\mu>0$, define $f^{*}_{i}:=\min\{1,\frac{\mu}{w_{i}}\}$. For any matrix $A\in \R^{n\times n}$ :
\vspace*{-0.05in}
\[
\sup_{\|f\|_{w,1}\leq \mu, \|f\|_{1}\leq 1}\{f^{\top}Af\}\leq 4\sup_{ij\in [n]}\left\{f_{i}^{*}|A_{ij}|f_{j}^{*}\right\}
\] 
\end{lemma}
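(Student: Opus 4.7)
The plan is to reduce to the non-negative case and then use a simple rescaling argument based on the vector $f^*$.

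First I would observe that since the supremum is over $f$ satisfying $\|f\|_{w,1}\le\mu$ and $\|f\|_1\le 1$, we can replace $f$ by $|f|$ (coordinate-wise) and $A$ by the entrywise absolute value $|A|$ without decreasing $f^\top A f$, since $f^\top A f \le |f|^\top |A| |f|$ and both constraints are invariant under $f \mapsto |f|$. So without loss of generality assume $f\ge 0$ and $A_{ij}\ge 0$.

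Next I would verify the crucial pointwise inequality $f_i \le f^*_i$ for every $i$. This is immediate from the two constraints applied to a single coordinate: $w_i f_i \le \sum_j w_j f_j \le \mu$ gives $f_i \le \mu/w_i$, and $f_i \le \sum_j f_j \le 1$ gives $f_i \le 1$, so $f_i \le \min\{1,\mu/w_i\} = f^*_i$. Define the rescaled variables $g_i := f_i/f^*_i \in [0,1]$. Then
\[
f^\top A f \;=\; \sum_{i,j} A_{ij} f_i f_j \;=\; \sum_{i,j} A_{ij} f^*_i f^*_j\, g_i g_j \;\le\; \Big(\sup_{i,j} A_{ij} f^*_i f^*_j\Big) \Big(\sum_i g_i\Big)^{\!2}.
\]

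The main step is to show $\sum_i g_i \le 2$, which gives the factor $4$. Partition $[n]$ into $S=\{i:w_i\le\mu\}$, where $f^*_i=1$ and hence $g_i=f_i$, and $T=\{i:w_i>\mu\}$, where $f^*_i=\mu/w_i$ and hence $g_i = w_i f_i/\mu$. Then
\[
\sum_{i\in S} g_i \;=\; \sum_{i\in S} f_i \;\le\; \|f\|_1 \;\le\; 1, \qquad \sum_{i\in T} g_i \;=\; \frac{1}{\mu}\sum_{i\in T} w_i f_i \;\le\; \frac{\|f\|_{w,1}}{\mu} \;\le\; 1,
\]
so $\sum_i g_i \le 2$. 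Combining the two displays yields the claimed bound. No single step is really an obstacle here; the one delicate part is recognizing that dividing by $f^*_i$ exactly decouples the two constraints, which is what makes the union bound $1+1=2$ — and hence the constant $4$ — fall out cleanly.
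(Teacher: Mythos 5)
Your proof is correct. A small caveat: this paper cites the lemma from Charikar--Siminelakis (FOCS'17) without reproducing the proof, so there is no in-paper argument to compare against. That said, every step checks out: the reduction to $f\geq 0$, $A\geq 0$ is valid because both constraints and both sides of the claimed inequality are invariant under taking coordinatewise absolute values; the pointwise bound $f_i\leq f^*_i$ follows directly from applying each constraint to a single coordinate (here the positivity of $w$ is used, as is $\mu>0$, so $f^*_i>0$ and the rescaling $g_i=f_i/f^*_i$ is well-defined); the factorization $f^\top Af=\sum_{ij}A_{ij}f^*_if^*_jg_ig_j\leq\bigl(\sup_{ij}A_{ij}f^*_if^*_j\bigr)\bigl(\sum_ig_i\bigr)^2$ is valid because all summands are nonnegative; and the partition of $[n]$ into $\{w_i\leq\mu\}$ and $\{w_i>\mu\}$ cleanly charges one unit to the $\ell_1$-constraint and one unit to the weighted-$\ell_1$-constraint, giving $\sum_ig_i\leq 2$ and hence the constant $4$. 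This is the natural argument one expects for a bound of this form (indeed the label \texttt{lem:holder} suggests the original proof is also a two-term H\"older/duality split along the two constraining norms), so I would regard it as the same approach rather than a genuinely different route.
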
 
\smallskip

The following crucial lemma, that \emph{upper bounds the value of a convex function away from the natural boundary}, lies in the core of our ability to use the family of functions \eqref{eq:idealized} to approximate convex functions of the inner product.
\vspace*{-0.2cm}
\begin{lemma} \label{lem:positive}
 Let $\phi:[-1,1]\to \R$ be a  non-constant, non-positive, convex, differentiable function, then 
\begin{equation}
 2\phi(\rho_{0}) <  -(1-\rho_{0}^{2})|\phi^{'}(\rho_{0})|, \ \forall \rho_{0}\in (-1,1).
 \end{equation}
\end{lemma}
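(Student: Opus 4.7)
The plan is to combine the tangent-line (subgradient) inequality for convex $\phi$ with the non-positivity hypothesis applied at the two endpoints $\rho=\pm 1$. For any $\rho_0 \in (-1,1)$, convexity gives $\phi(\rho) \geq \phi(\rho_0) + \phi'(\rho_0)(\rho-\rho_0)$ for all $\rho \in [-1,1]$. Evaluating at $\rho=1$ and using $\phi(1)\leq 0$ yields $\phi'(\rho_0)(1-\rho_0) \leq -\phi(\rho_0)$, while evaluating at $\rho=-1$ and using $\phi(-1)\leq 0$ yields $-\phi'(\rho_0)(1+\rho_0) \leq -\phi(\rho_0)$.

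Next I case-split on the sign of $\phi'(\rho_0)$. If $\phi'(\rho_0) \geq 0$, multiplying the first bound by $(1+\rho_0) > 0$ gives $(1-\rho_0^2)|\phi'(\rho_0)| \leq -\phi(\rho_0)(1+\rho_0)$. If $\phi'(\rho_0) < 0$, multiplying the second bound by $(1-\rho_0) > 0$ gives $(1-\rho_0^2)|\phi'(\rho_0)| \leq -\phi(\rho_0)(1-\rho_0)$. Either way, $(1-\rho_0^2)|\phi'(\rho_0)| \leq -c\cdot \phi(\rho_0)$ for some $c \in \{1-\rho_0, 1+\rho_0\} \subset (0,2)$, i.e.\ $-(1-\rho_0^2)|\phi'(\rho_0)| \geq c\cdot \phi(\rho_0)$.

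The remaining, and subtlest, step is converting this $\geq$ into the required strict $>$ against $2\phi(\rho_0)$. Since $c < 2$ and $\phi(\rho_0)$ is non-positive, we have $c\cdot\phi(\rho_0) \geq 2\phi(\rho_0)$ with strict inequality precisely when $\phi(\rho_0) < 0$. To rule out equality, I would argue: if $\phi(\rho_0)=0$ for some interior $\rho_0$, then the tangent-line inequality $\phi(\rho) \geq \phi'(\rho_0)(\rho-\rho_0)$ combined with $\phi \leq 0$ forces $\phi'(\rho_0) = 0$ (taking $\rho$ on both sides of $\rho_0$), hence $\phi(\rho) \geq 0$ for all $\rho$, so $\phi \equiv 0$, contradicting non-constancy. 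Thus $\phi(\rho_0) < 0$ strictly on the open interval, and chaining the inequalities completes the proof.

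The main obstacle is precisely this strictness argument: the weak inequality $\leq$ is a one-line consequence of convexity plus the endpoint bounds, but obtaining the strict version requires invoking all three hypotheses (convex, non-positive, non-constant) simultaneously to exclude the degenerate case $\phi(\rho_0)=0$.
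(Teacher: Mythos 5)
Your proposal is correct and follows essentially the same route as the paper's proof: write the tangent line at $\rho_0$, use $\phi(\pm 1)\le 0$, and case-split on the sign of $\phi'(\rho_0)$ to get the weak inequality, then invoke that a non-constant convex function cannot attain its maximum in the interior to get strictness. Your handling of the strictness step is a bit more explicit than the paper's (which simply cites the interior-maximum fact), but the content is identical.
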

\begin{proof}
Let $g(\rho) = \phi^{'}(\rho_{0})(\rho-\rho_{0})+\phi(\rho_{0})$ be the linear approximation of $\phi$ around $\rho_{0}\in(-1,1)$,   by convexity we have that $g(\rho)\leq \phi(\rho) \leq 0$. First let's assume that $g$ is increasing, then:
\begin{equation}
g(1)\leq 0 \Rightarrow \phi'(\rho_{0}) \leq -\frac{\phi(\rho_{0})}{1-\rho_{0}}\Rightarrow 2\phi(\rho_{0}) +(1-\rho_{0}^{2})|\phi^{'}(\rho_{0})| \leq  2\phi(\rho_{0})\left[1-\frac{1+\rho_{0}}{2} \right] <0\label{eq:der_increasing}
\end{equation}
where the last inequality follows from the fact that a non-constant convex function attains its maximum only at the boundary of a convex domain. Similarly, if $g$ is decreasing:
\begin{equation}
g(-1) \leq 0 \Rightarrow \phi^{'}(\rho_{0})\geq  \frac{\phi(\rho_{0})}{1+\rho_{0}}\Rightarrow  2\phi(\rho_{0}) -(1-\rho_{0}^{2})\phi^{'}(\rho_{0}) \leq 2\phi(\rho_{0})\left[1-\frac{1-\rho_{0}}{2} \right]<0\label{eq:der_decreasing}
\end{equation}
\end{proof}
\vspace*{-0.2cm}
We also utilize  a structural result for convex functions.
\begin{theorem}[\cite{rote1992convergence}]\label{thm:sandwich}
Given $\epsilon>0$, there exists an algorithm that given a univariate convex function $f$ on an interval $[a,b]$ constructs a piecewise linear convex function $\ell$ such that $0\leq f(x)-\ell(x)\leq \epsilon$  for all $x\in[a,b]$ using $O(\sqrt{\frac{(b-a)\Delta}{\epsilon}})$ linear segments where $\Delta = f{'}(b_{-}) - f^{'}(a_{+})$.
\end{theorem}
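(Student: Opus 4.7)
The plan is to construct $\ell$ as the upper envelope of finitely many tangent (support) lines of $f$. Since $f$ is convex, for any $x_i \in [a,b]$ and any subgradient $s_i \in \partial f(x_i)$, the affine function $T_i(x) = f(x_i) + s_i(x - x_i)$ satisfies $T_i(x) \leq f(x)$ on $[a,b]$. Taking a finite collection $T_0,\ldots,T_{k-1}$ and setting $\ell(x) = \max_{i} T_i(x)$ automatically yields a piecewise linear convex function with $\ell \leq f$. The real task is to choose the $k$ anchor points $a = x_0 < x_1 < \cdots < x_k = b$ so that the pointwise error $f - \ell$ stays below $\epsilon$ while $k = O(\sqrt{(b-a)\Delta/\epsilon})$.

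Next I would bound the error contributed by a single pair of adjacent tangents. Between $x_i$ and $x_{i+1}$, the envelope $\ell$ equals $\max(T_i, T_{i+1})$; the worst-case gap $f(x)-\ell(x)$ is attained at the intersection point $z_i$ of the two tangents. Writing $h_i = x_{i+1}-x_i$ and $\delta_i = s_{i+1}-s_i$ (both nonnegative by convexity), a direct calculation using the fact that both $T_i$ and $T_{i+1}$ are supporting lines for $f$ shows that $\max_{x\in[x_i,x_{i+1}]}\bigl(f(x)-\ell(x)\bigr) \leq \tfrac{1}{4}h_i\delta_i$: the chord from $(x_i,f(x_i))$ to $(x_{i+1},f(x_{i+1}))$ lies above $f$, and its vertical distance to the tangent intersection is exactly $\tfrac{1}{4}h_i\delta_i$ in the worst case. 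Thus it suffices to pick the anchors so that $h_i\delta_i \leq 4\epsilon$ for every $i$.

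This motivates the following greedy rule: start at $x_0 = a$, and at step $i$ choose $x_{i+1}$ to be the largest value in $(x_i,b]$ such that $(x_{i+1}-x_i)\bigl(f'(x_{i+1}^{-})-f'(x_i^{+})\bigr) \leq 4\epsilon$, stopping when $x_{i+1}=b$. By monotonicity of the left-hand side in $x_{i+1}$, such a point is found by a simple search (e.g.\ bisection using oracle access to $f$ and to one-sided derivatives of $f$). Each completed step either reaches $b$ or produces a piece attaining $h_i\delta_i = 4\epsilon$. To count the pieces, apply Cauchy--Schwarz to the partial pieces:
\[
 \sum_{i=0}^{k-1}\sqrt{h_i\delta_i} \;\leq\; \sqrt{\sum_{i}h_i}\;\sqrt{\sum_{i}\delta_i}\;\leq\; \sqrt{(b-a)\Delta},
\]
since $\sum_i h_i = b-a$ and $\sum_i \delta_i \leq f'(b^{-})-f'(a^{+}) = \Delta$. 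At least $k-1$ pieces saturate $h_i\delta_i = 4\epsilon$, so $(k-1)\cdot 2\sqrt{\epsilon} \leq \sqrt{(b-a)\Delta}$, giving $k = O(\sqrt{(b-a)\Delta/\epsilon})$ as required.

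The main obstacle I anticipate is making the error bound $h_i\delta_i/4$ fully rigorous without differentiability assumptions on $f$: for a merely convex $f$ one must work with left- and right-derivatives and the support-line characterization, verifying that the chord-versus-tangent-intersection estimate still holds. A secondary issue is implementability of the greedy step when the anchor cannot be found in closed form; this is handled by noting that the map $t\mapsto (t-x_i)\bigl(f'(t^{-})-f'(x_i^{+})\bigr)$ is monotone and piecewise continuous, so binary search achieves any desired tolerance and only multiplies $k$ by a constant. With these care points addressed, combining the envelope construction with the Cauchy--Schwarz counting yields the claimed bound.
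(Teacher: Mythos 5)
The paper does not actually prove this statement; it is cited verbatim from Rote (1992), so there is no in-paper argument to compare against. Evaluated on its own terms, your proof is correct and follows the same tangent-envelope scheme that underlies Rote's sandwich algorithm.

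A few small points are worth tightening, two of which you already flag. First, the per-interval error bound: letting $m_i$ be the chord slope on $[x_i,x_{i+1}]$, the intersection of $T_i$ and $T_{i+1}$ sits at $z_i = x_i + \lambda h_i$ with $\lambda = (s_{i+1}-m_i)/\delta_i \in [0,1]$, and the vertical gap between chord and envelope there is exactly $\lambda(1-\lambda)\delta_i h_i \leq \tfrac14 h_i\delta_i$; since $f\leq$ chord and $\ell(z_i)=T_i(z_i)=T_{i+1}(z_i)$ this gives the bound, and monotonicity of $f-T_i$ on $[x_i,z_i]$ and of $f-T_{i+1}$ on $[z_i,x_{i+1}]$ (from $s_i\leq f'\leq s_{i+1}$ on the interval) shows the max of $f-\ell$ is indeed attained at $z_i$. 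Second, the saturation step as written is not literally correct when $f'$ has jumps: the greedy sup $x_{i+1}$ need not satisfy $h_i\delta_i = 4\epsilon$, because $g(t):=(t-x_i)\bigl(f'(t^-)-f'(x_i^+)\bigr)$ is only left-continuous. What you do get (from monotonicity and taking $t\to x_{i+1}^+$) is $(x_{i+1}-x_i)\bigl(f'(x_{i+1}^+)-f'(x_i^+)\bigr)\geq 4\epsilon$ whenever $x_{i+1}<b$. Redefining $\tilde\delta_i := f'(x_{i+1}^+)-f'(x_i^+)$ for $i<k-1$ and $\tilde\delta_{k-1}:=f'(b^-)-f'(x_{k-1}^+)$ makes $\sum_i\tilde\delta_i$ telescope exactly to $\Delta$, keeps $\tilde\delta_i\geq\delta_i$, and the Cauchy--Schwarz count goes through verbatim with $\tilde\delta_i$ in place of $\delta_i$. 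This also gives finite termination: if the greedy stalled at $x^*<b$ you would have infinitely many pieces each contributing at least $2\sqrt{\epsilon}$ to a sum bounded by $\sqrt{(b-a)\Delta}$. Finally, note your $\sum_i\delta_i\leq\Delta$ claim needs the (true but worth stating) inequality $f'(x_i^-)\leq f'(x_i^+)$ so the sum does not exceed the telescoped value. With these small repairs the proof is complete and matches the $O\bigl(\sqrt{(b-a)\Delta/\epsilon}\bigr)$ segment bound.
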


\subsection{Hashing}
%\vspace*{-0.1in}
\begin{definition}[Asymmetric Hashing]
Given a set of functions $\mathcal{H}\subset \{h:\mathcal{X}\to \mathcal{U}\}$ and a probability distribution   $\nu$ on $\mathcal{H}\times \mathcal{H}$, we write $(h, g)\sim \mathcal{H}_{\nu}$ to denote a random element   sampled from $\nu$, and call $\mathcal{H}_{\nu}$ a  hashing scheme on $\mathcal{X}$. 
\end{definition}
%\vspace*{-0.1in}

\begin{definition}[Hash Bucket] Given a finite set $X\subset \mathcal{X}$ and an element $(h,g)\in \mathcal{H}\times \mathcal{H}$,  we define for all $y\in \mathcal{X}$ the hash bucket of $X$ with respect to $y$ as $H_{X}(y):=\{x\in X| h(x)=g(y)\}$.  For such a hash bucket we write $X_{0}\sim H_{X}(y)$ to denote the random variable $X_{0}$ that is uniformly distributed in $H_{X}(y)$ when the set is not empty and equal to $\perp$ when it is.
\end{definition}
%\vspace*{-0.1in}
 The \emph{collision probability} of  a hashing scheme $\mathcal{H}_{\nu}$ on $\mathcal{X}$ is  defined by $p_{\mathcal{H}_{\nu}}(x,y):=\P_{(h,g)\sim \mathcal{H}_{\nu}}[h(x)=g(y)]$ for all $x,y\in \mathcal{X}$.  Whenever it is clear from the context we will omit $\nu$ from $\mathcal{H}_{\nu}$ and $X$ from $H_{X}(y)$.  We also define $\mathcal{H}^{\otimes k}$ to denote the hashing scheme resulting from stacking $k$ independent hash functions from $\mathcal{H}$. For such hashing schemes we have    $p_{\mathcal{H}^{\otimes k}}(x,y)=[p_{\mathcal{H}}(x,y)]^{k}$ for $x,y\in \mathcal{X}$. 
\vspace*{-0.1in}

\subsection{Multi-resolution Hashing Based Estimators}
We define next the class of estimators that we employ.
\begin{definition}
Given hashing schemes $\mathcal{H}_{1},\ldots,\mathcal{H}_{T}$, with collision probabilities $p_{1},\ldots,p_{T}:\mathcal{X}\times \mathcal{X}\to [0,1]$, and weight  functions $w_{1},\ldots, w_{T}:\mathcal{X}\times \mathcal{X}\to \R_{+}$, we define   for a given set $X\subseteq \mathcal{X}$, the Multi-Resolution Hashing-Based-Estimator   for all $y\in \mathcal{X}$ as:
\begin{equation}
Z_{T}(y):=\frac{1}{|X|}\sum_{t=1}^{T} \frac{w_{t}(X_{t},y )}{p_{t}(X_{t}, y)}|H_{t}(y)|
\end{equation}
where $X_{t} \sim H_{t}(y)=(H_{t})_{X}(y)$ and by setting   $w_{t}(\perp, \cdot)=p_{t}(\perp, \cdot)=1$ for $t\in [T]$. We denote such an estimator by $Z_{T}\sim \mathrm{HBE}_{X}(\{\mathcal{H}_{t},p_{t},w_{t}\}_{t\in [T]})$.
\end{definition}
Again we drop the dependence on $X$ when it is clear from the context. Manipulating  conditional expectations gives us the following basic properties for such estimators.
\begin{lemma}[Moments]\label{lem:moments}
For any $y\in \mathcal{X}$ and $x\in X$ let $T(x,y)=:\{t\in[T]| p_{t}(x, y)>0\}$ and assume that $\forall x\in X,   \sum_{t\in T(x,y)}w_{t}(x,y) = w(x,y) $ for a non-negative function $w:\mathcal{X}\times \mathcal{X}\to \R_{+}$. Then, 
\begin{align}
\E[Z_{T}(y)] &= \mu:= \frac{1}{|X|}\sum_{x\in X}w(x,y)\label{eq:first}\\
\E[Z^{2}_{T}(y)]   &\leq \frac{1}{|X|^{2}}\sum_{x\in X} \left(\sum_{t \in T(x,y)} \frac{w_{t}^{2}(x,y)}{p_{t}(x,y)}  \sum_{z\in X}\frac{\min\{p_{t}(z,y), p_{t}(x,y)\}}{p_{t}(x,y)}\right)  +   \mu^{2}\label{eq:second}
\end{align}
\end{lemma}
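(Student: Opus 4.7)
The plan is to establish both moments by conditioning on the random hash functions $h_t,g_t$ (equivalently, on the buckets $H_t(y)$) and exploiting that the $T$ hashing schemes $\mathcal{H}_1,\dots,\mathcal{H}_T$ are drawn independently. Throughout, write $Y_t = \frac{w_t(X_t,y)}{p_t(X_t,y)}|H_t(y)|$, so that $Z_T(y) = \frac{1}{|X|}\sum_{t=1}^T Y_t$, and use the convention that $Y_t=0$ when $H_t(y)=\emptyset$ (since then $|H_t(y)|=0$ and the placeholder $X_t=\perp$ contributes nothing).

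For \eqref{eq:first}, I would condition on $H_t(y)$. Because $X_t$ is uniform on $H_t(y)$ when nonempty, $\E[Y_t \mid H_t(y)] = \sum_{x\in H_t(y)} \tfrac{w_t(x,y)}{p_t(x,y)}$ (the factor $|H_t(y)|$ cancels the uniform probability $1/|H_t(y)|$). Taking expectation and switching the order of summation gives $\E[Y_t] = \sum_{x\in X}\Pr[x\in H_t(y)]\frac{w_t(x,y)}{p_t(x,y)}$, and since $\Pr[x\in H_t(y)] = p_t(x,y)$ for $x$ with $p_t(x,y)>0$, this reduces to $\sum_{x\in X:\,t\in T(x,y)} w_t(x,y)$. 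Summing over $t$ and using the hypothesis $\sum_{t\in T(x,y)}w_t(x,y) = w(x,y)$ yields $\E[Z_T(y)] = \mu$.

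For \eqref{eq:second}, expand $\E[Z_T^2(y)] = \tfrac{1}{|X|^2}\big(\sum_t \E[Y_t^2] + \sum_{t\neq s}\E[Y_tY_s]\big)$. The cross terms are straightforward once one invokes that $\mathcal{H}_t$ and $\mathcal{H}_s$ are independent for $t\neq s$: then $Y_t$ and $Y_s$ are independent, so $\sum_{t\neq s}\E[Y_tY_s] \leq \bigl(\sum_t \E[Y_t]\bigr)^2 = (|X|\mu)^2$, contributing the $\mu^2$ term. The real work is bounding the diagonal $\E[Y_t^2]$. Conditioning on $H_t(y)$ and using uniform sampling,
\[
\E[Y_t^2 \mid H_t(y)] = \tfrac{1}{|H_t(y)|}\sum_{x\in H_t(y)} \tfrac{w_t^2(x,y)}{p_t^2(x,y)}\cdot |H_t(y)|^2 = |H_t(y)|\sum_{x\in H_t(y)}\tfrac{w_t^2(x,y)}{p_t^2(x,y)}.
\]
Writing $|H_t(y)| = \sum_{z\in X}\mathbf{1}[z\in H_t(y)]$ and taking expectations gives $\E[Y_t^2] = \sum_{x,z\in X}\Pr[x,z\in H_t(y)]\,\tfrac{w_t^2(x,y)}{p_t^2(x,y)}$, where only $x$ with $t\in T(x,y)$ contribute.

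The main obstacle, and the step that forces an inequality rather than equality in \eqref{eq:second}, is that the two events $\{x\in H_t(y)\}$ and $\{z\in H_t(y)\}$ share the randomness of $(h_t,g_t)$ and in general their joint probability has no closed form in terms of $p_t$ alone. I would use the elementary bound $\Pr[x,z\in H_t(y)] \leq \min\{p_t(x,y),\,p_t(z,y)\}$, which gives
\[
\E[Y_t^2] \leq \sum_{x:\,t\in T(x,y)} \tfrac{w_t^2(x,y)}{p_t(x,y)} \sum_{z\in X} \tfrac{\min\{p_t(z,y),p_t(x,y)\}}{p_t(x,y)}.
\]
Summing over $t$, swapping the $t$ and $x$ sums, and adding the $\mu^2$ cross-term contribution yields \eqref{eq:second} as stated.
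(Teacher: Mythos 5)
Your proof is correct and takes essentially the same route as the paper: condition on the hash tables, split $\E[Z_T^2]$ into the diagonal terms (handled via $\E[Y_t^2\mid H_t(y)]$ and the elementary bound $\Pr[x\in H_t(y),\, z\in H_t(y)]\le\min\{p_t(x,y),p_t(z,y)\}$) and the off-diagonal cross terms (handled by independence of the $\mathcal H_t$, contributing $\mu^2$). Your step $\E[Y_t^2]=\sum_{x,z}\Pr[x,z\in H_t(y)]\,w_t^2(x,y)/p_t^2(x,y)$ is arithmetically the same as the paper's $\sum_x \frac{w_t^2(x,y)}{p_t(x,y)}\E[|H_t(y)|\mid x\in H_t(y)]$, just unrolled one step further.
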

The upper bound on the variance comes from  $\E\left[|H_{t}(y)|\bigr| x\in H_{t}(y)\right] \leq \sum_{z\in X}\frac{\min\{p_{t}(z,y), p_{t}(x,y)\}}{p_{t}(x,y)}$.

\subsection{Distance Sensitive Hashing on the unit Sphere}\label{sec:hashing}
\vspace*{-0.1in}
In this  subsection, we describe the hashing scheme of Aumuller et al.~\cite{aumuller2018distance} (see also~\cite{andoni2014beyond,andoni2015optimal})  and give slightly different bounds on the collision probability that are more appropriate for our purposes.  

\paragraph{LSH for unit sphere}We define the hash family $\mathcal{D}_{+}=\mathcal{D}_{+}(t,\zeta)$ that takes as parameters real numbers $t>0$, $\zeta\in(0,1)$ and defines a pair of hash functions $h_{+}:\mathcal{S}^{d-1}\to  [m]\cup \{m+1\}$ and $g_{+}:\mathcal{S}^{d-1}\to [m]\cup\{m+2\}$, where $m$ is given by
\begin{equation}\label{eq:spherical_caps}
m(t,\zeta )= \left\lceil \sqrt{2\pi}(t+1)\log(\frac{2}{\zeta}) e^{\frac{t^{2}}{2}}\right\rceil  
\end{equation}
To define the functions $h_{+},g_{+}$, we sample $m$ normal random vectors $g_{1},\ldots, g_{m}\stackrel{i.i.d.}{\sim}\mathcal{N}(0, I_{d})$ and use them to create $m+2$ hash buckets through the mappings
\begin{align}
h_{+}(x) &:= \min\left( \{i\bigr|\langle x, g_{i}\rangle\geq t \}\cup \{m+1\}\right)\\
g_{+}(x) &:= \min \left( \{i\bigr|\langle x, g_{i}\rangle\geq t \}\cup \{m+2\}\right)
\end{align}
The time and memory required for evaluating the function are both bounded by $O(dm)=O(dt \log(\frac{1}{\zeta})e^{\frac{t^{2}}{2}})$. We also define the hash family $\mathcal{D}_{-}(t,\zeta)$ that is identical to $\mathcal{D}_{+}$ except from the fact that  instead of using $g_{+}$ we use:
\begin{equation}
g_{-}(x) := \min \left( \{i\bigr|\langle x, g_{i}\rangle\geq -t \}\cup \{m+2\}\right)
\end{equation}
The need to use a pair of hash functions arises from the fact that we treat the points in the dataset $X$ and the queries differently. We will write $(h,g)\sim \mathcal{D}_{s}$ for $s\in\{+,-\}$ to indicate such pairs of hash functions. Due to isotropy of the normal distribution the collision probability   only depends on $\langle x,y\rangle$,
\begin{equation}
\P_{(h,g)\sim D_{\pm}}[h(y)=g(x)] = p_{\pm}(\langle x, y\rangle)
\end{equation}
and satisfies $p_{+}(\rho)=p_{-}(-\rho)$ for all $\rho\in[-1,1]$. Utilizing results for Gaussian integrals \cite{szarek1999nonsymmetric,hashorva2003multivariate}, we obtain the following explicit bounds.

\begin{lemma}[Pointwise bounds]\label{lem:collision_basic}
 The collision probability $p_{+}(\rho)$ is  decreasing   and for $\delta>0$ satisfies:
\begin{align}\label{eq:collision_center}
 \frac{\sqrt{2}(1-\zeta)\delta^{2}}{148\sqrt{\pi}} e^{-\frac{1-\rho}{1+\rho}\frac{t^{2}}{2}}&\leq p_{+}(\rho) \leq \frac{2}{\sqrt{\pi}\sqrt{\delta}} e^{-\frac{1-\rho}{1+\rho}\frac{t^{2}}{2}},  \qquad \forall |\rho|\leq 1-\delta\\\label{eq:collision_right}
 \frac{1-\zeta}{2\sqrt{2\pi}(1+\sqrt{2})} e^{-\frac{\delta}{2-\delta} \frac{t^{2}}{2}}&\leq p_{+}(\rho)\leq 1, \qquad \qquad\qquad \hspace{0.5cm}   \forall 1-\delta < \rho \leq 1\\\label{eq:collision_left}
0&\leq p_{+}(\rho) \leq \frac{2}{\sqrt{\pi}\sqrt{\delta}} e^{-\frac{2-\delta}{\delta}\frac{t^{2}}{2}},\qquad    \forall -1\leq \rho \leq -1+\delta
\end{align}    
\end{lemma}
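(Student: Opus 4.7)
The plan is to compute $p_+(\rho)$ as a closed-form expression in the univariate and bivariate Gaussian tails, and then invoke quantitative bounds on those tails. Writing $g\sim \mathcal{N}(0,I_d)$, let $q_2(\rho):=\Pr[\langle g,x\rangle\ge t,\,\langle g,y\rangle\ge t]$ and $q_1(\rho):=\Pr[\langle g,x\rangle\ge t\text{ or }\langle g,y\rangle\ge t]=2\Phi^c(t)-q_2(\rho)$. A collision $h_+(y)=g_+(x)=i$ requires that for every index $j<i$ neither $\langle g_j,x\rangle$ nor $\langle g_j,y\rangle$ exceeds $t$, and that at $j=i$ both do. Independence of $g_1,\ldots,g_m$ then yields
\[
p_+(\rho)\;=\;q_2(\rho)\sum_{i=1}^m(1-q_1(\rho))^{i-1}\;=\;\frac{q_2(\rho)}{q_1(\rho)}\bigl(1-(1-q_1(\rho))^m\bigr).
\]
The monotonicity claim on $p_+$ reduces to monotonicity of $q_2/q_1$ in $\rho$, which follows from a standard coupling of bivariate normals or from Slepian's lemma applied to $(\langle g,x\rangle,\langle g,y\rangle)$.

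Next I would substitute explicit two-sided estimates for $\Phi^c(t)$ (Mills ratio: $\Phi^c(t)=\Theta(1/t)e^{-t^2/2}$ with quantitative constants $\tfrac{1}{\sqrt{2\pi}}\cdot\tfrac{1}{t}$ on top and $\tfrac{1}{\sqrt{2\pi}}\cdot\tfrac{t}{t^2+1}$ on the bottom) and for $q_2(\rho)$ using the bivariate tail bounds of~\cite{szarek1999nonsymmetric,hashorva2003multivariate}. A clean upper bound on $q_2(\rho)$ comes from projecting $g$ onto the direction $(x+y)/\|x+y\|$: if both $\langle g,x\rangle,\langle g,y\rangle\ge t$, then $\langle g,(x+y)/\|x+y\|\rangle\ge t\sqrt{2/(1+\rho)}$, so $q_2(\rho)\le \Phi^c(t\sqrt{2/(1+\rho)})$, which is $\Theta(\sqrt{1+\rho}/t)\cdot e^{-t^2/(1+\rho)}$. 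A matching lower bound is obtained by explicit two-dimensional Gaussian integration in the orthonormal basis adapted to $\{x,y\}$. The identity $\tfrac{1}{1+\rho}-\tfrac12=\tfrac12\cdot\tfrac{1-\rho}{1+\rho}$ then converts the exponent in $q_2(\rho)/\Phi^c(t)$ into the advertised factor $e^{-\frac{1-\rho}{1+\rho}\frac{t^2}{2}}$ appearing throughout the three cases.

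Finally, I would case-split by regime. For the center $|\rho|\le 1-\delta$, use $1\pm\rho\ge\delta$ and hence $\sqrt{1-\rho^2}\ge\sqrt{\delta(2-\delta)}$ together with $q_1(\rho)\ge\Phi^c(t)$ to obtain the $\tfrac{2}{\sqrt{\pi}\sqrt{\delta}}$ prefactor in the upper bound of~\eqref{eq:collision_center}. For the matching lower bound, verify that the choice of $m$ in~\eqref{eq:spherical_caps} is large enough that $m\cdot\Phi^c(t)\gtrsim \log(1/\zeta)$, which forces $(1-q_1(\rho))^m\le e^{-mq_1(\rho)}\le\zeta$ and produces the $(1-\zeta)$ factor; the constant $148$ absorbs the product of Mills-ratio constants from $q_2$ and $1/\Phi^c(t)$. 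The right-regime bound~\eqref{eq:collision_right} now follows from monotonicity of $p_+$ by evaluating the center lower bound at $\rho=1-\delta$, producing the exponent $\tfrac{\delta}{2-\delta}\tfrac{t^2}{2}$. The left-regime upper bound~\eqref{eq:collision_left} follows similarly by monotonicity from the center upper bound at $\rho=-1+\delta$, using $\tfrac{1-\rho}{1+\rho}\ge\tfrac{2-\delta}{\delta}$.

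The main obstacle is entirely quantitative: matching the explicit constants ($148$, $1+\sqrt 2$, the $\tfrac{\sqrt 2}{\sqrt\pi}$ and $\tfrac{2}{\sqrt\pi}$ prefactors) requires using the two-sided non-asymptotic forms of the Szarek--Werner and Hashorva inequalities rather than $\Theta$-notation, together with careful bookkeeping of Mills-ratio constants when combining $q_2(\rho)$ and $\Phi^c(t)$. The conceptual ingredients---the geometric-sum identity, monotonicity via coupling, and the projection bound for $q_2$---are otherwise routine.
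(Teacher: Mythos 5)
Your decomposition of $p_+$ via the geometric sum, $p_+(\rho)=\tfrac{q_2(\rho)}{q_1(\rho)}\bigl(1-(1-q_1(\rho))^m\bigr)$, is correct and in fact a bit cleaner than the paper's own first equation, which at face value factors $p_+$ as $\P[h(x)\le m\wedge g(y)\le m]\cdot q_2/q_1$ and is not an exact identity (a short inclusion--exclusion computation shows these agree only in the limit $m\to\infty$). Bounding the saturation factor by $1-(1-q_1)^m\ge 1-e^{-m\Phi^c(t)}\ge 1-\zeta$ from the choice of $m$ in \eqref{eq:spherical_caps}, and the ratio $q_2/q_1$ with Propositions~\ref{prop:szarek} and \ref{prop:tails_two}, does give \eqref{eq:collision_center} with the stated constants. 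This is the same core as the paper's argument.

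There is, however, a genuine gap in your treatment of the two boundary regimes. You propose to obtain \eqref{eq:collision_right} and \eqref{eq:collision_left} purely by monotonicity from the central bound evaluated at $\rho=1-\delta$ and $\rho=-1+\delta$. For \eqref{eq:collision_left} this works: since $p_+$ is increasing in $\rho$, $p_+(\rho)\le p_+(-1+\delta)\le\tfrac{2}{\sqrt{\pi}\sqrt{\delta}}e^{-\frac{2-\delta}{\delta}\frac{t^2}{2}}$, matching the stated bound exactly. But for \eqref{eq:collision_right} the same reasoning fails: monotonicity gives $p_+(\rho)\ge p_+(1-\delta)\ge \tfrac{\sqrt{2}(1-\zeta)\delta^2}{148\sqrt{\pi}}e^{-\frac{\delta}{2-\delta}\frac{t^2}{2}}$, whose prefactor $\tfrac{\sqrt 2\,\delta^2}{148\sqrt\pi}\approx 0.0054\,\delta^2$ is strictly \emph{smaller} than the claimed $\tfrac{1}{2\sqrt{2\pi}(1+\sqrt 2)}\approx 0.083$ for every admissible $\delta\le 1$, and the shortfall blows up as $\delta\to 0$. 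The source of the loss is the $\delta^2$ factor inherited from the Hashorva lower bound, which one cannot afford near $\rho=1$. The paper's proof instead performs a \emph{direct} two-dimensional Gaussian estimate in that regime, writing $(X_1,X_2)=(Z_1,\rho Z_1+\sqrt{1-\rho^2}Z_2)$ and lower-bounding $q_2/q_1\ge\tfrac12\P\bigl[Z_2\ge\sqrt{\tfrac{1-\rho}{1+\rho}}\,t\bigr]$, to which Proposition~\ref{prop:szarek} applies without any $\delta$-degradation; this is what produces the stated constant $\tfrac{1-\zeta}{2\sqrt{2\pi}(1+\sqrt 2)}$. Your argument as written does not prove \eqref{eq:collision_right}.

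Two smaller remarks. First, the monotonicity claim ``reduces to monotonicity of $q_2/q_1$'' is not quite right, since the second factor $1-(1-q_1)^m$ also depends on $\rho$ (and in the opposite direction); the clean argument is that $p_+(\rho)=q_2(\rho)\sum_{i=1}^m(1-q_1(\rho))^{i-1}$ is a product of two quantities that are both nondecreasing in $\rho$ (the lemma's word ``decreasing'' is presumably a typo or refers to the angle, as the displayed bounds make $p_+$ increase in $\rho$). Second, your projection bound $q_2\le\Phi^c(t\sqrt{2/(1+\rho)})$ yields a $t$-dependent prefactor that does not directly reproduce the $\delta$-dependent prefactor $\tfrac{2}{\sqrt{\pi}\sqrt{\delta}}$; you must, as you note, fall back on the Hashorva two-sided estimate, so the projection bound is only heuristic here.
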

The family $D_{+}$ tends to map correlated points to the same bucket, whereas $D_{-}$ tends to map anti-correlated points together. Combining the two hash families, Aumuller et al.~\cite{aumuller2018distance} created a \emph{Distance Sensitive Hashing} scheme.
\paragraph{DSH for unit sphere}    
Given real numbers $t,\gamma>0$ and $\zeta\in (0,1/2)$, we define the following hash family $\mathcal{D}_{\gamma}(t,\zeta)$ by sampling a $(h_{+},g_{+})\sim \mathcal{D}_{+}(t,\zeta)$ and $(h_{-},g_{-})\sim \mathcal{D}_{-}(\gamma t, \zeta)$. We create the hash functions by $h_{\gamma}(x) :=(h_{+}(x), h_{-}(x))$ and $g_{\gamma}(x):=(g_{+}(x),g_{-}(x))$ and write $(h_{\gamma}, g_{\gamma})\sim \mathcal{D}_{\gamma}(t,\zeta)$. Define the collision probability $p_{\gamma,t}(\rho):=\P_{(h_{\gamma},g_{\gamma})\sim \mathcal{D}_{\gamma}(t,\zeta)}[h_{\gamma}(x)=g_{\gamma}(y)]$.
\begin{corollary}\label{cor:sensitive_bounds}
 Given constants $\gamma,t>0$ and $\zeta \in (0,\frac{1}{2})$ define $t_{\gamma}=t\max\{\gamma,1\}$ a pair of hash functions $(h_{\gamma},g_{\gamma})\sim \mathcal{D}_{\gamma}(t, \zeta)$ can be evaluated using space and time $O(dt_{\gamma}\log(\frac{1}{\zeta})e^{ t_{\gamma}^{2}/2})$. Furthermore, for $\delta>0$ let $C_{1}(\delta):= \left(\frac{148\sqrt{\pi}}{\sqrt{2}(1-\zeta)\delta^{2}}\right)^{2}$ depending only on $\zeta, \delta$ such that:
\begin{align}
\frac{1}{C_{1}} e^{-\left(\frac{1-\rho}{1+\rho}+\gamma^{2}\frac{1+\rho}{1-\rho}\right)\frac{t^{2}}{2}}\leq p_{\gamma,t}(\rho)&\leq C_{1} e^{-\left(\frac{1-\rho}{1+\rho}+\gamma^{2}\frac{1+\rho}{1-\rho}\right)\frac{t^{2}}{2}},  \ &&\forall \ |\rho|\leq 1-\delta\label{eq:tight_sensitive}\\
p_{\gamma,t}(\rho) &\leq  \sqrt{C_{1}(\delta)} e^{-\frac{2-\delta}{\delta}\frac{t_{\gamma}^{2}}{2}}, &&\forall  |\rho|>1-\delta\label{eq:tail_sensitive}
\end{align}
\end{corollary}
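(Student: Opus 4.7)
My plan is to exploit the product structure of $\mathcal{D}_\gamma(t,\zeta)$. By construction this family stacks two independent components $(h_+,g_+)\sim \mathcal{D}_+(t,\zeta)$ and $(h_-,g_-)\sim \mathcal{D}_-(\gamma t,\zeta)$, so a collision in the combined family occurs if and only if both components collide. Using independence and the identity $p_-(\rho)=p_+(-\rho)$ noted before Lemma \ref{lem:collision_basic}, this gives the factorization
\begin{equation*}
p_{\gamma,t}(\rho) \;=\; p_+(\rho;t)\cdot p_+(-\rho;\gamma t).
\end{equation*}
The time and space bound follows immediately by summing the cost $O(dm(\cdot,\zeta))$ of each component: since $m(s,\zeta)=\Theta(s\log(1/\zeta) e^{s^2/2})$, the exponential factor in the larger parameter $t_\gamma=t\max\{\gamma,1\}$ dominates the polynomial prefactor, which yields the claimed $O(dt_\gamma\log(1/\zeta)e^{t_\gamma^2/2})$ bound after taking a maximum of the two resources.

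For the main regime $|\rho|\le 1-\delta$, I would plug both $\rho$ and $-\rho$ (each of which lies in $[-(1-\delta),1-\delta]$) into the center estimate \eqref{eq:collision_center} of Lemma \ref{lem:collision_basic}, once with parameter $t$ and once with $\gamma t$. The exponents add to the target $-\left(\frac{1-\rho}{1+\rho}+\gamma^{2}\frac{1+\rho}{1-\rho}\right)\frac{t^{2}}{2}$, so multiplying the two lower bounds produces a prefactor of $\left(\sqrt{2}(1-\zeta)\delta^{2}/(148\sqrt{\pi})\right)^{2}=1/C_1(\delta)$, while multiplying the two upper bounds produces $\left(2/\sqrt{\pi\delta}\right)^{2}$, which for the ranges of $\delta,\zeta$ in play is numerically dominated by $C_1(\delta)$. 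Thus both the center inequalities follow by a direct substitution with a common constant.

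For the tail regime $|\rho|>1-\delta$, I would split by the sign of $\rho$. If $\rho>1-\delta$, then $-\rho<-1+\delta$ falls in the left-tail range of Lemma \ref{lem:collision_basic}: I bound $p_+(\rho;t)\le 1$ via \eqref{eq:collision_right} and apply \eqref{eq:collision_left} to the second factor at scale $\gamma t$. If instead $\rho<-1+\delta$, the roles of the two factors swap, so \eqref{eq:collision_left} is applied at scale $t$ and \eqref{eq:collision_right} controls the other factor by $1$. In each subcase one factor contributes at most $1$ and the other contributes the requisite exponential tail with the $\frac{2-\delta}{\delta}$ rate, and the prefactor $2/\sqrt{\pi\delta}$ is absorbed into $\sqrt{C_1(\delta)}$. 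The only non-mechanical piece of the proof is this bookkeeping: verifying that a single constant $C_1(\delta)=(148\sqrt{\pi}/(\sqrt{2}(1-\zeta)\delta^{2}))^{2}$ simultaneously dominates the prefactors generated on both sides of the center estimate and on both subcases of the tail estimate, so that all four bounds can be stated uniformly with the same constant.
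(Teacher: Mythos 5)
Your decomposition and invocation of Lemma~\ref{lem:collision_basic} match the paper's one-line proof exactly, and your tracking of prefactors in the centre regime is correct: the paper actually sets $C_1(\delta)=\max\bigl\{\bigl(\sqrt{2}(1-\zeta)\delta^2/(148\sqrt{\pi})\bigr)^{-2},\,\bigl(2/\sqrt{\pi\delta}\bigr)^2\bigr\}$, and the first term dominates for every $\delta\leq1$ and $\zeta<\tfrac12$, which is why the corollary can state it in the closed form you quote.

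The gap is in the tail regime, exactly in the ``bookkeeping'' you flag as the non-mechanical step but then carry out only on the prefactor. Trace your two subcases: for $\rho>1-\delta$, the left-tail bound \eqref{eq:collision_left} applies to the factor at scale $\gamma t$ and yields exponent $-\tfrac{2-\delta}{\delta}\tfrac{(\gamma t)^2}{2}$; for $\rho<-1+\delta$, it applies to the factor at scale $t$ and yields $-\tfrac{2-\delta}{\delta}\tfrac{t^2}{2}$. A single bound over all $|\rho|>1-\delta$ is thus governed by $t^2\min\{\gamma^2,1\}$, whereas \eqref{eq:tail_sensitive} asserts $t_\gamma^2=t^2\max\{\gamma^2,1\}$; since the exponent is negative, $\max$ gives a \emph{strictly smaller}, hence stronger, claim whenever $\gamma\neq1$. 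The other factor, which lands in the right-tail regime $(1-\delta,1]$, is controlled in Lemma~\ref{lem:collision_basic} only by the trivial upper bound $1$, so nothing in the lemma lets you transfer the larger scale onto the exponent. Your phrase ``the requisite exponential tail'' quietly assumes exactly the thing the corollary asserts: the product argument delivers the tail bound with $\min$ in place of $\max$. (The paper's own one-sentence proof has the identical gap, so you have faithfully reproduced its approach; but a complete proof would need either to correct the stated scale to $t\min\{\gamma,1\}$ or to sharpen the trivial factor, and you should not have called this part mechanical.)
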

 \begin{proof}
As we sample hash functions from the families $D_{+}(t,\zeta)$ and $D_{-}(\gamma t, \zeta)$ independently, the collision probability  $p_{\gamma,t}(\rho)=p_{+}(\rho)p_{-}(\rho)$ is the product of the two collision probabilities. Using Lemma \ref{lem:collision_basic} we get the required statement with $C_{1}(\delta):=\max\left\{ (\frac{\sqrt{2}(1-\zeta)\delta^{2}}{148\sqrt{\pi}} )^{-2}, (\frac{2}{\sqrt{\pi}\sqrt{\delta}} )^{2} \right\}$.
\end{proof}

\section{Variance of Multi-resolution HBE}\label{sec:variance}
In this section, we analyze the variance of Multi-resolution HBE and identify two key design principles: the \emph{$p^{2}$-weighting scheme}, and the \emph{scale-free property} of HBE, for which we give strong theoretical bounds on the variance. Our first step is to obtain a more tractable  bound on \eqref{eq:second}.
\begin{lemma}\label{lem:variance}
 Given   an $n$ point set $X$ and an unbiased $Z_{T}\sim \mathrm{HBE}_{X}(\{\mathcal{H}_{t},p_{t},w_{t}\}_{t\in [T]})$,  there exists explicit  $A\in \R^{n\times n}$ and vector $v\in \R^{n}_{++}$ such that: $\E[Z^{2}_{T}(y)] \leq  \sup_{\|f\|_{1}\leq 1, \|f\|_{v,1}\leq \mu}\{f^{\top} A f\} +   \mu^{2}$.
\end{lemma}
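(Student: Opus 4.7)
The plan is to repackage the moment bound in Lemma~\ref{lem:moments} as a quadratic form and exhibit a distinguished vector $f$ in the constraint polytope that already attains the right value; the supremum can only be larger, which is all we need.

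First I would index rows/columns of $A$ by the points of $X$ (so $n=|X|$) and read off the obvious choice
\[
A_{xz} \;:=\; \sum_{t\in T(x,y)} \frac{w_{t}^{2}(x,y)}{p_{t}^{2}(x,y)}\,\min\{p_{t}(z,y),\,p_{t}(x,y)\},
\qquad v_{x} \;:=\; w(x,y),
\]
with the understanding that $x$'s having $w(x,y)=0$ contribute trivially (they also force $w_t(x,y)=0$ for every $t$ by the partition-of-unity hypothesis, so the corresponding rows of $A$ vanish and I may either drop them or set $v_x$ to an arbitrarily small positive value to make $v\in\R^{n}_{++}$).

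Second, I would verify that the uniform vector $f_{x}=1/n$ lies in the feasible set of the variational problem: $\|f\|_{1}=1$ holds trivially, while
\[
\|f\|_{v,1} \;=\; \frac{1}{n}\sum_{x\in X} w(x,y) \;=\; \mu,
\]
by definition of $\mu$ in equation~(2.6). Plugging this $f$ into the quadratic form gives
\[
f^{\top}A f \;=\; \frac{1}{n^{2}}\sum_{x,z\in X} A_{xz} \;=\; \frac{1}{n^{2}}\sum_{x\in X}\sum_{t\in T(x,y)}\frac{w_{t}^{2}(x,y)}{p_{t}(x,y)}\sum_{z\in X}\frac{\min\{p_{t}(z,y),p_{t}(x,y)\}}{p_{t}(x,y)},
\]
which is exactly the bracketed quantity appearing in the bound~(2.8) on $\E[Z_{T}^{2}(y)]$.

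Finally, since $f$ is feasible, the supremum over all feasible $f'$ is at least $f^{\top}A f$, and combining with Lemma~\ref{lem:moments} yields
\[
\E[Z_{T}^{2}(y)] \;\leq\; f^{\top}Af + \mu^{2} \;\leq\; \sup_{\|f'\|_{1}\leq 1,\ \|f'\|_{v,1}\leq \mu} (f')^{\top}A f' \;+\; \mu^{2},
\]
as required. The only subtlety worth flagging is the requirement $v\in\R^{n}_{++}$: the natural choice $v_x=w(x,y)$ can vanish on some points, but those points drop out of the quadratic form and can be handled either by restricting to the support of $w(\cdot,y)$ or by an arbitrarily small perturbation of $v$, neither of which changes the identity attained by the uniform $f$. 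Beyond this bookkeeping the lemma is essentially a syntactic rewrite of (2.8), so I expect no technical obstacle.
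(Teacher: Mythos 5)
Your proof is correct and follows essentially the same route as the paper: both exhibit a particular feasible $f$ (the paper's ``fractions of points at each position'' reduces to your uniform $1/n$ when the $n$ points are distinct) and verify that $f^{\top}Af$ recovers the bound of Lemma~\ref{lem:moments}, with your $A$ being exactly the paper's matrix written without splitting $T(x,y)$ into the $L_{ij}/G_{ij}$ cases. One small nit: the $v\in\R^{n}_{++}$ boundary patch you sketch does not quite go through as stated (perturbing $v$ to be strictly positive makes the uniform $f$ infeasible since then $\|f\|_{v,1}>\mu$, while restricting to the support of $w(\cdot,y)$ drops nonzero column contributions from points with $w(x,y)=0$ but $p_{t}(x,y)>0$), though the paper leaves this same edge case unaddressed and it is immaterial for the intended application where $w>0$ everywhere.
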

\begin{proof}
Fix $x_{1},\ldots, x_{n}$ potential positions for the $n$ points in the dataset and let $f_{1},\ldots, f_{n}\in [0,1]$ be the fraction of points that are assigned to each of this positions. Moreover for any two positions $x_{i},x_{j}$ let $L_{ij}$ be the set of hash functions such that $p_{t}(x_{i},y)<p_{t}(x_{j},y)$ and $G_{ij}$ be the complement. We get:
\begin{align}
\sum_{j\in [n]}&\frac{\min\{p_{t}(x_{j},y), p_{t}(x_{i},y)\}}{p_{t}(x_{i},y)} \leq  \sum_{j\in [n]}nf_{j}\left( \mathbb{I}[t\in L_{ij}] + \mathbb{I}[t\in G_{ij}]\frac{p_{t}(x_{j},y)}{p_{t}(x_{i},y)}\right)\label{eq:conditional}
\end{align}
Using \eqref{eq:second}, and \eqref{eq:conditional}, the lemma follows by setting  $\nu_{i}=w(x_{i},y)$ and 
\begin{align}
A_{ij}=\sum_{t\in L_{ij}}\frac{w_{t}^{2}(x_{i},y)}{p_{t}(x_{i},y)} + \sum_{t\in G_{ij}}\frac{w_{t}^{2}(x_{i},y)}{p^{2}_{t}(x_{i},y)}p_{t}(x_{j},y)\label{eq:matrix}.
\end{align}
\end{proof}
The main question that the above lemma leaves open, is to how select the functions $\{w_{t}\}$ so that, the estimator is still unbiased, but the variance is minimized.
\subsection{The $p^{2}$-weighting scheme for HBE}
Our goal is to find a set of weights that are only a function of the query $y$ and any point $x$.  To select such a weights we first obtain the following upper bound on \eqref{eq:matrix}
\begin{equation}\label{eq:p2-upper}
\sum_{t\in L_{ij}} \frac{w_{t}^{2}(x_{i},y)}{p_{t}(x_{i},y)} + \sum_{t\in G_{ij}} \frac{w_{t}^{2}(x_{i},y)}{p^{2}_{t}(x_{i},y)}p_{t}(x_{j},y) \leq \sum_{t\in [T]} \frac{w_{t}^{2}(x_{i},y)}{p_{t}^{2}(x_{i},y)}
\end{equation}
The set of weights that minimize \eqref{eq:p2-upper} and for which the HBE is still unbiased  are given by: $w_{t}^{*}(x,y) = \frac{p_{t}^{2}(x,y)}{W(x,y)}w (x,y)$, where  $W(x,y):=\sum_{t\in T(x,y)} p_{t}^{2}(x,y)$. In what follows we denote any unbiased $\mathrm{HBE}_{X}(\{\mathcal{H}_{t},w_{t},p_{t}\}_{t\in [T]})$ with $w_{t}\propto p_{t}^{2}w $ as $\mathrm{HBE}^{2}_{X}(\{\mathcal{H}_{t},p_{t}\}_{t\in [T]})$. 
We aim to quantify precisely how well these estimators can perform by choosing $\{p_{t}\}$ judiciously. To that end, using Lemmas \ref{lem:variance} and  \ref{lem:holder},   we obtain the following   upper bound on the variance.
\begin{theorem}\label{thm:psquared}
 Given a set $X\subseteq S\subset \mathcal{X}$, and $Z_{T}\sim \mathrm{HBE}^{2}_{X}(\{\mathcal{H}_{t},p_{t}\}_{t\in [T]})$ let $p_{*}(x,y):=\sup_{t\in [T]}\{p_{t}(x,y)\}$, then for all $y\in Y$ such that $Z(y)=\mu>0$ and $f_{i}=f(x_{i}):=\min\{1,\frac{\mu}{w(x_{i},y)}\}$, we get:
\begin{align}
\E[Z^{2}_{T}(y)] \leq \mu^{2}+ 4 \sup_{x_{1},x_{2}\in S}\left\{f^{2}_{1}\frac{w^{2}(x_{1},y)}{p_{*}(x_{1},y)}  +f^{2}_{2}\frac{w^{2}(x_{2},y)}{p_{*}(x_{2},y)}  + f_{1}f_{2} \left(\frac{w^{2}(x_{1},y)}{p_{*}^{2}(x_{1},y)}
 +\frac{w^{2}(x_{2},y)}{p_{*}^{2}(x_{2},y)} \right)D_{T}(x_{1},x_{2})\right\} \nonumber
\end{align}
where $D_{T}(x_{1},x_{2}):=\max_{t\in [T]}\min\{p_{t}(x_{1},y),p_{t}(x_{2},y)\}\leq \min\{p_{*}(x_{1},y), p_{*}(x_{2},y)\}$.
\end{theorem}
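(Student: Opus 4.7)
\textbf{Proof plan for Theorem~\ref{thm:psquared}.} My starting point is \prettyref{lem:variance}, which gives $\E[Z_T^2(y)]\le \sup_{\|f\|_1\le 1,\|f\|_{v,1}\le \mu}\{f^\top A f\}+\mu^2$ with $A_{ij}$ defined in \eqref{eq:matrix}. The goal is to substitute the $p^2$-weighting $w_t(x,y)=\frac{p_t^2(x,y)}{W(x,y)}w(x,y)$ and simplify enough to apply \prettyref{lem:holder}. Plugging in gives $w_t^2/p_t = \frac{p_t^3 w^2}{W^2}$ and $w_t^2/p_t^2=\frac{p_t^2 w^2}{W^2}$, so the two sums in \eqref{eq:matrix} combine into a single expression. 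Specifically, on $L_{ij}$ we have $p_t(x_i,y)=\min\{p_t(x_i,y),p_t(x_j,y)\}$ and on $G_{ij}$ we have $p_t(x_j,y)=\min\{p_t(x_i,y),p_t(x_j,y)\}$, hence
\[
A_{ij} \;=\; \frac{w^2(x_i,y)}{W^2(x_i,y)}\sum_{t=1}^{T} p_t^2(x_i,y)\,\min\{p_t(x_i,y),p_t(x_j,y)\}\mper
\]
This is the key identity and the only really clever step; once it is in hand, everything else is straightforward factorization.

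Next I bound diagonal and off-diagonal entries separately. For the diagonal $i=j$, I use $p_t(x_i,y)\le p_*(x_i,y)$ inside the sum to get
\[
A_{ii}\le \frac{w^2(x_i,y)\,p_*(x_i,y)}{W^2(x_i,y)}\sum_t p_t^2(x_i,y) \;=\; \frac{w^2(x_i,y)\,p_*(x_i,y)}{W(x_i,y)}\le \frac{w^2(x_i,y)}{p_*(x_i,y)}\mcom
\]
since $W(x_i,y)=\sum_t p_t^2(x_i,y)\ge p_*^2(x_i,y)$. For $i\neq j$, I pull the min outside via $\min\{p_t(x_i,y),p_t(x_j,y)\}\le D_T(x_i,x_j)$ (the max over $t$), obtaining
\[
A_{ij}\le \frac{w^2(x_i,y)\,D_T(x_i,x_j)}{W(x_i,y)}\le \frac{w^2(x_i,y)}{p_*^2(x_i,y)}\,D_T(x_i,x_j)\mper
\]

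Finally, I apply \prettyref{lem:holder} to get $\sup\{f^\top A f\}\le 4\sup_{ij}\{f_i^* A_{ij} f_j^*\}$ with $f_i^*=\min\{1,\mu/w(x_i,y)\}$, and use the elementary inequality $f_i^* f_j^* A_{ij}\le f_i^{*\,2}A_{ii}+f_j^{*\,2}A_{jj}+f_i^* f_j^*(A_{ij}+A_{ji})$ (all terms are nonneg) to bundle the single-entry supremum into the symmetric form appearing in the theorem. Substituting the three bounds from the previous paragraph (and identifying $x_1,x_2$ with $x_i,x_j$) yields exactly the claimed RHS, with $D_T(x_1,x_2)\le \min\{p_*(x_1,y),p_*(x_2,y)\}$ following from $\min\{p_t(x_1,y),p_t(x_2,y)\}\le p_*(x_k,y)$ for $k=1,2$ and every $t$.

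The main (and essentially only) obstacle is recognizing the algebraic simplification in the first paragraph: the $L_{ij}/G_{ij}$ split in \eqref{eq:matrix} looks asymmetric, but under $p^2$-weighting it collapses into a clean $\sum_t p_t^2(x_i,y)\min\{p_t(x_i,y),p_t(x_j,y)\}$. Once that symmetry is exposed, the rest is bounding each factor by $p_*$ or $D_T$ and a routine application of \prettyref{lem:holder}.
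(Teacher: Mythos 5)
Your proof is correct and follows the same overall route as the paper: start from \prettyref{lem:variance}, substitute the $p^{2}$-weights, apply \prettyref{lem:holder}, and symmetrize to the two-point form. The one genuine improvement over the paper's own write-up is your collapsing identity
\[
A_{ij}=\frac{w^{2}(x_i,y)}{W^{2}(x_i,y)}\sum_{t\in[T]}p_t^{2}(x_i,y)\min\{p_t(x_i,y),p_t(x_j,y)\}\mcom
\]
which is exactly what the $L_{ij}/G_{ij}$ split encodes but makes the symmetry manifest up front. The paper instead carries the $L_{ij}/G_{ij}$ partition through the displayed expansion of $V_{ij}=\tilde A_{ii}+\tilde A_{jj}+\tilde A_{ij}+\tilde A_{ji}$ (in fact the first two terms of that display are restricted to $L_{ij}$, $G_{ij}$ rather than all $t$, which only makes them smaller, so the paper's subsequent upper bound over all $t\in[T]$ still goes through), and only recovers the quantity $D_T$ at the very end via $G_{ji}\subseteq L_{ij}$. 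Your version isolates the diagonal bound $A_{ii}\le w^{2}/p_*$ and the off-diagonal bound $A_{ij}\le (w^{2}/p_*^{2})\,D_T$ directly from the $\min$ identity, and the final symmetrization $f_i^*A_{ij}f_j^*\le f_i^{*2}A_{ii}+f_j^{*2}A_{jj}+f_i^*f_j^*(A_{ij}+A_{ji})$ matches the paper's $V_{ij}$ device. Same theorem, same mechanism, but your intermediate algebra is tighter and easier to verify.
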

\begin{proof}
 Using Lemma \ref{lem:holder} we get
\begin{align}
\sup_{\|f\|_{w,1}\leq \mu, \|f\|_{1}\leq 1}f^{\top}Af \leq 4 \sup_{ij}\left\{\min\{1,\frac{\mu}{w(x_{i},y)}\} |A_{ij}|\min\{1,\frac{\mu}{w(x_{j},y)}\}  \right\}
\end{align}
with $A_{ij}= \left(\sum_{t\in L_{ij}}\frac{w_{t}^{2}(x_{i},y)}{p_{t}(x_{i},y)} + \sum_{t\in G_{ij}}\frac{w_{t}^{2}(x_{i},y)}{p^{2}_{t}(x_{i},y)}p_{t}(x_{j},y) \right)$. Setting $f_{i}=\min\{1,\frac{\mu}{w(x_{i},y)}\}$ and 
 $\tilde{A}_{ij} = f_{i} |A_{ij}|f_{j}$, we get by the above $\sup_{\|f\|_{w,1}\leq \mu, \|f\|_{1}\leq 1}f^{\top}Af \leq 4 \sup_{ij}\left\{\tilde{A}_{ii}+\tilde{A}_{jj}+\tilde{A}_{ij}+\tilde{A}_{ji} \right\}$. 
Let $V_{ij}$ be the expression in brackets. For the $p^{2}$-weighting scheme $w_{t}(x,y)= \frac{p_{t}^{2}(x,y)}{W(x,y)}w(x,y)$ we get
\begin{align}
V_{ij} &= f_{i}^{2}\sum_{t\in L_{ij}}\frac{w^{2}(x_{i},y)}{W^{2}(x_{i},y)}p_{t}^{3}(x_{i},y) + f_{j}^{2} \sum_{t\in G_{ij}}\frac{w^{2}(x_{j},y)}{W^{2}(x_{j},y_)}p_{t}^{3}(x_{j},y) \\
&\qquad +f_{i}f_{j} \left(\sum_{t\in L_{ij}}\frac{w^{2}(x_{i},y)}{W^{2}(x_{i},y)}p_{t}^{3}(x_{i},y) + \sum_{t\in G_{ij}}\frac{w^{2}(x_{i},y)}{W^{2}(x_{i},y)}p_{t}^{2}(x_{i},y)p_{t}(x_{j},y) \right)\nonumber \\
&\qquad+ f_{j}f_{i} \left(\sum_{t\in L_{ji}}\frac{w^{2}(x_{j},y)}{W^{2}(x_{j},x)}p_{t}^{3}(x_{j},y) + \sum_{t\in G_{ji}}\frac{w^{2}(x_{j},y)}{W^{2}(x_{j},y)}p_{t}^{2}(x_{j},y)p_{t}(x_{i},y)\right)\nonumber
\end{align}
 Using $W(x,y)=\sum_{t\in [T]}p_{t}^{2}(x,y)\geq p_{*}^{2}(x,y)$ and $p_{t}\leq p_{*}(x,y)$
\begin{align}
V_{ij}&\leq f_{i}^{2}\frac{w^{2}(x_{i},y)}{W^{2}(x,y_{i})}p_{*}(x_{i},y)\sum_{t\in[T]}p_{t}^{2}(x_{i},y) + f_{j}^{2} \frac{w^{2}(x_{j},y)}{W^{2}(x_{j},y)}p_{*}(x_{j},y)\sum_{t\in [T]}p_{t}^{2}(x_{j},y)  \\
&\qquad +f_{i}f_{j} \frac{w^{2}(x_{i},y)}{W^{2}(x_{i},y)}\sum_{t\in [T]}p_{t}^{2}(x_{i},y)  \max\left\{\max_{t\in L_{ij}}p_{t}(x_{i},y), \max_{t\in G_{ij}}p_{t}(x_{j},y) \right\}\nonumber\\
&\qquad +f_{j}f_{i} \frac{w^{2}(x_{j},y)}{W^{2}(x_{j},y)}\sum_{t\in[T]}p_{t}^{2}(x_{j},y)  \max\left\{\max_{t\in L_{ji}}p_{t}(x_{j},y), \max_{t\in G_{ji}}p_{t}(x_{i},y) \right\}\nonumber\\
&\leq f_{i}^{2}\frac{w^{2}(x_{i},y)}{p_{*}(x_{i},y)}  +f_{j}^{2}\frac{w^{2}(x_{j},y)}{p_{*}(x_{j},y)} \\
&\qquad + f_{i}f_{j} \left(\frac{w^{2}(x_{i},y)}{p_{*}^{2}(x_{i},y)}+\frac{w^{2}(x_{j},y)}{p_{*}^{2}(x_{j},y)} \right)\max\left\{\max_{t\in L_{ij}}p_{t}(x_{i},y), \max_{t\in G_{ij}}p_{t}(x_{j},y) \right\}\nonumber
\end{align}
Since $G_{ji}\subseteq L_{ij}$ and vice versa, setting $D_{T}(x_{i},x_{j}):= \max\left\{\max_{t\in L_{ij}}p_{t}(x_{i},y), \max_{t\in L_{ji}}p_{t}(x_{j},y) \right\}$
 we arrive at the following bound on:
\begin{equation}
V_{ij} \leq f_{i}^{2}\frac{w^{2}(x_{i},y)}{p_{*}(x_{i},y)}  +f_{j}^{2}\frac{w^{2}(x_{j},y)}{p_{*}(x_{j},y)} + f_{i}f_{j} \left(\frac{w^{2}(x_{i},y)}{p_{*}^{2}(x_{i},y)}+\frac{w^{2}(x_{j},y)}{p_{*}^{2}(x_{j},y)} \right)D_{T}(x_{i},x_{j})
\end{equation}
To complete the proof we show the following:
\begin{align*}
D_{T}(x_{1},x_{2}) &= \max\left\{\max_{t\in L_{12}}p_{t}(x_{1},y), \max_{t\in L_{21}}p_{t}(x_{2},y) \right\}\\
&=\max\left\{\max_{t\in L_{12}}\min\{p_{t}(x_{1},y),p_{t}(x_{2},y)\}, \max_{t\in L_{21}}\min\{p_{t}(x_{1},y),p_{t}(x_{2},y)\}\right\}\\
&= \max_{t} \min\{p_{t}(x_{1},y),p_{t}(x_{2},y)\}
\end{align*}
Noticing that $ \max_{t} \min\{p_{t}(x_{1},y),p_{t}(x_{2},y)\}\leq p_{*}(x_{1},y)$ and $ \max_{t} \min\{p_{t}(x_{1},y),p_{t}(x_{2},y)\}\leq p_{*}(x_{2},y)$,  we get the statement.
\end{proof}

\subsection{Scale-free  Multi-Resolution Hashing}
The development above has revealed that the crucial parameter for consideration of $\mathrm{HBE}^{2}$ is the \emph{pointwise maximum hashing probability} $p_{*}(x,y)$. Here, we analyze a specific family of estimators where $p_{*}(x,y)$ has polynomial dependence with $w(x,y)$.
\begin{definition} Given $M\geq 1$, $\beta\in [0,1]$ and function $w$, an estimator $Z_{T}\sim \mathrm{HBE}_{X}^{2}(\{\mathcal{H}_{t},p_{t}\}_{t\in [T]})$ is called $(\beta, M)$-\emph{scale free}, if $M^{-1}\cdot w^{\beta}(x,y)\leq p_{*}(x,y)\leq M\cdot w^{\beta}(x,y)$ for all $x\in X$ and $y\in \mathcal{X}$.
\end{definition}
Exploiting the scale-free property we get explicit bounds on the variance.
\begin{theorem}[Scale-free]\label{thm:scale-free} Let $Z_{T}\sim  \mathrm{HBE}_{X}^{2}(\{\mathcal{H}_{t},p_{t}\}_{t\in [T]})$ be a $(\beta, M)$-scale free estimator, then:
\begin{equation*}
\E[Z^{2}_{T}(y)]\leq V_{\beta,M}(\mu):=8M^{3} \mu^{2} \left[\frac{1}{\mu^{\beta}}+\frac{1}{\mu^{1-\beta}} \right] +  \mu^{2}
\end{equation*}
\end{theorem}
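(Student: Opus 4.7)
The plan is to substitute the scale-free bounds directly into the variance estimate from \prettyref{thm:psquared} and reduce the problem to a purely algebraic case analysis on the relative magnitudes of $w(x_{1},y)$, $w(x_{2},y)$, and $\mu$. The scale-free hypothesis gives $p_{*}(x,y)\in[M^{-1}w^{\beta}(x,y),Mw^{\beta}(x,y)]$, hence $w^{2}(x,y)/p_{*}(x,y)\leq M w^{2-\beta}(x,y)$, $w^{2}(x,y)/p_{*}^{2}(x,y)\leq M^{2}w^{2-2\beta}(x,y)$, and $D_{T}(x_{1},x_{2})\leq M\min\{w^{\beta}(x_{1},y),w^{\beta}(x_{2},y)\}$ using the bound on $D_{T}$ in \prettyref{thm:psquared}.

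For the two diagonal terms $f_{i}^{2}w^{2}(x_{i},y)/p_{*}(x_{i},y)$, I would show the uniform bound $f_{i}^{2}w^{2-\beta}(x_{i},y)\leq \mu^{2-\beta}$ by splitting on the definition of $f_{i}=\min\{1,\mu/w(x_{i},y)\}$: when $w(x_{i},y)\leq \mu$ we have $f_{i}=1$ and $w^{2-\beta}(x_{i},y)\leq \mu^{2-\beta}$ since $2-\beta\geq 0$; when $w(x_{i},y)>\mu$ the expression equals $\mu^{2}/w^{\beta}(x_{i},y)\leq \mu^{2-\beta}$ since $w(x_{i},y)\geq \mu$ and $\beta\geq 0$. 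Each diagonal term therefore contributes at most $M\mu^{2-\beta}$. For the cross term I would assume WLOG $w(x_{1},y)\leq w(x_{2},y)$ and use $D_{T}\leq Mw^{\beta}(x_{1},y)$; after simplification this bounds the cross term by $M^{3}f_{1}f_{2}[w_{1}^{2-\beta}+w_{2}^{2-2\beta}w_{1}^{\beta}]$. The first summand $f_{1}f_{2}w_{1}^{2-\beta}\leq \mu^{2-\beta}$ as in the diagonal analysis.

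The second summand $f_{1}f_{2}w_{2}^{2-2\beta}w_{1}^{\beta}$ is the delicate one and needs a case split. The most informative case is $w_{1}\leq \mu<w_{2}$, in which $f_{1}f_{2}=\mu/w_{2}$ and the expression becomes $\mu w_{2}^{1-2\beta}w_{1}^{\beta}$: for $\beta\leq \tfrac{1}{2}$ this is $\leq \mu\cdot 1\cdot\mu^{\beta}=\mu^{1+\beta}$ using $w_{2}\leq 1$ and $w_{1}\leq \mu$, while for $\beta>\tfrac{1}{2}$ it is $\leq \mu\cdot \mu^{1-2\beta}\cdot \mu^{\beta}=\mu^{2-\beta}$ using $w_{2}\geq \mu$. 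The remaining cases ($w_{1},w_{2}\leq \mu$, where $f_{1}=f_{2}=1$, and $\mu<w_{1}\leq w_{2}$, where $f_{1}f_{2}=\mu^{2}/(w_{1}w_{2})$) yield analogous bounds of at most $\mu^{2-\beta}+\mu^{1+\beta}$ by the same type of manipulation. Summing the three contributions and absorbing the factor of $4$ from \prettyref{thm:psquared}, together with $M\leq M^{3}$, produces $\E[Z_{T}^{2}(y)]-\mu^{2}\leq 8M^{3}(\mu^{2-\beta}+\mu^{1+\beta})=8M^{3}\mu^{2}[\mu^{-\beta}+\mu^{-(1-\beta)}]$ as claimed. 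The main obstacle is the cross-term case analysis: the $\mu^{1+\beta}$ contribution only surfaces when the two points straddle $\mu$ and $\beta\leq \tfrac{1}{2}$, and missing this trade-off gives a strictly weaker bound; the rest of the argument is a mechanical substitution.
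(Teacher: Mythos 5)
Your proposal follows the same overall route as the paper: substitute the scale-free bounds $p_{*}(x,y)\in[M^{-1}w^{\beta},Mw^{\beta}]$ and $D_{T}\leq \min\{p_{*}(x_{1},y),p_{*}(x_{2},y)\}$ into the variance bound of \prettyref{thm:psquared}, then optimize the resulting expression over $(w_{1},w_{2})\in[0,1]^{2}$ by a case split on the position of $w_{1},w_{2}$ relative to $\mu$. Your case analysis is correct, including the WLOG reduction (which makes $f_{2}\leq f_{1}$, validating the step $f_{1}f_{2}w_{1}^{2-\beta}\leq f_{1}^{2}w_{1}^{2-\beta}\leq\mu^{2-\beta}$) and the observation that the cross term only produces the $\mu^{1+\beta}$ contribution when the two points straddle $\mu$. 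This matches the paper's approach, which also focuses on the $w_{1}\geq\mu\geq w_{2}$ case and then uses the endpoints as worst case.

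The only issue is the bookkeeping of the constant. Your own accounting gives: two diagonal terms each bounded by $M\mu^{2-\beta}$, and a cross term bounded by $M^{3}\bigl(\mu^{2-\beta}+\max\{\mu^{2-\beta},\mu^{1+\beta}\}\bigr)\leq M^{3}\bigl(2\mu^{2-\beta}+\mu^{1+\beta}\bigr)$. Using $M\leq M^{3}$, the supremum is at most $M^{3}\bigl(4\mu^{2-\beta}+\mu^{1+\beta}\bigr)\leq 4M^{3}\bigl(\mu^{2-\beta}+\mu^{1+\beta}\bigr)$, and after multiplying by the factor $4$ from \prettyref{thm:psquared} one gets $16M^{3}\mu^{2}\bigl[\mu^{-\beta}+\mu^{\beta-1}\bigr]$ rather than $8M^{3}\mu^{2}\bigl[\mu^{-\beta}+\mu^{\beta-1}\bigr]$. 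The claim that this ``produces $8M^{3}(\cdots)$'' therefore does not follow from the preceding steps as written. (The paper's own proof contains a few algebraic typos in the displayed parenthetical expression—for instance the exponents of $w_{1},w_{2}$ in the cross terms, and the second diagonal term—so the paper's constant of $8$ is not clearly justified either; the $16$ you would honestly obtain is perfectly adequate for every downstream use, since only the $M^{O(1)}$ and $\mu^{-1/2}$ asymptotics matter.) So: right approach, correct and slightly cleaner case analysis than the paper's, but be explicit that the final constant you actually derive is $16$, not $8$, or tighten the intermediate steps if you wish to match the stated constant.
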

Our theorem shows that the optimal worst-case variance is achieved for $\beta^{*}=1/2$ and improves over uniform random sampling by a factor of $O(\frac{1}{\sqrt{\mu}})$. A theorem of similar nature but with a more involved proof was given in \cite{charikar2017hashing} for $\beta \in [\frac{1}{2},1]$. 

\begin{proof}
  For $i\in[1,2]$ let $w_{i}:=w(x_{i},y)$ and $f_{i}$ as in Theorem \ref{thm:psquared}. Using the scale-free property, Theorem \ref{thm:psquared} and $D_{T}(x_{1},x_{2})\leq \min\{p_{*}(x_{1},y),p_{*}(x_{2},y)\}$  we arrive at:
\begin{align}
\E[Z_{T}^{2}] &\leq \mu^{2} + 4 M^{3} \sup_{x_{1},x_{2}\in S}\left\{f^{2}_{1}w_{1}^{2-\beta}  +f^{2}_{2}w_{2}^{2-2\beta}  + f_{1}f_{2} \left(w_{1}^{2-2\beta}
 +w_{2}^{2-2\beta}\right)\min\{w_{1},w_{2}\}^{\beta}\right\} \nonumber
\end{align}
Due to the definition of $f_{i}$ the last expression is only a function of $w_{1},w_{2}$ and solving the optimization problem boils down to a case analysis. We focus on the case $w_{1}\geq \mu$, $w_{2}\leq \mu$,  for which the expression in the parenthesis becomes:
\begin{align} 
 \mu^{2} w_{1}^{-\beta} + w_{1}^{1-2\beta}w_{2}^{2\beta}\mu + w_{2}^{2-\beta}w_{1}^{\beta}\mu + w_{2}^{2-\beta}
\end{align}
The  weights that maximize the expression are $w_{1}^{*}=1$ and $w_{2}^{*}=\mu$. $\mu^{2} + \mu^{1+2\beta} + \mu^{1+\beta} + \mu^{2-\beta} \leq 2\mu^{2}[\mu^{-\beta}+\mu^{\beta-1}]$. The other cases $w_{1},w_{2}\leq \mu$ and $w_{1},w_{2}\geq \mu$ follow similarly.
\end{proof} 
\section{Approximation of Convex Functions}\label{sec:family}
In this section, we show how to use the logarithm  $h_{\gamma,t}(\rho)$, given below,  of the \emph{idealized hashing probability} of the Distance Sensitive Hashing scheme  to construct a set of functions whose supremum approximates any non-positive convex Lipschitz function $\phi(\rho)$.
\begin{equation}
h_{\gamma,t}(\rho):=-\left(\frac{1-\rho}{1+\rho}+\gamma^{2}\frac{1+\rho}{1-\rho}\right)\frac{t^{2}}{2} \label{eq:idealized-2}
\end{equation}
Some basic properties of this family of functions are given below. 
\begin{proposition}[Concavity]\label{prop:derivatives}
 For $\gamma\geq 0$, the function $h_{\gamma,t}$ attains its maximum at $\rho^{*}(\gamma) = \frac{1-\gamma}{1+\gamma}$ and
\begin{itemize}
\item [(a)]If $0\leq \gamma\leq 1$, the function is concave for all $\rho\in [\rho^{*}(\gamma^{\frac{2}{3}}), 1]$ and $\rho^{*}(\gamma) \geq \rho^{*}(\gamma^{\frac{2}{3}})$ holds.
\item [(b)]If $\gamma \geq 1$, the function is concave for all $\rho\in [-1, \rho^{*}(\gamma^{\frac{2}{3}})]$ and   $\rho^{*}(\gamma) \leq \rho^{*}(\gamma^{\frac{2}{3}})$ holds.
\end{itemize} 
\end{proposition}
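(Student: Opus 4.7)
The proof reduces to three essentially independent calculations, none of which presents a real obstacle; I will sketch the route.

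First, I would identify the critical point. Using $\frac{1-\rho}{1+\rho} = \frac{2}{1+\rho} - 1$ and $\frac{1+\rho}{1-\rho} = \frac{2}{1-\rho} - 1$, a direct differentiation gives
\[
h'_{\gamma,t}(\rho) \;=\; t^{2}\left(\frac{1}{(1+\rho)^{2}} - \frac{\gamma^{2}}{(1-\rho)^{2}}\right).
\]
Setting $h'_{\gamma,t}(\rho)=0$ yields $(1-\rho)^{2}=\gamma^{2}(1+\rho)^{2}$; since $1\pm\rho>0$ on $(-1,1)$, taking the positive square root produces the unique interior critical point $\rho^{*}(\gamma)=\frac{1-\gamma}{1+\gamma}$.

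Next, for the concavity claim, I would differentiate a second time to get
\[
h''_{\gamma,t}(\rho) \;=\; -2t^{2}\left(\frac{1}{(1+\rho)^{3}} + \frac{\gamma^{2}}{(1-\rho)^{3}}\right),
\]
which is strictly negative for every $\rho\in(-1,1)$ and every $\gamma\ge 0$. In particular $h_{\gamma,t}$ is concave on the whole open interval $(-1,1)$, hence trivially on the sub-interval $[\rho^{*}(\gamma^{2/3}),1]$ in case (a) and on $[-1,\rho^{*}(\gamma^{2/3})]$ in case (b). Combined with the first-order analysis and the boundary behaviour $h_{\gamma,t}(\pm 1)=-\infty$ (for $\gamma>0$), this certifies $\rho^{*}(\gamma)$ as the unique interior maximiser.

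For the interval-containment inequalities $\rho^{*}(\gamma)\ge\rho^{*}(\gamma^{2/3})$ when $\gamma\le 1$ and the reverse when $\gamma\ge 1$, I would note that $\gamma\mapsto\rho^{*}(\gamma)=\frac{1-\gamma}{1+\gamma}$ is strictly decreasing on $[0,\infty)$ (its derivative equals $-\frac{2}{(1+\gamma)^{2}}$). The required comparison then reduces to the purely scalar fact that $\gamma\le\gamma^{2/3}$ when $\gamma\in[0,1]$ and $\gamma\ge\gamma^{2/3}$ when $\gamma\ge 1$, which follows from $\gamma=\gamma^{2/3}\cdot\gamma^{1/3}$ and the position of $\gamma^{1/3}$ relative to $1$.

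There is no genuine technical difficulty: the whole argument is a second-derivative test plus one monotonicity observation. The only mild puzzle is that the stated concavity on a sub-interval is strictly weaker than what the computation gives (global concavity on $(-1,1)$); the reason the proposition isolates the specific sub-intervals $[\rho^{*}(\gamma^{2/3}),1]$ and $[-1,\rho^{*}(\gamma^{2/3})]$ is that these are the precise ranges used downstream, in Section~\ref{sec:scale-free}, to anchor the tangent-based interpolation of $\phi$ around each $\rho^{*}(\gamma)$ while keeping the exponent $\gamma^{2/3}$ that governs the trade-off in the approximation error in \eqref{eq:log-approx}.
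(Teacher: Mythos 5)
Your proof is correct, and in fact it uncovers a sign error in the paper. Differentiating $h'_{\gamma,t}(\rho) = t^{2}\bigl(\frac{1}{(1+\rho)^{2}} - \frac{\gamma^{2}}{(1-\rho)^{2}}\bigr)$ once more gives $h''_{\gamma,t}(\rho) = -2t^{2}\bigl(\frac{1}{(1+\rho)^{3}} + \frac{\gamma^{2}}{(1-\rho)^{3}}\bigr)$, with a \emph{plus} sign as you have it; the paper's displayed formula \eqref{eq:hash_second} has $-\gamma^{2}$ in the second term, which is a typo. With the correct sign, $h''_{\gamma,t}<0$ on all of $(-1,1)$, so there is no inflection point $\rho^{**}(\gamma)$ as asserted in the paper's proof, and the function is globally concave. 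This is strictly stronger than the proposition and makes the sub-interval concavity claims in (a) and (b) immediate. Your derivation of the critical point $\rho^{*}(\gamma)$ and the monotonicity argument comparing $\rho^{*}(\gamma)$ with $\rho^{*}(\gamma^{2/3})$ via the strict decrease of $g(x)=\frac{1-x}{1+x}$ coincide with the paper's reasoning and are both correct. The only caveat is your closing speculation: the proposition isolates the sub-interval $[\rho^{*}(\gamma^{2/3}),1]$ (resp.\ $[-1,\rho^{*}(\gamma^{2/3})]$) not for a downstream reason you need to justify, but because the authors were working from the mis-signed second derivative and believed $\rho^{**}(\gamma)$ to be a genuine inflection point; since global concavity subsumes the stated claim, nothing downstream (in particular Proposition~\ref{prop:infimum}, which only needs concavity on one side of $\rho^{*}$) is affected.
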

The above properties will be used to show that, by picking parameters $\gamma_{0},t_{0}$ appropriately, if we approximate the convex function $\phi$  \emph{locally} at some point $\rho_{0}\in [-1,1]$ up to \emph{first order} (value and derivative),  then  $h_{\gamma_{0},t_{0}}(\rho)\leq \phi(\rho)$ for all $\rho\in [-1,1]$. Thus even a single hash function is sufficient to provide a lower bound. Most of the work is devoted to show that we can get a good \emph{upper bound} on $\phi$ using a small number of functions to approximate $\phi$ locally at a set of interpolation points $\rho_{1},\ldots,\rho_{T}$.  We define the following parametrization.  Given $\delta>0$ for $|\rho_{0}|\leq 1 - \delta$, let
\begin{align}\label{eq:gamma}
\gamma_{0}^{2} &:= \left(\frac{1-\rho_{0}}{1+\rho_{0}}\right)^{2}\frac{2 \phi(\rho_{0})+(1-\rho_{0}^{2})\phi^{'}(\rho_{0})}{2\phi(\rho_{0})-(1-\rho_{0}^{2})\phi^{'}(\rho_{0})}\\
t_{0}^{2}&:= -\frac{1}{2}\frac{1+\rho_{0}}{1-\rho_{0}}\left[2\phi(\rho_{0})-(1-\rho_{0}^{2})\phi^{'}(\rho_{0})\right]\label{eq:threshold}
\end{align}
and for  fixed $\phi$ and $\rho_{0}\in [-1+\delta, 1-\delta]$ define $h_{\rho_{0}}(\rho):=h_{\gamma_{0},t_{0}}(\rho)$ 
. This  parametrization is well defined due to  Lemma \ref{lem:positive}. For $\rho_{0}\in\{-1,+1\}$ (boundary) we define $h_{\pm 1}(\rho):= -\frac{1\mp \rho}{1\pm \rho}\frac{t_{\pm 1}^{2}}{2} + \phi(\pm 1)$, 
where $t^{2}_{\pm 1} = 4\max\{\pm  \phi^{'}(\pm1), 0\}$. Under our assumptions $\phi\leq 0$, hence  the constant term above can be implemented by sub-sampling the data set with probability $e^{\phi(\pm 1)}$. The following bounds on the parameters $\gamma_{0},t_{0}$ will be useful.
\begin{corollary}[Complexity]\label{col:exponent_bounds}
 Under the conditions of Lemma \ref{lem:positive}, we have the following bounds: $t_{0}^{2} \leq - 2\frac{1+\rho_{0}}{1-\rho_{0}}\phi(\rho_{0})$, $t_{0}^{2}\gamma_{0}^{2} \leq -2\frac{1-\rho_{0}}{1+\rho_{0}}\phi(\rho_{0})$, and $t_{0}^{2}\max\{\gamma_{0}^{2},1\}\geq- \frac{1+\rho_{0}^{2}}{1-\rho_{0}^{2}}\phi(\rho_{0})$.
\end{corollary}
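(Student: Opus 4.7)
My plan is to prove all three inequalities by reducing them to consequences of \prettyref{lem:positive} via explicit manipulation of the closed forms for $t_0^2$ and $\gamma_0^2$ in \eqref{eq:threshold} and \eqref{eq:gamma}. The first step is to rewrite both $t_0^2$ and $t_0^2\gamma_0^2$ in a convenient additive form: distributing \eqref{eq:threshold} yields $t_0^2 = -\tfrac{1+\rho_0}{1-\rho_0}\phi(\rho_0) + \tfrac{(1+\rho_0)^2}{2}\phi'(\rho_0)$, and multiplying $t_0^2$ by \eqref{eq:gamma} causes the factor $2\phi(\rho_0)-(1-\rho_0^2)\phi'(\rho_0)$ to cancel against the denominator of $\gamma_0^2$, leaving the symmetric form $t_0^2\gamma_0^2 = -\tfrac{1-\rho_0}{1+\rho_0}\phi(\rho_0) - \tfrac{(1-\rho_0)^2}{2}\phi'(\rho_0)$.

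With these two identities in hand, the first two bounds fall out routinely. Bound 1 is equivalent to $\tfrac{(1+\rho_0)^2}{2}\phi'(\rho_0) \leq -\tfrac{1+\rho_0}{1-\rho_0}\phi(\rho_0)$, which rearranges to $2\phi(\rho_0) + (1-\rho_0^2)\phi'(\rho_0) \leq 0$; this follows from \prettyref{lem:positive} since $\phi'(\rho_0)\leq|\phi'(\rho_0)|$. Bound 2 is the mirror statement and reduces in exactly the same way to $2\phi(\rho_0) - (1-\rho_0^2)\phi'(\rho_0)\leq 0$, which likewise follows from \prettyref{lem:positive} via $-\phi'(\rho_0)\leq|\phi'(\rho_0)|$.

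For the third bound, my primary plan is to return to the defining condition $h_{\gamma_0,t_0}(\rho_0) = \phi(\rho_0)$; clearing denominators in \eqref{eq:idealized-2} produces the clean identity $t_0^2\bigl[(1-\rho_0)^2 + \gamma_0^2(1+\rho_0)^2\bigr] = -2(1-\rho_0^2)\phi(\rho_0)$, and the elementary weighted-max inequality $(1-\rho_0)^2 + \gamma_0^2(1+\rho_0)^2 \leq \max\{\gamma_0^2,1\}\bigl[(1-\rho_0)^2+(1+\rho_0)^2\bigr] = 2\max\{\gamma_0^2,1\}(1+\rho_0^2)$ then isolates $t_0^2\max\{\gamma_0^2,1\}$ on the left. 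A backup route would be to simply add the two closed forms obtained in step one to get $t_0^2(1+\gamma_0^2) = -\tfrac{2(1+\rho_0^2)}{1-\rho_0^2}\phi(\rho_0) + 2\rho_0\phi'(\rho_0)$ and apply $\max\{\gamma_0^2,1\}\geq (1+\gamma_0^2)/2$. The main obstacle along that second route is absorbing the residual $\rho_0\phi'(\rho_0)$ term when it is negative, which would require splitting on the sign of $\rho_0$ combined with the one-sided convexity bounds $\phi'(\rho_0)(1-\rho_0)\leq -\phi(\rho_0)$ and $-\phi'(\rho_0)(1+\rho_0)\leq -\phi(\rho_0)$ already extracted in the proof of \prettyref{lem:positive}, in order to argue that one of $t_0^2$ or $t_0^2\gamma_0^2$ alone dominates the target.
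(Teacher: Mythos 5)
Your handling of the first two bounds is exactly the paper's argument: distribute \eqref{eq:threshold}, observe the denominator of $\gamma_0^2$ cancels in $\gamma_0^2 t_0^2$, and reduce each inequality to $2\phi(\rho_0)\pm(1-\rho_0^2)\phi'(\rho_0)\leq 0$, which is what \prettyref{lem:positive} gives after dropping the absolute value in the favorable direction. No issues there.

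Your primary route for the third bound is clean and in fact sharper than what the paper writes, but it does \emph{not} yield the stated inequality. From $t_0^2\bigl[(1-\rho_0)^2+\gamma_0^2(1+\rho_0)^2\bigr]=-2(1-\rho_0^2)\phi(\rho_0)$ together with $(1-\rho_0)^2+\gamma_0^2(1+\rho_0)^2\leq 2\max\{\gamma_0^2,1\}(1+\rho_0^2)$, you obtain
\begin{equation*}
t_0^2\max\{\gamma_0^2,1\}\;\geq\;-\frac{1-\rho_0^2}{1+\rho_0^2}\,\phi(\rho_0),
\end{equation*}
whose prefactor is the \emph{reciprocal} of the one in the corollary. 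This is not a slip on your end: the prefactor $\frac{1+\rho_0^2}{1-\rho_0^2}$ in the statement appears to be in error. Whenever $\phi'(\rho_0)$ happens to make $\gamma_0=1$ (possible for any $\rho_0\in(-1,1)$ since $\gamma_0^2$ sweeps through $1$ as $\phi'(\rho_0)$ varies within the admissible range from \prettyref{lem:positive}), your weighted-max inequality is tight and $t_0^2\max\{\gamma_0^2,1\}=t_0^2=-\frac{1-\rho_0^2}{1+\rho_0^2}\phi(\rho_0)$, which is \emph{strictly less} than $-\frac{1+\rho_0^2}{1-\rho_0^2}\phi(\rho_0)$ for $\rho_0\neq 0$. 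The paper's own one-line argument (\textquotedblleft$\max\{a,b\}\geq\frac{a+b}{2}$\textquotedblright) runs into precisely the obstacle you flagged in your backup route: $\frac{t_0^2(1+\gamma_0^2)}{2}=-\frac{1+\rho_0^2}{1-\rho_0^2}\phi(\rho_0)+\rho_0\phi'(\rho_0)$, and the $\rho_0\phi'(\rho_0)$ term is not sign-controlled; the case-by-case lower bounds the paper derives do not, when averaged or maximized, recover the stated prefactor. So you should not try to \textquotedblleft fix\textquotedblright\ your argument to match the printed constant — instead, record the bound $t_0^2\max\{\gamma_0^2,1\}\geq -\frac{1-\rho_0^2}{1+\rho_0^2}\phi(\rho_0)$, note that it is tight at $\gamma_0=1$, and check that the downstream use in \prettyref{lem:fidelity} (where the lower bound on $t_{\gamma_0}^2$ is used to upper-bound tail collision probabilities) only needs \emph{some} constant multiple of $-\phi(\rho_0)$, which this version supplies.
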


Using this family of functions we show we can approximate a convex function arbitrarily well.

\begin{theorem}[Approximation] \label{thm:convex_approx}
Given $\epsilon>0$, for every convex function $\phi$ there exists a set $\mathcal{T}_{\epsilon}(\phi)\subset [-1,1]$ of size $O\left(\sqrt{\frac{L(\phi)}{\epsilon}}\log(\frac{L(\phi)}{\epsilon})\right)$ such that $0\leq\phi(\rho) -  \sup_{\rho_{0}\in \mathcal{T}_{\epsilon}}\{h_{\rho_{0}}(\rho)\} \leq 2\epsilon$ for all $\rho\in [-1,1]$.
\end{theorem}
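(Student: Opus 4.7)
The plan is to combine the first-order matching of $h_{\rho_0}$ to $\phi$ at $\rho_0$ with Rote's sandwich theorem (Theorem~\ref{thm:sandwich}) and a dyadic boundary refinement.

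First I would verify the two key ingredients. A direct computation using the parametrization \eqref{eq:gamma}--\eqref{eq:threshold} shows $h_{\rho_0}(\rho_0) = \phi(\rho_0)$ and $h'_{\rho_0}(\rho_0) = \phi'(\rho_0)$; moreover, since both $\rho \mapsto (1-\rho)/(1+\rho)$ and $\rho \mapsto (1+\rho)/(1-\rho)$ are convex on $(-1,1)$, the function $h_{\gamma,t}$ is concave there, strengthening Proposition~\ref{prop:derivatives} to global concavity. From concavity and first-order matching, $h_{\rho_0}(\rho) \le \phi(\rho_0) + \phi'(\rho_0)(\rho - \rho_0)$; convexity of $\phi$ makes this right-hand side at most $\phi(\rho)$. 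Taking the supremum over $\rho_0 \in \mathcal{T}_\epsilon$ yields the lower bound $\sup_{\rho_0} h_{\rho_0}(\rho) \le \phi(\rho)$ asserted by the theorem.

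For the upper bound I would invoke Theorem~\ref{thm:sandwich} on an interior subinterval $[-1+\eta, 1-\eta]$ with error target $\epsilon$, producing a convex piecewise-linear $\ell$ with $0 \le \phi - \ell \le \epsilon$ and $N = O(\sqrt{L(\phi)/\epsilon})$ breakpoints $\xi_1 < \cdots < \xi_N$, which I place into $\mathcal{T}_\epsilon$. For $\rho$ in a piece $[\xi_j, \xi_{j+1}]$, I decompose $\phi(\rho) - h_{\xi_j}(\rho)$ as the sum of the \emph{convexity gap} $\phi(\rho) - [\phi(\xi_j) + \phi'(\xi_j)(\rho - \xi_j)]$ and the \emph{concavity gap} $[\phi(\xi_j) + \phi'(\xi_j)(\rho - \xi_j)] - h_{\xi_j}(\rho)$. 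Since $\ell$ is within $\epsilon$ of $\phi$ and linear on the piece, the convexity gap is $\le \epsilon$. For the concavity gap I would use the explicit formula $h''_{\gamma,t}(\rho) = -t^2\bigl[(1+\rho)^{-3} + \gamma^2(1-\rho)^{-3}\bigr]$ together with the parameter bounds in Corollary~\ref{col:exponent_bounds} to show that it too is $O(\epsilon)$ at $\xi_{j+1}$; interpolating by taking the pointwise maximum of $h_{\xi_j}$ and $h_{\xi_{j+1}}$ (both concave and tangent to $\phi$ at the respective endpoints) then gives a uniform $2\epsilon$ bound across the piece.

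The boundary strips $[-1,-1+\eta]$ and $[1-\eta, 1]$ are where the extra $\log(L/\epsilon)$ factor enters. Because $\gamma_0, t_0$ diverge as $\rho_0 \to \pm 1$, a single $h_{\rho_0}$ only covers a shrinking neighborhood, so I would place interpolation points geometrically at $\rho_k^{\pm} = \pm(1 - 2^{-k}\eta)$ for $k = 0, 1, \ldots, K$ with $K = O(\log(L/\epsilon))$, and also include the explicit boundary functions $h_{\pm 1}$ defined just after \eqref{eq:threshold}. On each dyadic sub-strip Corollary~\ref{col:exponent_bounds} keeps $t_k^2(1 \mp \rho_k)/(1 \pm \rho_k)$ bounded by $O(L)$, so the interior concavity-gap estimate carries over. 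In total $|\mathcal{T}_\epsilon| = N + O(\log(L/\epsilon)) = O(\sqrt{L/\epsilon}\,\log(L/\epsilon))$. The main obstacle will be making the concavity-gap estimate rigorous and uniform across the interval: since Rote's algorithm gives non-uniform piece widths and the parameters $(\gamma_j, t_j)$ depend sensitively on $\phi(\xi_j)$ and $\phi'(\xi_j)$, one must leverage Corollary~\ref{col:exponent_bounds} together with the fact that $\phi - \ell \le \epsilon$ forces the slope variation of $\phi'$ on each piece to be small, which ties the local $|h''|$ to the piece's width and delivers the $O(\epsilon)$ error per piece.
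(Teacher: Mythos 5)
Your lower-bound argument is correct, and you've actually caught something the paper obscures: since both $\rho\mapsto\frac{1-\rho}{1+\rho}$ and $\rho\mapsto\frac{1+\rho}{1-\rho}$ are convex on $(-1,1)$, $h_{\gamma,t}$ is \emph{globally} concave there (the sign in the paper's eq.~\eqref{eq:hash_second} appears to be a typo; a direct computation gives $h''_{\gamma,t}(\rho)=-2t^2\bigl[(1+\rho)^{-3}+\gamma^2(1-\rho)^{-3}\bigr]<0$). Global concavity plus first-order matching at $\rho_0$ plus convexity of $\phi$ does give $h_{\rho_0}\le\phi$ everywhere, which streamlines Proposition~\ref{prop:infimum}. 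Your lower bound is fine.

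The upper bound has a genuine gap: one interpolation point per Rote piece is not enough. The concavity gap is governed by $|h''_{\rho_0}|$, which Corollary~\ref{col:exponent_bounds} bounds by roughly $L(\phi)/(1-|\rho_0|)^2$; making the tangent error $\le\epsilon$ over a span $w$ therefore requires $w\lesssim(1-|\rho_0|)\sqrt{\epsilon/L(\phi)}$. Rote's pieces are sized by the \emph{convexity} defect of $\phi$ and carry no such scaling with $1-|\rho_0|$. The cleanest counterexample is a linear $\phi(\rho)=L(\rho-1)$: Rote returns a single exact piece (there are no breakpoints), your construction then contributes only $O(\log(L/\epsilon))$ dyadic points inside the boundary strips $[\pm1\mp\eta,\pm1]$, and the tangent $h_{\rho_0}$ at $\rho_0=-1+\eta$ undershoots $\phi(0)=-L$ by $\Theta(L/\eta)$ — far more than $\epsilon$ for any sensible $\eta$. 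Your closing sentence, that $\phi-\ell\le\epsilon$ ``ties the local $|h''|$ to the piece's width,'' is where the argument breaks: $|h''_{\rho_0}|$ is determined by $\phi(\rho_0),\phi'(\rho_0),\rho_0$, not by $\phi''$, so the slope variation within a Rote piece gives no control over it. This is also why your set size $O(\sqrt{L/\epsilon}+\log(L/\epsilon))$ is strictly smaller than the theorem's $O(\sqrt{L/\epsilon}\log(L/\epsilon))$ — a signal that points are missing.

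What the paper does instead (Lemma~\ref{lem:linear}) is place interpolation points \emph{geometrically within each Rote piece} throughout the interior $[-1+\delta,0]$, with step $\Delta(\rho_0)=\sqrt{\epsilon/(8R(\ell))}\,(1-|\rho_0|)$, so the points telescope to $\log(1/\delta)\big/\log\bigl(1+\sqrt{\epsilon/(8R)}\bigr)\approx\sqrt{R/\epsilon}\,\log(1/\delta)$ across all pieces; with $\delta=\Theta(\epsilon/L)$ this yields the $\log$ factor. Your dyadic refinement is in the wrong region (it densifies \emph{inside} the boundary strip toward $\pm1$, where the paper instead just uses the single boundary family $h_{\pm1}$ via Lemma~\ref{lem:boundary}), and the interior, where the geometric refinement is actually needed, is left with only the Rote breakpoints.
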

\subsection{Proof of Approximation Theorem}
To prove the above theorem it is sufficient, due to Theorem \ref{thm:sandwich}, to only show how to \emph{approximate linear functions}.  For $\rho$ away from $\{-1,1\}$, this is done in Lemma \ref{lem:linear}, where the \emph{interpolation points are given explicitly}. Lemma \ref{lem:boundary} treats the case near the boundary.  By symmetry of the family of hash functions we only need to show our result for $[-1,0]$.

\begin{lemma}\label{lem:linear}
 Let $\ell$ be a linear function on $[\rho_{-},\rho_{+}]\subseteq [-1+\delta,0]$. Given $\epsilon>0$, let $T =  \lfloor \frac{\log(\frac{1-|\rho_{+}|}{1-|\rho_{-}|})}{\log(1+\sqrt{\frac{\epsilon}{8|\ell_{\min}|}})}\rfloor$ and define $\rho_{i}:= \rho_{-} + (1-|\rho_{-}|) \left[\left(1+\sqrt{\frac{\epsilon}{8R(\ell)}}\right)^{i}-1 \right]$ for $i=0,\ldots, T$. Then, for all $\rho\in [\rho_{-},\rho_{+}]$ there exists $i(\rho)\in [T]\cup\{0\}$ such that $0\leq \ell(\rho)-h_{\rho_{i(\rho)}}(\rho)\leq \epsilon$.
\end{lemma}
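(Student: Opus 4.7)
The plan is to split the claimed two-sided inequality into its two halves and handle them separately. For the lower bound $\ell(\rho) \geq h_{\rho_i}(\rho)$ (which must hold for \emph{every} $i$ and every $\rho$), I would first observe by direct differentiation of \eqref{eq:idealized-2} that
\[
h_{\gamma,t}''(\rho) = -2 t^2 \left[\frac{1}{(1+\rho)^3} + \frac{\gamma^2}{(1-\rho)^3}\right] < 0 \quad \text{on } (-1,1),
\]
so $h_{\gamma,t}$ is strictly concave on its whole domain. The parametrization \eqref{eq:gamma}, \eqref{eq:threshold} is precisely designed so that $h_{\rho_i}$ agrees with $\ell$ in both value and first derivative at $\rho_i$; this is well-defined because Lemma \ref{lem:positive} applied to the non-positive linear $\ell$ guarantees $\gamma_i^2, t_i^2 > 0$. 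Since $\ell$ is linear its graph coincides with the tangent line to $h_{\rho_i}$ at $\rho_i$, so concavity immediately yields $h_{\rho_i}(\rho) \leq \ell(\rho)$ for all $\rho$, giving the $\geq 0$ side without any use of the spacing.

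For the upper bound, I would first unpack the multiplicative structure of the grid: setting $\eta := \sqrt{\epsilon/(8 R(\ell))}$, the definition gives $1 + \rho_i = (1+\rho_-)(1+\eta)^i$, hence $\rho_i - \rho_{i-1} = \eta (1+\rho_{i-1})$ and the $\rho_i$'s cover $[\rho_-, \rho_+]$ (with $T$ chosen so that $\rho_T$ reaches $\rho_+$). Given $\rho \in [\rho_-, \rho_+]$, pick $i = i(\rho)$ with $\rho \in [\rho_{i-1}, \rho_i]$ and apply Taylor's theorem at $\rho_i$:
\[
\ell(\rho) - h_{\rho_i}(\rho) = -\frac{1}{2}\, h_{\rho_i}''(\xi) (\rho - \rho_i)^2 \quad \text{for some } \xi \in [\rho_{i-1}, \rho_i].
\]
The gap is controlled by $(\rho - \rho_i)^2 \leq \eta^2 (1+\rho_{i-1})^2 \leq \eta^2 (1+\rho_i)^2$. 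For the second factor, Corollary \ref{col:exponent_bounds} applied with $\phi \leftarrow \ell$ gives $t_i^2 \leq -2 \frac{1+\rho_i}{1-\rho_i} \ell(\rho_i)$ and $t_i^2 \gamma_i^2 \leq -2 \frac{1-\rho_i}{1+\rho_i} \ell(\rho_i)$, and the interval restriction $\rho_i \in [-1+\delta, 0]$ yields $1 - \rho_i \geq 1$, while $1 + \xi \geq (1+\rho_i)/(1+\eta)$. Substituting these into the explicit formula for $h_{\rho_i}''(\xi)$ collapses the product $\frac{1}{2} |h_{\rho_i}''(\xi)| (\rho-\rho_i)^2$ to $O(|\ell(\rho_i)|) \cdot \eta^2$, which is at most $\epsilon$ once $|\ell(\rho_i)| \leq R(\ell)$ is used and the factor $8$ in $\eta$'s denominator absorbs the absolute constants.

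The main obstacle I anticipate is carefully tracking the absolute constants through the second-derivative bound so that the factor $8$ in $\eta = \sqrt{\epsilon/(8 R(\ell))}$ is exactly what is needed: there are two cross-terms in $h_{\rho_i}''(\xi)$ (from $(1+\xi)^{-3}$ and $\gamma_i^2(1-\xi)^{-3}$), and each picks up a slightly different factor from the geometric estimate $1+\xi \geq (1+\rho_i)/(1+\eta)$, so they must be combined uniformly on $[\rho_{i-1},\rho_i]$. A minor side issue is covering the last sub-interval $[\rho_T, \rho_+]$ when the floor in the definition of $T$ prevents $\rho_T$ from matching $\rho_+$; the identical Taylor argument applied with index $T$ (or with one extra point $\rho_{T+1} \geq \rho_+$) disposes of this without any new idea.
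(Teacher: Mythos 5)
Your proof is correct and follows essentially the same route as the paper's: Taylor's theorem with second-order remainder at the grid points, the geometric spacing $1+\rho_i = (1+\rho_-)(1+\eta)^i$, and the second-derivative bound obtained by substituting the estimates of Corollary~\ref{col:exponent_bounds} into the explicit formula for $h''_{\gamma_0,t_0}$. One observation you make is worth flagging: you correctly compute that
\[
h''_{\gamma,t}(\rho) = -2t^2\left[\frac{1}{(1+\rho)^3} + \gamma^2\frac{1}{(1-\rho)^3}\right] < 0
\]
so $h_{\gamma,t}$ is strictly concave on all of $(-1,1)$ (the displayed formula \eqref{eq:hash_second} in the paper appears to carry a sign typo, writing $-\gamma^2$ in place of $+\gamma^2$). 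This lets you obtain the lower bound $\ell \geq h_{\rho_i}$ directly from tangency plus global concavity, whereas the paper routes it through the partial-concavity statements (Propositions~\ref{prop:derivatives} and~\ref{prop:infimum}); your argument is tighter here. The one cosmetic difference is that you center the Taylor expansion at the \emph{right} endpoint $\rho_i$ of $[\rho_{i-1},\rho_i]$, while the paper expands at the \emph{left} endpoint $\rho_i$ of $[\rho_i,\rho_{i+1}]$; the paper's choice gives $1+\xi \geq 1+\rho_i$ immediately and avoids the extra $(1+\eta)$ factors from the lower bound $1+\xi \geq (1+\rho_i)/(1+\eta)$ that you must then absorb into the constant. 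For the parameter regime the lemma is actually invoked in (with $\epsilon = 1/2$ and $R(\ell)$ bounded below), that factor is harmless and your constants do close, but the left-endpoint expansion is the cleaner bookkeeping.
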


\begin{lemma} \label{lem:boundary}
Given $\epsilon>0$, let $\delta(\epsilon):=\min\{1, \sqrt{\frac{\epsilon}{4L(\phi)}},\frac{\epsilon}{L(\phi)}\}$. Then  $0 \leq \phi(\rho) -h_{-1}(\rho)\leq \epsilon$ for all $\rho$ in the interval $[-1, -1+\delta(\epsilon)]$.
\end{lemma}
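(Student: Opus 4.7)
My plan is to prove the two sides of $0\le \phi(\rho)-h_{-1}(\rho)\le \epsilon$ separately, using the explicit formula $h_{-1}(\rho)=\phi(-1)-\frac{1+\rho}{1-\rho}\cdot\frac{t_{-1}^{2}}{2}$ and the two-branch definition $t_{-1}^{2}=4\max\{-\phi'(-1),0\}$.

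For the lower bound $h_{-1}\le \phi$, I would first compute $h_{-1}(-1)=\phi(-1)$, $h_{-1}'(\rho)=-t_{-1}^{2}/(1-\rho)^{2}$ (so $h_{-1}'(-1)=-t_{-1}^{2}/4$), and $h_{-1}''(\rho)=-2t_{-1}^{2}/(1-\rho)^{3}\le 0$, which shows $h_{-1}$ is concave on $[-1,1)$. I then case-split on the sign of $\phi'(-1)$. If $\phi'(-1)\le 0$ then $h_{-1}'(-1)=\phi'(-1)$, so the tangent to $h_{-1}$ at $-1$ coincides with the tangent $\ell(\rho):=\phi(-1)+\phi'(-1)(\rho+1)$ to $\phi$ at $-1$; concavity of $h_{-1}$ gives $h_{-1}\le \ell$, and convexity of $\phi$ gives $\ell\le \phi$, so $h_{-1}\le \phi$ on all of $[-1,1]$. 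If $\phi'(-1)>0$ then $t_{-1}^{2}=0$ and $h_{-1}\equiv \phi(-1)$, while convexity plus $\phi'(-1)>0$ yield $\phi(\rho)\ge \phi(-1)+\phi'(-1)(\rho+1)>\phi(-1)=h_{-1}(\rho)$ for every $\rho>-1$.

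For the upper bound on $[-1,-1+\delta(\epsilon)]$ I would decompose through $\ell$,
\[
\phi(\rho)-h_{-1}(\rho)\;=\;\underbrace{[\phi(\rho)-\ell(\rho)]}_{\text{convexity gap}}\;+\;\underbrace{[\ell(\rho)-h_{-1}(\rho)]}_{\text{second-order remainder}}.
\]
For the convexity gap I would combine Lipschitzness $\phi(\rho)-\phi(-1)\le L(\phi)(1+\rho)$ with $|\phi'(-1)|\le L(\phi)$ to get $\phi(\rho)-\ell(\rho)\le (L(\phi)-\phi'(-1))(1+\rho)\le 2L(\phi)\delta(\epsilon)$, which is $O(\epsilon)$ under the $\delta(\epsilon)\le \epsilon/L(\phi)$ clause. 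For the remainder, in Case A I would exploit the cancellation $\phi'(-1)=-t_{-1}^{2}/4$ to obtain the clean identity
\[
\ell(\rho)-h_{-1}(\rho)\;=\;\phi'(-1)(1+\rho)+\frac{1+\rho}{1-\rho}\cdot\frac{t_{-1}^{2}}{2}\;=\;\frac{t_{-1}^{2}(1+\rho)^{2}}{4(1-\rho)}\;\le\; L(\phi)\,\delta(\epsilon)^{2},
\]
where the last step uses $t_{-1}^{2}\le 4L(\phi)$ together with $1-\rho\ge 1$, which is exactly what the $\delta\le 1$ clause ensures. This piece is $\le \epsilon/4$ by the $\delta(\epsilon)\le \sqrt{\epsilon/(4L(\phi))}$ clause. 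Summing the two pieces (and absorbing the resulting constant factors into $\epsilon$) yields the required upper bound. Case B is even easier because $h_{-1}\equiv \phi(-1)$ and only the Lipschitz estimate $\phi(\rho)-\phi(-1)\le L(\phi)\delta(\epsilon)\le \epsilon$ is needed.

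The only delicate step is the algebraic identity collapsing $\ell-h_{-1}$ to $\frac{t_{-1}^{2}(1+\rho)^{2}}{4(1-\rho)}$: it encodes the fact that once the matching linear parts of $\phi$ and $h_{-1}$ at $-1$ are subtracted, the discrepancy is genuinely quadratic in $1+\rho$. This is precisely what makes the $\sqrt{\epsilon/(4L(\phi))}$ clause in $\delta(\epsilon)$ the right ingredient, while the $\delta\le \epsilon/L(\phi)$ clause controls the linear Lipschitz contribution and the $\delta\le 1$ clause normalises the denominator. Everything else is routine convexity/Lipschitz bookkeeping.
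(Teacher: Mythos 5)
Your argument is correct and in fact more careful than the paper's own. The lower-bound part (tangency at $-1$, concavity of $h_{-1}$ from $h_{-1}''\le 0$, convexity of $\phi$) is a clean proof of what the paper merely asserts. For the upper bound, the paper handles the case $\phi'(-1)<0$ by ``the Taylor remainder theorem,'' producing the single estimate $\phi(\rho)-h_{-1}(\rho)\le \frac{1}{2}\frac{2}{(2-\delta)^3}4|\phi'(-1)|\delta^2\le 4L\delta^2$; this quantity is precisely (a slightly looser version of) your second term $\ell(\rho)-h_{-1}(\rho)=\frac{t_{-1}^2(1+\rho)^2}{4(1-\rho)}$, i.e.\ the second-order remainder of the concave $h_{-1}$ around $-1$. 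What the paper's computation appears to omit is your first term, the convexity gap $\phi(\rho)-\ell(\rho)$. This gap is not $O(\delta^2)$ in general: if $\phi'$ rises quickly from $-L$ to $L$ over an interval $[-1,-1+\eta]$ with small $\eta$, then $\phi(\rho)-\ell(\rho)$ approaches $2L(1+\rho)$, a \emph{linear} contribution that dominates for small $\delta$ and cannot be absorbed into a $4L\delta^2$ bound. Your decomposition $\phi-h_{-1}=(\phi-\ell)+(\ell-h_{-1})$ makes both pieces visible and explains exactly why $\delta(\epsilon)$ has both a $\sqrt{\epsilon/(4L)}$ clause (controlling the quadratic remainder) and an $\epsilon/L$ clause (controlling the linear convexity gap). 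The only caveat is the one you already flag: adding your two estimates gives $\phi-h_{-1}\le 2\epsilon+\epsilon/4$, not $\le\epsilon$. That constant-factor slack is harmless --- one can sharpen the last clause of $\delta(\epsilon)$ to $\epsilon/(4L(\phi))$, or simply accept $O(\epsilon)$ in the conclusion, which is all that is needed when the lemma is invoked in the proof of Theorem~\ref{thm:convex_approx}.
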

\begin{proof}
 If $\phi^{'}(-1)\geq 0$, then $0\leq \phi(\rho)-h(\rho)=\phi(\rho)-\phi(-1)\leq L(\rho+1) \leq L\delta$. If $\phi^{'}(-1)<0$ then  by the Taylor remainder theorem and $0\leq \delta\leq1$ we get
\[
0\leq \phi(\rho)-h_{-1}(\rho) \leq \frac{1}{2}\frac{2}{(2-\delta)^{3}}4|\phi^{'}(-1)|\delta^{2}\leq 4L\delta^{2}
\]
Using the definition of $\delta(\epsilon)$ we get the statement.
\end{proof}
The previous lemmas provide only local approximation to the function.  Proposition \ref{prop:infimum} below is used to show that the functions we construct are a lower bound to the piecewise linear approximation on the whole interval $\rho\in [-1,1]$, which in turn implies a lower bound for the function $\phi(\rho)$.

 \begin{proposition}\label{prop:infimum}
Let $\phi:[-1,1]\to \R$ be an non-decreasing (resp non-increasing) convex function and $g:[-1,1]\to \R$ a function that attains a global maximum at $\rho^{*}$, is concave in $[-1, \rho^{*}]$ (resp $[\rho^{*},1]$), and $\exists \rho_{0}\in [-1,\rho^{*}]$ (resp. $[\rho^{*},1]$)  such that $\phi'(\rho_{0})=g^{'}(\rho_{0})$, then $\inf_{\rho\in [-1,1]}\{\phi(\rho)-g(\rho)\} = \phi(\rho_{0})-g(\rho_{0})$.
\end{proposition}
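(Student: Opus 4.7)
The plan is to reduce the problem to showing that $f(\rho):=\phi(\rho)-g(\rho)$ attains its global minimum on $[-1,1]$ at the point $\rho_0$. By the stated symmetry I would only handle the non-decreasing case (the non-increasing case is identical after reflecting $\rho\mapsto -\rho$), so assume $\phi$ is non-decreasing and convex, $g$ is concave on $[-1,\rho^*]$ with global max at $\rho^*$, and $\rho_0\in[-1,\rho^*]$ with $\phi'(\rho_0)=g'(\rho_0)$.

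First I would analyze $f$ on the left interval $[-1,\rho^*]$. On this interval $\phi'$ is non-decreasing (convexity of $\phi$) while $g'$ is non-increasing (concavity of $g$), so $f'=\phi'-g'$ is non-decreasing. Combined with $f'(\rho_0)=0$, this gives $f'(\rho)\le 0$ for $\rho\in[-1,\rho_0]$ and $f'(\rho)\ge 0$ for $\rho\in[\rho_0,\rho^*]$, hence $f$ attains its minimum over $[-1,\rho^*]$ at $\rho_0$. In particular $f(\rho^*)\ge f(\rho_0)$.

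Next I would handle the right interval $[\rho^*,1]$ without any second-order information on $g$. Because $g$ attains its global maximum at $\rho^*$, we have $g(\rho)\le g(\rho^*)$ for every $\rho\in[\rho^*,1]$. Because $\phi$ is non-decreasing, we have $\phi(\rho)\ge \phi(\rho^*)$ for every $\rho\in[\rho^*,1]$. Subtracting gives $f(\rho)=\phi(\rho)-g(\rho)\ge \phi(\rho^*)-g(\rho^*)=f(\rho^*)\ge f(\rho_0)$ for every $\rho\in[\rho^*,1]$. Combining the two intervals yields $\inf_{\rho\in[-1,1]}f(\rho)=f(\rho_0)$, which is exactly the claim.

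The routine regularity issue (convex functions are only differentiable almost everywhere, and $g$ is assumed differentiable only at $\rho_0$) is the sole technical nuance. I would handle it by replacing $\phi'$ and $g'$ by any selections of the sub- and super-differentials respectively: $\phi'$ still has a non-decreasing selection and $g'$ still has a non-increasing selection on $[-1,\rho^*]$, so the monotonicity of $f'$ used in step one still holds in the one-sided sense needed to conclude that $f$ decreases up to $\rho_0$ and then increases. This is the only place where care is needed; the argument on $[\rho^*,1]$ uses only the hypotheses directly, so the overall proof is quite short.
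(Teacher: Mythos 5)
Your proof is correct and follows essentially the same route as the paper's: both split $[-1,1]$ into $[-1,\rho^*]$ and $[\rho^*,1]$, both handle $[\rho^*,1]$ by combining monotonicity of $\phi$ with the fact that $g$ is maximized at $\rho^*$, and on $[-1,\rho^*]$ the paper's tangent-line comparison (concavity puts $g$ below its tangent at $\rho_0$, convexity puts $\phi$ above its tangent) is just the pointwise form of your monotone-derivative argument for $f=\phi-g$.
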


\begin{proof}[Proof of Theorem \ref{thm:convex_approx}]
Given $\epsilon>0$, let $\delta(\epsilon)$ as in Lemma \ref{lem:boundary}. We start by applying Theorem \ref{thm:sandwich} separately on the function $\phi$ restricted on the interval $[-1+\delta,0]$ and $\phi$ restricted on $[0,1-\delta]$ to get piecewise linear convex approximation $\ell$ to $\phi$ such that $0\leq \phi(\rho)-\ell(\rho)\leq \epsilon$ for all $|\rho|\leq 1-\delta$. Let $I^{-}=\{[\rho^{-}_{j-1},\rho^{-}_{j}]\}_{j \in [J^{-}]}$ and $I^{+}=\{[\rho^{+}_{j-1},\rho^{+}_{j}]\}_{j \in [J^{+}]}$  with $J^{\pm}= O(\sqrt{\frac{L(\phi)}{\epsilon}})$  be the corresponding decompositions of $[-1+\delta,0]$ and $[0,1-\delta]$ in contiguous subintervals where the function $\ell$ is linear.  For each $j\in [J^{\pm}]$, let $\mathcal{T}^{\pm}_{j}$ be the set of points  resulting  by applying Lemma \ref{lem:linear}  to $[\rho^{\pm}_{j-1},\rho^{\pm}_{j}]$ and set $T^{\pm}_{j}=|\mathcal{T}_{j}^{\pm}|$.  We define the following set of points $\mathcal{T}_{\epsilon}(\phi):=\left(\cup_{j=1}^{J^{+}}\mathcal{T}^{+}_{j}\right) \cup \left(\cup_{j=1}^{J^{-}}\mathcal{T}^{-}_{j}\right)\cup \{1,-1\}$. We have
\begin{align*}
|\cup_{j=1}^{J_{\pm}}\mathcal{T}^{\pm}_{j}| & \leq \sum_{j=1}^{J^{\pm}}(1+T^{\pm}_{j}) \leq J^{\pm}+\frac{\log (\frac{1}{\delta})}{{\log(1+\sqrt{\frac{\epsilon}{8R(\phi)}})}}
\end{align*}
Using $\log(1+x)\geq \frac{2}{3}x$ for $x\in [0,1]$ and $R(\phi)\leq 2L(\phi)$, we get that $|\mathcal{T}_{\epsilon}(\phi)|=O\left(\sqrt{\frac{L(\phi)}{\epsilon}}\log(\frac{L(\phi)}{\epsilon})\right)$. 

Let $\hat{\phi}(\rho) := \sup_{\rho_{0}\in \mathcal{T}_{\epsilon}(\phi)}\{h_{\rho_{0}}(\rho)\}$. Due to Propositions \ref{prop:derivatives} and \ref{prop:infimum}, we get    $\phi(\rho)\geq\ell(\rho)\geq  h_{\rho_{0}}(\rho)$ for all $\rho$ and $\rho_{0}\in \mathcal{T}_{\epsilon}(\phi)$ and consequently $\phi(\rho)-\hat{\phi}(\rho)\geq 0$. Let $T=|\mathcal{T}_{\epsilon}(\phi)|$  and $\rho_{1},\ldots, \rho_{T}$ an increasing ordering of points in $\mathcal{T}_{\epsilon}(\phi)$. We have
\begin{align}
\sup_{\rho\in [-1,1]}\{\phi(\rho)-\hat{\phi}(\rho)\} = \max_{i\in [T-1]}\sup_{\rho\in [\rho_{i},\rho_{i+1}]}\{\phi(\rho)-\hat{\phi}(\rho)\}\leq \max_{i\in [T-1]}\sup_{\rho\in [\rho_{i},\rho_{i+1}]}\left\{\phi(\rho)-\max\{h_{\rho_{i}}(\rho),h_{\rho_{i+1}}(\rho) \}\right\}\nonumber
\end{align}
which is bounded by $2\epsilon$ due to Theorem \ref{thm:sandwich} and Lemmas \ref{lem:boundary}, \ref{lem:linear}. 
\end{proof}
\section{Scale-free Multi-Resolution Hashing for Log-convex functions}\label{sec:scale-free}
In the previous section, we have shown that using the idealized hashing probabilities one can approximate a log-convex function up to arbitrary multiplicative accuracy. In this section, we use this fact to construct explicit scale-free Multi-resolution HBE, that constitutes the main ingredient needed to prove our main result.
\begin{theorem}\label{thm:log-convex}
 Given a convex function $\phi$, $X\subset \mathcal{S}^{d-1}$ and $\beta\in [0,1]$,  there exist an explicit constant $M_{\phi}$ and $(\beta, M_{\phi})$-scale free estimator $Z_{T}\sim \mathrm{HBE}_{X}^{2}(\{\mathcal{H}_{t},p_{t}\}_{t\in [T]})$ for $Z_{\phi}(y)$ with complexity $O(d\{L(\phi)\}^{5/6}M_{\phi})$.
\end{theorem}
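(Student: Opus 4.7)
My plan is to instantiate the Multi-resolution HBE framework by invoking Theorem \ref{thm:convex_approx} on the scaled convex function $\beta\phi$ to produce a set of interpolation points $\mathcal{T}_\epsilon(\beta\phi) = \{\rho_1, \ldots, \rho_T\}$, and for each such $\rho_t$ to build a DSH hash family $\mathcal{H}_t = \mathcal{D}_{\gamma_t}(t_t, \zeta)$ whose idealized log-collision probability is $h_{\rho_t}$. I would set the parameters $(\gamma_t, t_t)$ according to the explicit formulas \eqref{eq:gamma} and \eqref{eq:threshold} applied to $\beta\phi$ at $\rho_t$, and handle the boundary interpolation points $\rho_0 \in \{-1,+1\}$ via the subsampling trick noted after \eqref{eq:threshold}: the additive constant $\beta\phi(\pm 1) \le 0$ is implemented by pre-subsampling $X$ with probability $e^{\beta\phi(\pm 1)}$. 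Combining these $T$ hash families via the $p^2$-weighting scheme $w_t(x,y) = p_t^2(x,y)\,w(x,y)/W(x,y)$ from Section \ref{sec:variance} then yields an unbiased estimator $Z_T \sim \mathrm{HBE}^2_X(\{\mathcal{H}_t, p_t\}_{t \in [T]})$ by Lemma \ref{lem:moments}.

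Next I would verify the scale-free property. By Corollary \ref{cor:sensitive_bounds}, for fixed $\zeta$ and a boundary cutoff $\delta$ matched to the one used by Theorem \ref{thm:convex_approx}, every collision probability satisfies $p_t(\rho) = \Theta(e^{h_{\rho_t}(\rho)})$ with multiplicative factors absorbed into $M_\phi$. Hence the pointwise supremum is $p_*(\rho) = \Theta(\exp(\sup_t h_{\rho_t}(\rho)))$, and by Theorem \ref{thm:convex_approx} the exponent agrees with $\beta\phi(\rho)$ up to an additive $O(\epsilon)$ once $\epsilon$ is chosen to be a suitable constant depending on $\beta$. This yields $p_*(x,y) = \Theta(e^{\beta\phi(\rho)}) = \Theta(w(x,y)^{\beta})$, i.e. the $(\beta, M_\phi)$-scale free condition, with $M_\phi$ absorbing $C_1(\delta)$, $e^{O(\epsilon)}$, and the boundary subsampling loss.

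For the complexity bound, I would count $T = O(\sqrt{L(\phi)/\epsilon}\,\log(L(\phi)/\epsilon))$ hash families and bound the evaluation cost of each by $O(d\,t_{\gamma_t}\log(1/\zeta)\,e^{t_{\gamma_t}^2/2})$ from Corollary \ref{cor:sensitive_bounds}. Using the bounds on $t_0^2$ and $t_0^2\gamma_0^2$ from Corollary \ref{col:exponent_bounds} applied to $\beta\phi$, I would estimate $t_{\gamma_t}$ at each interpolation point; when $t_{\gamma_t}^2$ is too large I would replace $\mathcal{D}_{\gamma_t, t_t}$ with a tensor power $\mathcal{D}_{\gamma_t,\, t_t/\sqrt{k}}^{\otimes k}$, which has identical collision probability but cost $O(\sqrt{k}\,d\,t_{\gamma_t}\,e^{t_{\gamma_t}^2/(2k)})$, converting an exponential dependence into a polynomial one. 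Choosing $k$ per family so that $t_{\gamma_t}^2/(2k) = O(\log L(\phi))$ and summing over families yields the claimed $O(d\,L(\phi)^{5/6}\,M_\phi)$ total.

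The main obstacle I expect is the last balancing step: interpolation points close to $\rho = \pm 1$ force $t_{\gamma_t}^2$ to grow as large as $\Theta(L(\phi))$, which naively makes hash evaluation exponential in $L(\phi)$. Showing that the tensor-power substitution recovers polynomial per-family cost \emph{and} that the sum of these costs across all $T$ families matches exactly the announced $L(\phi)^{5/6}$ exponent (rather than a larger polynomial) requires careful accounting of how the worst-case $t_{\gamma_t}$ at each interpolation point scales with its distance from the boundary, combined with the specific geometric spacing of interpolation points produced by Lemma \ref{lem:linear}. A secondary subtlety is that unbiasedness $\sum_t w_t(x,y) = w(x,y)$ from Lemma \ref{lem:moments} must be preserved after the boundary subsampling, which is straightforward but needs to be checked.
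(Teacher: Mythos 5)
Your skeleton—approximate the log-density by a supremum of DSH log-collision-probabilities via Theorem~\ref{thm:convex_approx}, combine the families with $p^2$-weighting, handle $\rho_0=\pm 1$ by subsampling—matches the paper's, and your observation that $\mathcal{D}_{\gamma,t/\sqrt{k}}^{\otimes k}$ reproduces the same idealized exponent $h_{\gamma,t}$ at per-copy cost $e^{t_\gamma^2/(2k)}$ is exactly the mechanism the paper uses. But there is a gap that your plan, as written, does not survive.

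The tensor power does \emph{not} have identical collision probability to the untensorized family in the sense that matters. By Corollary~\ref{cor:sensitive_bounds} the base family satisfies only $C_1^{-1}e^{h}\leq p\leq C_1 e^{h}$, so the $k$-fold tensor satisfies $C_1^{-k}e^{h}\leq p^{k}\leq C_1^{k}e^{h}$: the multiplicative slack compounds to $C_1^k$, and that compounded slack is precisely what $M_\phi$ must absorb (this is Lemma~\ref{lem:fidelity}, where the approximation error becomes $k\log C_1$). Since $M_\phi$ enters the query time as $M_\phi^4$ via Theorems~\ref{thm:scale-free} and~\ref{thm:v-bounded}, it cannot be allowed to grow freely. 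In your plan you apply Theorem~\ref{thm:convex_approx} to $\beta\phi$ directly, for which Corollary~\ref{col:exponent_bounds} / Lemma~\ref{lem:complexity} gives a worst interpolation point with $t_{\gamma_t}^2=\Theta(L(\phi)R(\phi))=\Theta(L(\phi)^2)$. Your prescription $t_{\gamma_t}^2/(2k_t)=O(\log L(\phi))$ then forces $k_t=\Theta(L(\phi)^2/\log L(\phi))$ at those families, hence $M_\phi\geq C_1^{k_t}=e^{\Theta(L(\phi)^2/\log L(\phi))}$ — super-exponential in $L(\phi)$, strictly worse than uniform sampling. The ``obstacle'' you flag in your final paragraph is not bookkeeping but a fatal obstruction to this route.

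The paper's fix is to apply Theorem~\ref{thm:convex_approx} not to $\beta\phi$ but to the pre-smoothed $\tilde\phi(\rho)=\beta(\phi(\rho)-\phi_{\max})/k^*$, and to tensorize every family by the \emph{same} global power $k^*$. Pre-smoothing shrinks both $L(\tilde\phi)=\beta L(\phi)/k^*$ (and hence the number of interpolation points, $\approx L(\phi)^{1/6}$ rather than your $L(\phi)^{1/2}$) and the per-point exponent $t_{\gamma_0}^2\leq 8L(\tilde\phi)R(\tilde\phi)=8(\beta/k^*)^2 L(\phi)R(\phi)$. Choosing $k^*=\Theta\bigl(\{\beta^2 L(\phi)R(\phi)/\log C_1\}^{1/3}\bigr)$ balances the evaluation cost $e^{4(\beta/k^*)^2 L(\phi)R(\phi)}$ against the slack $C_1^{2k^*}$; both become $M_\phi=e^{O((L(\phi)R(\phi))^{1/3}\log^{2/3} C_1)}$, which is sub-exponential, and the polynomial factor $|\mathcal{T}_{1/2}(\tilde\phi)|\cdot k^*$ comes out as $O(L(\phi)^{5/6}\log L(\phi))$. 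Without moving the smoothing inside the argument of the approximation theorem — i.e. without reducing $L$ before generating interpolation points — neither a sub-exponential $M_\phi$ nor the $L(\phi)^{5/6}$ exponent is achievable.
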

\begin{proof} 
The main challenge in proving the result is to trade-off  \emph{complexity of evaluating} the hashing scheme versus the \emph{fidelity of the approximation} of $\beta [\phi(\langle x,y\rangle)-\phi_{\max}]$ by $\log p_{*}(x,y)$ that affects the variance. In order to do that, set $\delta^{*}=\frac{1}{2\beta L(\phi)}$ and for $C^{*}=C_{1}(\delta^{*})$ as in Corollary \ref{cor:sensitive_bounds}, define 
\begin{equation}\label{eq:power}
k^{*}= \left\lceil\left\{\frac{2\beta^{2}}{\log C^{*}}L(\phi)R(\phi)\right\}^{1/3} \right\rceil
\end{equation}
We further define a ``smoothed" version of $\phi$ as $\tilde{\phi}(\rho):= \frac{\beta(\phi(\rho)-\phi_{\max})}{k^{*}}$. If $L(\tilde{\phi})=\frac{\beta}{k^{*}}L(\phi)<2$ then  the variation in the function $R(\tilde{\phi})<4$ is too small and a constant number of random samples suffice to answer any query. So, we only deal with the interesting case when   and $L(\tilde{\phi})\geq 2$ and $R(\tilde{\phi})\geq 4$.
\begin{enumerate}
\item \emph{Approximation:} let $\mathcal{T}_{1/2}=\mathcal{T}_{1/2}(\tilde{\phi})$ be the set of interpolation points resulting from invoking Theorem \ref{thm:convex_approx} for $\tilde{\phi}$  and  $\epsilon=\frac{1}{2}$. For this set of points we have  $\Bigl|\sup_{\rho_{0}\in \mathcal{T}_{1/2}}\{h_{\rho_{0}}(\rho)\} - \tilde{\phi}(\rho) \Bigr| \leq 1$.\label{it:approx}
\item  \emph{Hashing scheme:} let $\rho_{1}<\ldots<\rho_{T}$ be an increasing enumeration of points in $\mathcal{T}_{1/2}$. For each $t\in [T]$, let  $\tilde{\mathcal{H}}_{t}$ be the DSH family with collision probability $\tilde{p}_{t}$ and  parameters given by \eqref{eq:gamma} and \eqref{eq:threshold} (for $\tilde{\phi}$ and $\rho_{t}$). We raise each hashing scheme to the $k^{*}$-th power to get $\mathcal{H}_{t}:=\tilde{\mathcal{H}}^{\otimes k^{*}}_{t}$ with collision probability $p_{t}:=\tilde{p}_{t}^{k^{*}}$.  Using Lemma \ref{lem:collision_basic} and Corollary \ref{cor:sensitive_bounds}\label{it:fidelity} we show:
\begin{lemma}\label{lem:fidelity}
$\Bigl|\sup_{t\in[T]}\{\log p_{t}(\rho)\}-k^{*}\sup_{t\in [T]}\{h_{\rho_{t}}(\rho)\}\Bigr| \leq   k^{*}\log C_{1} $ for all $\rho\in [-1,1]$.
\end{lemma}
\item \emph{Scale-free property:} by the previous two steps and noting that $\log w^{\beta}(x,y) = k^{*}\tilde{\phi}(\langle x,y\rangle)$
\begin{equation}
\Bigl|\sup_{t\in[T]}\{\log p_{t}(\rho)\}-\log w(x,y)^{\beta}\Bigr| \leq k^{*}+k^{*}\log C_{1} \leq 2k^{*}\log C_{1}\label{eq:below}
\end{equation}
This shows that $Z_{T}\sim \mathrm{HBE}_{X}^{2}(\{\mathcal{H}_{t},p_{t}\}_{t\in[T]})$ is $(\beta,M_{\phi})$-scale free with $M_{\phi}:=e^{2k^{*}\log C_{1}}$.
\item \emph{Complexity:} To bound the complexity of the estimator  $Z_{T}\sim \mathrm{HBE}_{X}^{2}(\{\mathcal{H}_{t},p_{t}\}_{t\in[T]})$, we need by \eqref{eq:spherical_caps}, \eqref{eq:gamma},  \eqref{eq:threshold}  to bound $t_{\gamma_{0}}^{2}=t^{2}_{0}\max\{\gamma_{0}^{2},1\}$ for $\rho_{0}\in \mathcal{T}_{1/2}(\tilde{\phi})$. Using Corollary \ref{col:exponent_bounds} we get
\begin{lemma}\label{lem:complexity} If $L(\tilde{\phi})\geq 2$ and $R(\tilde{\phi})\geq \frac{1}{2}$, then $\forall \rho_{0}\in \mathcal{T}_{1/2}(\tilde{\phi})$, $t_{\gamma_{0}}^{2}   \leq 8\left(\frac{\beta}{k} \right)^{2} L(\phi) R(\phi)$.
\end{lemma}
Hence, the  complexity of evaluating the estimator is 
$O\left(|\mathcal{T}_{\frac{1}{2}(\tilde{\phi})}|k^{*}d\log(\frac{1}{\zeta})e^{4\left(\frac{\beta}{k^{*}}\right)^{2}L(\phi) R(\phi)}\right)$, by Theorem \ref{thm:convex_approx} and our choice   \eqref{eq:power}, this is bounded by $O(dL(\phi)^{5/6}M_{\phi})$.
\end{enumerate}
\end{proof}

\subsection{Main Result}
\begin{theorem}\label{thm:main}
 Given $\epsilon,\tau \in (0,1)$, for every convex function $\phi$ with Lispchitz constant $L(\phi)$, there exists an explicit constant $M_{\phi}$ and a data structure using space $O(dL(\phi)^{5/6}M^{3}_{\phi}\frac{1}{\epsilon^{2}}\frac{1}{\sqrt{\tau}}\cdot n)$ and query time $O(dL(\phi)^{5/6}M^{4}_{\phi}\frac{1}{\epsilon^{2}}\frac{1}{\sqrt{\mu}})$ that for any $y\in \mathcal{S}^{d-1}$ with constant probability  can either produce an $(1+ \epsilon)$ approximation to $\mu=Z_{\phi}(y)\geq \tau$ or assert  that $\mu<\tau$.
\end{theorem}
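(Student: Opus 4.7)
The plan is to combine the three workhorses established earlier: Theorem~\ref{thm:log-convex} (which gives a scale-free Multi-Resolution HBE for any convex $\phi$), Theorem~\ref{thm:scale-free} (which turns the scale-free property into a quantitative bound on the second moment), and Theorem~\ref{thm:v-bounded} (which turns a $V$-bounded estimator into a data structure). The theorem is essentially the package these three results were built for, so the proof amounts to instantiating each with $\beta = 1/2$ and checking that the bookkeeping matches the claimed complexity.

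First, I invoke Theorem~\ref{thm:log-convex} with $\beta = 1/2$ applied to the given convex $\phi$ and point set $X \subset \mathcal{S}^{d-1}$. This produces a family of hash schemes $\{(\mathcal{H}_t, p_t)\}_{t\in[T]}$ and an unbiased estimator $Z_T \sim \mathrm{HBE}^2_X(\{\mathcal{H}_t, p_t\}_{t\in[T]})$ that is $(1/2, M_\phi)$-scale free, with the explicit constant $M_\phi = e^{2k^\ast \log C_1}$ from the proof of that theorem, and with per-sample evaluation complexity $\mathcal{C} = O(d\, L(\phi)^{5/6}\, M_\phi)$. The choice $\beta = 1/2$ is dictated by the remark following Theorem~\ref{thm:scale-free} that this value optimises the worst-case variance.

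Next, I plug the scale-free estimator into Theorem~\ref{thm:scale-free} with $\beta = 1/2$, obtaining
\begin{equation*}
\E[Z_T^2(y)] \;\leq\; 16\, M_\phi^3\, \mu^{3/2} + \mu^2,
\end{equation*}
so that the relative variance satisfies $V_{\mathcal{D}}(\mu) = O\!\bigl(M_\phi^3/\sqrt{\mu}\bigr)$. To invoke Theorem~\ref{thm:v-bounded} I must certify that $Z_T$ is $V$-bounded: the function $V_{\mathcal{D}}(\mu)$ is monotone decreasing in $\mu$ and the product $\mu^2 V_{\mathcal{D}}(\mu) = O(M_\phi^3 \mu^{3/2})$ is monotone increasing in $\mu$, both trivially from the closed form. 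Hence the estimator meets the monotonicity hypotheses.

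Finally, applying Theorem~\ref{thm:v-bounded} with parameters $\epsilon, \tau$ (and absorbing a $\log(1/\chi)$ factor into the constant probability guarantee) yields a data structure using space $O(\epsilon^{-2} V_{\mathcal{D}}(\tau)\, \mathcal{C}\, n) = O\!\bigl(\epsilon^{-2} \tau^{-1/2} d\, L(\phi)^{5/6} M_\phi^{4}\, n\bigr)$ and query time $O(\epsilon^{-2} V_{\mathcal{D}}(\mu)\, \mathcal{C}) = O\!\bigl(\epsilon^{-2} \mu^{-1/2} d\, L(\phi)^{5/6} M_\phi^{4}\bigr)$, matching the stated bounds (up to the minor discrepancy of one factor of $M_\phi$ between the space and query-time exponents, which collapses into the explicit constant). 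The data structure's ability to detect $\mu < \tau$ is inherited directly from Theorem~\ref{thm:v-bounded}. There is no real obstacle here beyond careful bookkeeping; the conceptual difficulty has already been absorbed into Theorems~\ref{thm:convex_approx}, \ref{thm:log-convex}, and \ref{thm:scale-free}. The most delicate point, were one to be picky, is verifying that the $\mathrm{poly}(M_\phi)$ factors and the additive $+\mu^2$ in the second-moment bound can be absorbed cleanly into $V_{\mathcal{D}}$ while preserving the $V$-bounded monotonicity --- a routine check.
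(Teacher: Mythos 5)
Your proof is correct and takes exactly the approach the paper uses: the paper's own proof is a one-line invocation of Theorems~\ref{thm:log-convex}, \ref{thm:scale-free}, and \ref{thm:v-bounded} with $\beta^{*}=1/2$. The $M_\phi^3$ versus $M_\phi^4$ discrepancy you flagged is indeed an inconsistency in the paper's statement --- since $V_{\mathcal{D}}(\tau)=O(M_\phi^3/\sqrt{\tau})$ and $\mathcal{C}=O(dL(\phi)^{5/6}M_\phi)$ both enter the space bound in Theorem~\ref{thm:v-bounded}, the space should carry $M_\phi^4$ just as the query time does.
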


\begin{proof} Follows by invoking Theorems \ref{thm:log-convex}, \ref{thm:scale-free} and \ref{thm:v-bounded}  for $\beta^{*}=1/2$.
\end{proof}
The explicit constant $M_{\phi}:=e^{\{2\log(C^{*})\sqrt{L(\phi)R(\phi)}\}^{2/3}}$ (where $R(\phi)\leq 2L(\phi)$ is the range of $\phi$ and $\log(C^{*})=O(\log L(\phi))$)  is sub-exponential in $L(\phi)$ and is of similar nature to the evaluation time of the Andoni-Indyk LSH~\cite{andoni2006near} and Spherical LSH~\cite{andoni2015optimal}. It corresponds to the  number of randomly placed spherical caps  of certain size that are required to  cover most of the unit sphere. 
\begin{proof}[Proof of Theorem \ref{thm:simple}] The simplified version of our main result follows by setting $L\leq (1-\delta) \log n$. We have that $\mu \geq e^{-2L(\phi)}\geq n^{2(1-\delta)}\Rightarrow \frac{1}{\sqrt{\mu}}\leq n^{1-\delta}$ and $L(\phi)^{5/6}M^{4}_{\phi}=e^{O(\log^{2/3}(n)\log \log n)}=n^{o(1)}$. 
\end{proof}

\section{Reduction from Euclidean Space to Unit Sphere}\label{sec:reduction}
In order to extend our method from unit sphere to bounded subsets of Euclidean space the main observation is that given $\gamma\in (0,1]$, if for two sets $S_{x}, S_{y}\subset \R^{d}$ we have that 
 $\forall x_{1},x_{2}\in S_{x}, \|x_{1}\|/\|x_{2}\|\leq (1+\gamma)$ and $\forall y_{2},y_{1}\in S_{y}$, $\|y_{1}\|/\|y_{2}\|\leq (1+\gamma)$, then $\forall x_{1},x_{2}\in S_{x}, \forall y_{1},y_{2}\in S_{y}$
\begin{align}\label{eq:approximation}
  \langle x_{1}, y_{1}\rangle \approx \|x_{2}\|\|y_{2}\| \left\langle \frac{x_{1}}{\|x_{1}\|}, \frac{y_{1}}{\|y_{1}\|}\right\rangle  .
\end{align}
This fact suggests the following strategy:
\begin{enumerate}
\item Partition the data set $X=X_{1}\uplus\ldots\uplus X_{K}$ and the set of possible queries $Y=Y_{1}\uplus\ldots\uplus Y_{K}$  in \emph{spherical annuli} $\{X_{i}\}_{i\in[K]}$ and $\{Y_{j}\}_{j\in[K]}$.
\item For each pair $(X_{i},Y_{j})$ use the approximation \eqref{eq:approximation} and assume that for some $r_{i}$ and $r_{j}$ all points in $X_{i}$ and $Y_{j}$  approximately lie on $r_{i}\mathcal{S}^{d-1}$ and $r_{j}\mathcal{S}^{d-1}$ respectively.
\item For each such pair construct a Multi-resolution HBE to obtain a low-variance unbiased estimator of the contribution of points in $X_{i}$ for any possible value of $j\in [K]$ (annulus the query might belong to).
\item Sum up the contribution for all $i\in [K]$ to obtain the final estimator and bound its variance.
\end{enumerate}
Our approach applies to the following general class of functions:
\begin{equation}\label{eq:general_form}
w(x,y) = p_{0}(\|x\|)e^{\phi(\langle x,y \rangle)+\mathcal{A}(y)}
\end{equation}
where $\phi$ is convex and Lipschitz, $\mathcal{A}(y)$ arbitrary\footnote{For any given   query $y$, $e^{\mathcal{A}(y)}$ is a constant factor that can be factored out.} and $p_{0}:\R_{++}\to \R_{++}$ satisfies a  notion of smoothness that is related to Lipschitz continuity under   the \emph{Hilbert metric} $d_{H}(x,y):= |\log(\frac{x}{y})|$  for $x,y\in \R_{+}$. 

\begin{definition}\label{def:log-lipshcitz}
 For $H,\delta\geq 0$ and $\gamma\in (0,1]$, a function $p_{0}:\R_{++}\to \R_{++}$ is called $(H,\delta,\gamma)$-\emph{log-Lipschitz}, if for all $r_{1}\geq r_{2}>0$ such that $r_{1}\leq (1+\gamma)r_{2}$ we have $\left|\log\left(p_{0}(r_{1})/p_{0}(r_{2})\right) \right| \leq H\cdot\gamma   + \delta$.
\end{definition}
This notion of smoothness implies that the function \emph{changes multiplicatively within each annulus}.

\begin{proposition}\label{prop:log-parameters}
For $\gamma\in (0,1]$ and all $r\in (0, R]$ the function $r^{q}e^{f(r)}$ is $(|q|,\  L(f)R\gamma, \ \gamma)$-log-Lipschitz. 
\end{proposition}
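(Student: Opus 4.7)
The plan is to reduce the claim to a direct computation based on the triangle inequality applied to $\log(p_{0}(r_{1})/p_{0}(r_{2}))$, separating the contribution of the power $r^{q}$ from that of the exponential factor $e^{f(r)}$. Concretely, take any $r_{1}\geq r_{2}>0$ with $r_{1}\leq (1+\gamma)r_{2}$, and write
\[
\log\!\bigl(p_{0}(r_{1})/p_{0}(r_{2})\bigr) = q\log(r_{1}/r_{2}) + \bigl(f(r_{1})-f(r_{2})\bigr).
\]
By the triangle inequality,
\[
\bigl|\log(p_{0}(r_{1})/p_{0}(r_{2}))\bigr|\leq |q|\log(r_{1}/r_{2}) + |f(r_{1})-f(r_{2})|.
\]
The two terms will match the two parameters $H=|q|$ and $\delta = L(f)R\gamma$.

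For the first term, the hypothesis $r_{1}\leq (1+\gamma)r_{2}$ gives $\log(r_{1}/r_{2})\leq \log(1+\gamma)\leq \gamma$ (valid for $\gamma\geq 0$). Thus $|q|\log(r_{1}/r_{2})\leq |q|\gamma$, which is exactly the $H\cdot\gamma$ contribution required by \prettyref{def:log-lipshcitz}. For the second term, I would invoke \prettyref{def:lipschitz} to write $|f(r_{1})-f(r_{2})|\leq L(f)|r_{1}-r_{2}|\leq L(f)\gamma r_{2}\leq L(f)\gamma R$, using $r_{2}\leq R$. This contribution $L(f)R\gamma$ plays the role of the additive term $\delta$.

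Adding the two bounds yields $|\log(p_{0}(r_{1})/p_{0}(r_{2}))|\leq |q|\gamma + L(f)R\gamma$, which is exactly the defining inequality for $(|q|,L(f)R\gamma,\gamma)$-log-Lipschitzness. There is no serious obstacle here: the only slightly delicate point is recognizing that one should charge the power-law part to the multiplicative term $H\gamma$ (since $\log(1+\gamma)\leq \gamma$ is sharp up to constants) and the smooth part $f$ to the additive slack $\delta$, because bounding $|f(r_{1})-f(r_{2})|$ via the Euclidean Lipschitz constant naturally produces a factor of $|r_{1}-r_{2}|$ that we can only control by the diameter scale $R$, not by a ratio bound. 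This clean split is what makes the parametrization of \prettyref{def:log-lipshcitz} the right one for functions of the form $r^{q}e^{f(r)}$ appearing in \eqref{eq:general_form}.
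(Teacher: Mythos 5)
Your proof is correct and follows the same approach as the paper: split $\log(p_0(r_1)/p_0(r_2))$ via the triangle inequality into the power-law contribution bounded by $|q|\log(r_1/r_2)\leq |q|\gamma$ and the exponential contribution bounded by $L(f)|r_1-r_2|\leq L(f)R\gamma$. The paper's proof is a one-line version of exactly this computation.
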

\begin{proof}Let $r_{1},r_{2}\in (0,R]$ such that $r_{2}\leq r_{1}\leq (1+\gamma)r_{2}$,  then
\begin{equation}
\left|\log\left(r_{1}^{q}e^{f(r_{1})}/(r_{2}^{q}e^{f(r_{2})})\right)\right| \leq |q| |\log(r_{1}/r_{2})| + |f(r_{1})-f(r_{2})| \leq |q| \gamma + L(f) R \gamma.
\end{equation} 
\end{proof}
Functions that are of the form \eqref{eq:general_form} include the Gaussian kernel $e^{-\|x-y\|^{2}}$ or the norm of the derivative of the logistic log-likelihood $\|\nabla_{y}\log(1+\exp(\langle x, y\rangle))\|=\|x\|(1+e^{-\langle x , y\rangle})^{-1}$.
For concreteness we are going to assume that the function $p_{0}$ is $(q, HR\gamma, \gamma)$-log-Lipshcitz for some $q,H>0$, as in Proposition \ref{prop:log-parameters}, instead of using general $\delta$ as in Definition \ref{def:log-lipshcitz}. However, our result applies also to the more general case.  In the rest of this section, we carry out the   strategy outlined above. 
\subsection{Partitioning in Spherical Annuli}\label{ssec:paritioning}
Given $0<\gamma\leq 1$, a dataset $X$ and a set of possible queries $Y$, define
\begin{align}
r_{0}&:=r_{0}(X,Y)=\inf\{\|z\|: z\in X\cup Y, z\neq 0\}\label{eq:def_r0}\\
R&:=R(X,Y)=\sup\left\{\|z\|:z\in X\cup Y\right\}\label{eq:def_R}\\
K &:=K(R/r_{0},\gamma)=\lceil \log(R/r_{0}) / \log(1+\gamma) \rceil\label{eq:K_def}
\end{align}
Further for $i\in \mathbb{Z}$ define $r_{i}:=(1+\gamma)^{i-1}r_{0}$ and $S_{i}:=S_{i}(\gamma)=[r_{i}, r_{i+1})$ and the corresponding sets:
\begin{equation}
X_{i}:=\{x\in X\bigr|\|x\|\in S_{i}\}, \ i \in [K]
\end{equation}
For any point $x\in \R^{d}$ define $i(x):=\arg\min_{i\in \mathbb{Z}}\{\|x\|\in S_{i}\}$, and its \emph{norm-truncated version}:
\begin{equation}\label{eq:truncated}
\tilde{x}:=\tilde{x}_{\gamma} = \frac{x}{\|x\|} r_{i(x)} 
\end{equation}
For any point $x\neq 0$ let $\hat{x}:= \frac{x}{\|x\|}$. Note that   $\hat{x}= \frac{\tilde{x}}{r_{i(x)}}$ is also the normalized version of $\tilde{x}$. The motivation for partitioning the space in such annuli and projecting points on the inner boundary of each spherical annulus is that in doing so the ratio between the function $w(x,y)$ and $w(\tilde{x},\tilde{y})$ does not change too much.

\begin{lemma}\label{lem:ratio}For points $x,y\in \R^{d}$ such that $\|x\|,\|y\|\in [r_{0}, R]$  and $\gamma\in (0,1]$, let   $w(x,y)=p_{0}(\|x\|)e^{\phi(\langle x,y\rangle)}$ with $p_{0}$ being $(q, HR\gamma,\gamma)$-log-Lipshcitz and $\phi$ being $L$ Lipschitz. Then
\begin{equation}
e^{-\left(q+HR+3Lr_{i(x)}r_{i(y)}\right)\gamma}\leq \frac{w(\tilde{x}_{\gamma},\tilde{y}_{\gamma})}{w(x,y)}\leq e^{ \left(q+HR+3Lr_{i(x)}r_{i(y)}\right)\gamma}\label{eq:ratio}
\end{equation}
\end{lemma}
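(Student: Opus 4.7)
The plan is to decompose the log-ratio into a ``radial'' part, controlled by the log-Lipschitz property of $p_0$, and an ``angular'' part, controlled by the Lipschitz property of $\phi$. Writing
\[
\log\frac{w(\tilde x_\gamma, \tilde y_\gamma)}{w(x,y)} = \log\frac{p_0(\|\tilde x_\gamma\|)}{p_0(\|x\|)} + \bigl[\phi(\langle \tilde x_\gamma, \tilde y_\gamma\rangle) - \phi(\langle x, y\rangle)\bigr],
\]
I will bound each summand in absolute value by a multiple of $\gamma$, then exponentiate.

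For the radial part, by the definition of $\tilde x_\gamma$ in \eqref{eq:truncated} we have $\|\tilde x_\gamma\| = r_{i(x)}$, while $\|x\| \in [r_{i(x)}, (1+\gamma) r_{i(x)})$. Applying the $(q, HR\gamma, \gamma)$-log-Lipschitz property of $p_0$ with $r_1 = \|x\|$ and $r_2 = \|\tilde x_\gamma\|$ then yields $|\log(p_0(\|\tilde x_\gamma\|)/p_0(\|x\|))| \leq q\gamma + HR\gamma = (q + HR)\gamma$ directly.

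For the angular part, $L$-Lipschitzness of $\phi$ reduces the task to bounding $|\langle \tilde x_\gamma, \tilde y_\gamma\rangle - \langle x, y\rangle|$. The key observation is that norm-truncation preserves the unit vectors: $\tilde x_\gamma = r_{i(x)} \hat x$ and $x = \|x\| \hat x$ (and analogously for $y$), so both inner products equal the common scalar $\langle \hat x, \hat y\rangle$ multiplied by a different product of radii. Using $|\langle \hat x, \hat y\rangle|\leq 1$, the inequality $\|x\|\|y\| \geq r_{i(x)} r_{i(y)}$, and the elementary bound $(1+\gamma)^2 - 1 = 2\gamma + \gamma^2 \leq 3\gamma$ (valid since $\gamma \leq 1$), I obtain $|\langle \tilde x_\gamma, \tilde y_\gamma\rangle - \langle x, y\rangle| \leq 3 r_{i(x)} r_{i(y)} \gamma$. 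Multiplying by $L$ gives a contribution of $3 L r_{i(x)} r_{i(y)} \gamma$ to the log-ratio.

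Adding the two bounds yields $|\log(w(\tilde x_\gamma, \tilde y_\gamma)/w(x,y))| \leq (q + HR + 3 L r_{i(x)} r_{i(y)})\gamma$, and exponentiating produces both sides of the claimed inequality. There is no serious obstacle; the only subtlety is that the stated form $w(x,y) = p_0(\|x\|) e^{\phi(\langle x, y\rangle)}$ has $p_0$ depending on $\|x\|$ alone and not on $\|y\|$, so the $(q + HR)\gamma$ term appears only once rather than being doubled.
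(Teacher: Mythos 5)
Your proof is correct and follows essentially the same approach as the paper's: the same decomposition into a radial term bounded via the log-Lipschitz property of $p_0$ and an angular term bounded via $L$-Lipschitzness of $\phi$, with the same elementary bound $(1+\gamma)^2-1\leq 3\gamma$ driving the constant $3Lr_{i(x)}r_{i(y)}\gamma$.
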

This suggests that if we pick $\gamma$ appropriately we can use the framework of Multi-resolution HBE to perform importance sampling for each annulus separately and bound  the variance of the overall estimator. 

\begin{theorem}\label{thm:euclidean}
For a set $X\subset \R^{d}$  and a set of possible queries $Y$ define $r_{0},R $ by \eqref{eq:def_r0},  \eqref{eq:def_R} respectively. For every convex function $\phi:[-R^{2},R^{2}]\to \R$ and a  $(q,HR\gamma,\gamma)$-log-Lipschitz function $p_{0}$, let $w(x,y)=p_{0}(\|x\|)e^{\phi(\langle x,y\rangle)}$. There exists  constants $\gamma^{*}\in (0,1]$,  $K^{*}$ and a distribution $\mathcal{D}^{*}$ such that for every $y\in Y$, the estimator $Z(y)\sim \mathcal{D}_{*}$ is unbiased $\E[Z(y)]=Z_{w}(y)=\mu$, $V$-bounded with  $V(\mu)=2e^{5/2}(8M_{\phi_{K^{*}K^{*}}}^{3}+1)\mu^{-1/2}$ and has complexity $O\left(d(K^{*})^{2}(L(\phi)R^{2})^{5/6}M_{\phi_{K^{*}K^{*}}}\right)$ where $M_{\phi_{K^{*}K^{*}}}=\exp(O\left(\left\{\log(L(\phi)(K^{*})^{2})L(\phi)(K^{*})^{2}\right\}^{2/3}\right))$.
\end{theorem}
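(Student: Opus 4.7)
The plan is to execute the four-step strategy outlined at the start of \prettyref{sec:reduction}: partition $X$ and the query space into spherical annuli, apply the unit-sphere machinery of \prettyref{thm:log-convex} to each pair of annuli, and stitch the per-annulus estimators together while controlling the overall variance. First I would choose $\gamma^{*}$ small enough that \prettyref{lem:ratio} yields $w(\tilde{x},\tilde{y})/w(x,y)=e^{O(1)}$ uniformly over all $x,y$; inspecting the exponent $(q+HR+3L(\phi)r_{i(x)}r_{i(y)})\gamma \leq (q+HR+3L(\phi)R^{2})\gamma$, the natural setting is $\gamma^{*}=\Theta(1/(q+HR+L(\phi)R^{2}))$, which forces $K^{*}=\lceil\log(R/r_{0})/\log(1+\gamma^{*})\rceil$ geometrically spaced annuli with radii $r_{i}=(1+\gamma^{*})^{i-1}r_{0}$. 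I would then partition $X=X_{1}\uplus\cdots\uplus X_{K^{*}}$ by norm and, for each pair $(i,j)\in [K^{*}]^{2}$, define the projected convex function $\phi_{ij}(\rho):=\phi(r_{i}r_{j}\rho)$ on $[-1,1]$, whose Lipschitz constant is at most $L(\phi)r_{i}r_{j}\leq L(\phi)R^{2}$.

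For each pair $(i,j)$ I would apply \prettyref{thm:log-convex} with $\beta=1/2$ to the normalized set $\widehat{X}_{i}=\{\hat{x}:x\in X_{i}\}$ and the function $\phi_{ij}$, obtaining hash families $\{\mathcal{H}^{(ij)}_{t}\}_{t\in [T_{ij}]}$ on the unit sphere whose supremum collision probability satisfies $p^{(ij)}_{*}(\hat{x},\hat{y})=\Theta(e^{\phi_{ij}(\langle\hat{x},\hat{y}\rangle)/2})$. Given a query $y$ with annulus index $j=i(y)$, I form the per-annulus estimator
\[
Z_{i}(y) = \frac{1}{n}\sum_{t=1}^{T_{ij}}\frac{\tilde{w}^{(ij)}_{t}(\hat{X}_{t},\hat{y})\,w(X_{t},y)}{p^{(ij)}_{t}(\hat{X}_{t},\hat{y})}\,|H^{(ij)}_{t}(\hat{y})|\mcom
\]
where $\tilde{w}^{(ij)}_{t}$ is the $p^{2}$-weighting of \eqref{eq:pou} applied to $\{p^{(ij)}_{t}\}$ and $X_{t}\in X_{i}$ is the preimage (under $x\mapsto\hat{x}$) of the sampled $\hat{X}_{t}\sim H^{(ij)}_{t}(\hat{y})$. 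Because $\sum_{t}\tilde{w}^{(ij)}_{t}(\hat{x},\hat{y})=1$ by construction, \prettyref{lem:moments} gives $\E[Z_{i}(y)]=\mu_{i}:=\tfrac{1}{n}\sum_{x\in X_{i}}w(x,y)$, so the final estimator $Z(y):=\sum_{i=1}^{K^{*}}Z_{i}(y)$, sampled with independent hash seeds across $i$, is unbiased for $\mu$.

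For the variance bound I would first transfer the scale-free property from $\phi_{ij}$ on the sphere to the true weight $w$ on $\R^{d}$. Since $e^{\phi_{ij}(\hat{x}\cdot\hat{y})}=e^{\phi(\langle\tilde{x},\tilde{y}\rangle)}$, combining \prettyref{lem:ratio} with the log-Lipschitz hypothesis on $p_{0}$ yields $p^{(ij)}_{*}(\hat{x},\hat{y})=\Theta\bigl(\sqrt{w(x,y)/p_{0}(r_{i})}\bigr)$ uniformly in $x\in X_{i}$, with the residual factor $e^{O(1)}$ absorbed into a new constant $M'_{\phi_{ij}}=O(M_{\phi_{ij}})$. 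Factoring the constant $p_{0}(r_{i})$ out of $Z_{i}$ and invoking \prettyref{thm:scale-free} then gives $\Var(Z_{i}(y)) \leq O\bigl((M'_{\phi_{ij}})^{3}\,\mu_{i}^{3/2}\bigr)$, after collapsing the residual $p_{0}(r_{i})$ powers using $w\le 1$. Summing independently over $i$ and exploiting $\mu_{i}\leq\mu$ together with $\sum_{i}\mu_{i}=\mu$,
\[
\Var(Z(y)) \leq \sum_{i=1}^{K^{*}}\Var(Z_{i}(y)) \leq O\bigl(M^{3}_{\phi_{K^{*}K^{*}}}\bigr)\sum_{i=1}^{K^{*}}\mu_{i}^{3/2} \leq O\bigl(M^{3}_{\phi_{K^{*}K^{*}}}\bigr)\,\sqrt{\mu}\cdot\mu\mcom
\]
which rearranges into the $V$-bounded form in the statement. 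The complexity follows by noting that a single sample of $Z(y)$ invokes $K^{*}$ hash families, each costing $O(d(L(\phi)R^{2})^{5/6}M_{\phi_{ij}})$ by \prettyref{thm:log-convex}.

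The main obstacle, I expect, will be pinning down the transfer of the scale-free property uniformly over all $(K^{*})^{2}$ annulus pairs: the projection error of \prettyref{lem:ratio} propagates into $M'_{\phi_{ij}}$ through the $p^{2}$-weighting bookkeeping of \prettyref{thm:psquared}, and only the careful choice $\gamma^{*}=\Theta(1/(q+HR+L(\phi)R^{2}))$ keeps this inflation bounded by a constant in the worst-case pair $(i,j)=(K^{*},K^{*})$. A secondary subtlety is the boundary case where $\langle\hat{x},\hat{y}\rangle$ is close to $\pm 1$, since both the approximation of \prettyref{thm:convex_approx} and the complexity bound of \prettyref{lem:complexity} degrade there; this is handled by the explicit boundary estimators $h_{\pm 1}$ constructed in \prettyref{sec:family}.
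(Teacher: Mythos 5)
Your strategy is exactly the paper's: partition into $K^{*}$ spherical annuli with $\gamma^{*}=\Theta(1/(q+HR+L(\phi)R^{2}))$, hash each normalized annulus $\hat{X}_{i}$ via \prettyref{thm:log-convex} applied to $\phi_{ij}$, and sum independent per-annulus estimators, pushing the $\mu_{i}^{3/2}$ terms together with $\sum_{i}\mu_{i}^{3/2}\leq\mu^{3/2}$. The gap is in the scale-free transfer. \prettyref{thm:log-convex} produces a hashing scheme whose supremum collision probability satisfies $p^{(ij)}_{*}(\hat{x},\hat{y})=\Theta\bigl(e^{(\phi_{ij}(\langle\hat{x},\hat{y}\rangle)-\phi^{*}_{ij})/2}\bigr)$, where $\phi^{*}_{ij}=\sup_{|\rho|\leq 1}\phi_{ij}(\rho)$: the theorem implicitly normalizes $\phi_{ij}$ by its maximum (via $\tilde{\phi}=\beta(\phi-\phi_{\max})/k^{*}$ in its proof) so that the target weight $e^{\phi_{ij}-\phi^{*}_{ij}}$ lies in $[0,1]$. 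Your claim that $p^{(ij)}_{*}(\hat{x},\hat{y})=\Theta(\sqrt{w(x,y)/p_{0}(r_{i})})$ ``with the residual factor $e^{O(1)}$ absorbed into $M'_{\phi_{ij}}$'' ignores this $e^{-\phi^{*}_{ij}/2}$ factor, which depends on $(i,j)$ and is typically of size $e^{\Theta(L(\phi)r_{i}r_{j})}$, not $e^{O(1)}$. As written, this makes the per-annulus variance claim $\Var(Z_{i}(y))\leq O\bigl((M'_{\phi_{ij}})^{3}\mu_{i}^{3/2}\bigr)$ unsupported.

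The paper handles this by introducing the bookkeeping quantities $A_{ij}=\frac{p_{0}(r_{i})|X_{i}|e^{\phi^{*}_{ij}}}{nw_{\max}}\leq 1$ and $\mu_{ij}=\frac{1}{|X_{i}|e^{\phi^{*}_{ij}}}\sum_{x\in X_{i}}e^{\phi_{ij}(\langle\hat{x},\hat{y}\rangle)}\leq 1$, together with the auxiliary estimator $\tilde{Z}_{ij}$ (which uses the projected weights $w(\tilde{x},\tilde{y})$ and satisfies $Z_{ij}\leq e\tilde{Z}_{ij}$). The scale-free bound of \prettyref{thm:scale-free} is then applied to $\tilde{Z}_{ij}/A_{ij}$, which genuinely is scale-free for the normalized density $\mu_{ij}$, giving $\E[\tilde{Z}_{ij}^{2}]\leq A_{ij}^{2}(16M_{\phi_{ij}}^{3}+1)\mu_{ij}^{3/2}$. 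Recombining uses $A_{ij}^{2}\mu_{ij}^{3/2}=A_{ij}^{1/2}(A_{ij}\mu_{ij})^{3/2}\leq(A_{ij}\mu_{ij})^{3/2}$ plus super-additivity of $x\mapsto x^{3/2}$ and the two-sided sandwich of Lemma~\ref{lem:decomposition} relating $\sum_{i}A_{ij}\mu_{ij}$ to $\mu$. The moral content is the same as your $\sum_{i}\mu_{i}^{3/2}\leq\mu^{3/2}$ step, but the intermediate $A_{ij}$ accounting is not optional: it is precisely what cancels the $e^{\phi^{*}_{ij}}$ factor you dropped. Once you reintroduce it, your proof goes through.
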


Invoking Theorem \ref{thm:v-bounded} with the estimators given by Theorem \ref{thm:euclidean} results in a data structure to approximate $Z_{w}(y)$ for all $y\in Y$.

\subsection{Proof of Theorem \ref{thm:euclidean}}
\hphantom{p}
\paragraph{Step 1}
Our first concern is to pick a constant $\gamma\in (0,1]$ so that the partitioning scheme in subsection \ref{ssec:paritioning} is fully defined. The constant on one hand affects the space/time (complexity) it takes to evaluate our estimator and on the other hand the variance through the approximation $\langle x,y\rangle\approx \langle \tilde{x}_{\gamma},\tilde{y}_{\gamma}\rangle$. To simplify things we pick $\gamma$ so that the value of $w(x,y)$ changes at most by a factor of $e$ when projecting points on the inner boundary of the spherical annulus.
\begin{equation}
\gamma^{*}= 1 /\max\left\{1,q+HR+3LR^{2}\right\}
\end{equation}
For this choice by \eqref{eq:K_def} and $\log(1+x)\geq \frac{2x}{2+x}$ we get  $K^{*}= \lceil\frac{3}{2}\log(R/r_{0}) \max\{1, q+HR+3LR^{2}\}\rceil$. 
\paragraph{Step 2} For all pairs $i,j\in [K^{*}]$ we are going to construct an unbiased estimator for:
\begin{equation}
Z_{w}^{(ij)}(y) = \frac{\mathbb{I}\{\|y\|\in S_{j}\}}{nw_{\max}}\sum_{x\in X_{i}}p_{0}(\|x\|)e^{\phi(\langle x,y\rangle)} 
\end{equation}
It is easy to see that if  $\|y\|\in S_{j}$ then $Z_{w}(y)= \sum_{i\in [K^{*}]}Z_{w}^{(ij)}(y)$. For a  given pair $i,j\in [K^{*}]$, we  define a modified version of $\phi$. Let $\phi_{ij}:[-1,1]\to \R$ be the function given by $\phi_{ij}(\rho)=\phi(r_{i}r_{j}\rho)$ for all $\rho\in [-1,1]$ and set $\phi_{ij}^{*}=\sup\left\{\phi_{ij}(\rho)\bigr||\rho|\leq 1\right\}$. We are going to use these functions to perform ``importance sampling" in each spherical annulus $X_{i}$. To that end, we define for every pair $i,j\in [K^{*}]$:
\begin{align}
\mu_{ij}&:=\frac{1}{|X_{i}|e^{\phi_{ij}^{*}}}\sum_{x\in X_{i}}e^{\phi_{ij}(\langle \hat{x},\hat{y}\rangle)}\leq 1\label{eq:mu_ij}\\
A_{ij} &:= \frac{p_{0}(r_{i})|X_{i}|e^{\phi_{ij}^{*}}}{nw_{\max}} \leq 1\label{eq:A_ij}
\end{align}
Using these two quantities we can upper and lower bound the density $Z_{w}(y)$.

\begin{lemma} \label{lem:decomposition}
For any $y\in \R^{d}$ such that $\|y\|\in S_{j}$ we have for $\mu=Z_{w}(y)$ that
\begin{equation}
e^{-1}\cdot \sum_{i\in [K^{*}]}A_{ij}\mu_{ij}\leq \mu \leq e\cdot \sum_{i\in [K^{*}]}  A_{ij}\mu_{ij}
\end{equation} 
\end{lemma}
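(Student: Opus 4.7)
The plan is to rewrite $A_{ij}\mu_{ij}$ as a genuine partial sum of the truncated density $w(\tilde{x},\tilde{y})$ restricted to the $i$th annulus, and then invoke \prettyref{lem:ratio} together with the choice of $\gamma^{*}$ to control the multiplicative distortion introduced by truncating each $x$ to its inner boundary representative $\tilde{x}$ and each $y$ to $\tilde{y}$.

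First I would unpack the definitions. For any $x\in X_{i}$ we have $\tilde{x}=r_{i}\hat{x}$ with $\|\tilde{x}\|=r_{i}$, and for $y$ with $\|y\|\in S_{j}$ we have $\tilde{y}=r_{j}\hat{y}$. Hence
\[
\langle \tilde{x},\tilde{y}\rangle = r_{i}r_{j}\langle \hat{x},\hat{y}\rangle, \qquad \phi(\langle\tilde{x},\tilde{y}\rangle)=\phi_{ij}(\langle\hat{x},\hat{y}\rangle),\qquad p_{0}(\|\tilde{x}\|)=p_{0}(r_{i}).
\]
Combining \eqref{eq:mu_ij} and \eqref{eq:A_ij} this gives the key identity
\[
A_{ij}\mu_{ij}=\frac{1}{nw_{\max}}\sum_{x\in X_{i}} p_{0}(\|\tilde{x}\|) e^{\phi(\langle \tilde{x},\tilde{y}\rangle)} = \frac{1}{nw_{\max}}\sum_{x\in X_{i}} w(\tilde{x},\tilde{y}),
\]
while on the other side $\mu = \sum_{i\in [K^{*}]}\tfrac{1}{nw_{\max}}\sum_{x\in X_{i}} w(x,y)$ since the $X_{i}$'s partition $X$.

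Second I would apply \prettyref{lem:ratio}: for every $x\in X_{i}$, using $r_{i(x)}=r_{i}\leq R$, $r_{i(y)}=r_{j}\leq R$, and therefore $r_{i(x)}r_{i(y)}\leq R^{2}$, we obtain
\[
e^{-(q+HR+3Lr_{i}r_{j})\gamma^{*}}\,w(x,y)\,\leq\, w(\tilde{x},\tilde{y})\,\leq\, e^{(q+HR+3Lr_{i}r_{j})\gamma^{*}}\,w(x,y).
\]
The crucial point is the calibration in Step 1: by the very definition $\gamma^{*}=1/\max\{1,q+HR+3LR^{2}\}$, the exponent is uniformly bounded in magnitude by $1$. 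Consequently $e^{-1} w(x,y)\leq w(\tilde{x},\tilde{y})\leq e\cdot w(x,y)$ for every $x\in X_{i}$ and every $i\in[K^{*}]$.

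Finally I would sum these pointwise inequalities over $x\in X_{i}$ and then over $i\in [K^{*}]$, divide by $nw_{\max}$, and use the two identities above to conclude
\[
e^{-1}\mu \leq \sum_{i\in [K^{*}]}A_{ij}\mu_{ij} \leq e\,\mu,
\]
which is equivalent to the stated sandwich. There is no real obstacle here: the whole argument is just an annulus-wise application of \prettyref{lem:ratio}, and the choice of $\gamma^{*}$ was made precisely so that the Lipschitz error on each annulus aggregates to the clean factor of $e$.
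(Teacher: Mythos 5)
Your proof is correct and follows essentially the same route as the paper: you express $A_{ij}\mu_{ij}$ as the normalized sum of $w(\tilde{x},\tilde{y})$ over the $i$th annulus, invoke \prettyref{lem:ratio}, and use the calibration $\gamma^{*}=1/\max\{1,q+HR+3LR^{2}\}$ to collapse the exponent to $1$, yielding the clean $e^{\pm 1}$ sandwich. The only difference is cosmetic — you make the identity $A_{ij}\mu_{ij}=\tfrac{1}{nw_{\max}}\sum_{x\in X_{i}}w(\tilde{x},\tilde{y})$ explicit before applying the ratio bound, whereas the paper applies the ratio bound first and then unfolds the sum into $A_{ij}\mu_{ij}$.
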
 
\begin{proof} We only show the lower bound. Using Lemma \ref{lem:ratio} and the definition of $\gamma^{*}$ we get:
\begin{align}
\mu &=\frac{1}{nw_{\max}}\sum_{i\in [K^{*}]}\sum_{x\in X_{i}}w(x,y)\\
&\geq e^{-1}\frac{1}{nw_{\max}}\sum_{i\in [K^{*}]}\sum_{x\in X_{i}}w(\tilde{x},\tilde{y}) \\
&= e^{-1}\sum_{i\in [K^{*}]}\left(\frac{|X_{i}|p_{0}(r_{i})e^{\phi^{*}_{ij}}}{nw_{\max}}\right)\frac{1}{|X_{i}|e^{\phi_{ij}^{*}}}\sum_{x\in X_{i}}e^{\phi_{ij}(\langle \hat{x},\hat{y}\rangle)}
\end{align}
The upper bound follows similarly.
\end{proof}
Before constructing the estimators for $Z^{(ij)}_{w}(y)$, we relate  the Lipschitz constants of $\phi$ and $\phi_{ij}$.
\begin{proposition}[Rescaling]\label{prop:scaling}
 Given $\alpha>0$, and a convex function $\phi:[-a,a]\to \R$ with  constant $L$, the function $\phi(\alpha \rho)$ is convex and $\alpha L$-Lipschitz.
\end{proposition}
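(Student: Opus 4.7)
The plan is to verify both claims directly from the definitions, since Proposition \ref{prop:scaling} is essentially a statement about how convexity and the Lipschitz property behave under pre-composition with a linear map. Let $\psi(\rho):=\phi(\alpha\rho)$ defined on the appropriate rescaled domain (where $\alpha\rho\in[-a,a]$).

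For convexity, I would argue that for any $\rho_1,\rho_2$ in the domain of $\psi$ and any $\lambda\in[0,1]$,
\begin{equation*}
\psi(\lambda\rho_1+(1-\lambda)\rho_2)=\phi(\alpha(\lambda\rho_1+(1-\lambda)\rho_2))=\phi(\lambda(\alpha\rho_1)+(1-\lambda)(\alpha\rho_2)),
\end{equation*}
which, using convexity of $\phi$ on $[-a,a]$, is at most $\lambda\phi(\alpha\rho_1)+(1-\lambda)\phi(\alpha\rho_2)=\lambda\psi(\rho_1)+(1-\lambda)\psi(\rho_2)$. This uses nothing beyond the fact that pre-composition with an affine map preserves convexity.

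For the Lipschitz bound, I would invoke the Lipschitz property of $\phi$ directly: for any $\rho_1,\rho_2$ in the domain of $\psi$,
\begin{equation*}
|\psi(\rho_1)-\psi(\rho_2)|=|\phi(\alpha\rho_1)-\phi(\alpha\rho_2)|\le L\,|\alpha\rho_1-\alpha\rho_2|=\alpha L\,|\rho_1-\rho_2|,
\end{equation*}
since $\alpha>0$. So the minimum Lipschitz constant of $\psi$ is at most $\alpha L$, as claimed.

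There is no real obstacle here; the proposition is stated to record a rescaling fact that will be used when applying the hashing machinery of earlier sections to $\phi_{ij}(\rho)=\phi(r_ir_j\rho)$ (taking $\alpha=r_ir_j$), so that $L(\phi_{ij})\le r_ir_jL(\phi)$ and $\phi_{ij}$ remains convex. The only minor subtlety to flag is that the domain of $\psi$ must be restricted so that $\alpha\rho\in[-a,a]$; in the intended application one has $|\langle\hat x,\hat y\rangle|\le 1$ and $r_ir_j\le R^2$, so $\psi$ is well-defined on $[-1,1]$ and inherits both properties.
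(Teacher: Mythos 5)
Your proof is correct and matches the paper's argument exactly: convexity is preserved under pre-composition with an affine map, and the Lipschitz bound follows from $|\phi(\alpha\rho_1)-\phi(\alpha\rho_2)|\le L|\alpha\rho_1-\alpha\rho_2|=\alpha L|\rho_1-\rho_2|$. Your added remark about restricting the domain so that $\alpha\rho\in[-a,a]$ is a reasonable precision that the paper leaves implicit.
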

\begin{proof} Convexity is trivial, and $|\phi(\alpha \rho_{1})-\phi(\alpha \rho_{2})| \leq L|\alpha \rho_{1}-\alpha \rho_{2}|\leq L\alpha |\rho_{1}-\rho_{2}|$.
\end{proof}
Thus, under our assumption $L(\phi_{ij})\leq L r_{i}r_{j}$.

\paragraph{Step 3} For each $i,j\in [K^{*}]$, define $\hat{X}_{i}:=\{\hat{x}:x\in X_{i}\}$. Let $\{\mathcal{H}^{ij}_{t},p_{t}^{ij}\}_{t\in T_{ij}}$ be the hashing scheme resulting from invoking Theorem \ref{thm:log-convex} for $\phi_{ij}$, $\hat{X}_{i}$ and $\beta =1/2$. 
\begin{itemize}
\item \emph{Preprocessing:} for all $t\in [T_{ij}]$,  sample a hash function $h^{ij}_{t}\sim \mathcal{H}^{ij}_{t}$ and evaluate it  on $\hat{X}_{i}$ creating hash table $H^{ij}_{t}$ . Let $H^{ij}_{t}(z)\subseteq \hat{X}_{i}$ denote the hash bucket where $z\in \mathcal{S}^{d-1}$  maps to under $h^{ij}_{t}$.
\item \emph{Querying:} given a query $y$ ($\|y\|\in S_{j}$), for all $t\in [T_{ij}]$ let $\hat{X}^{ij}_{t}\sim H^{ij}_{t}(\hat{y})$ be a random element from $H^{ij}_{t}(\hat{y})$ or $\bot$ if $H^{ij}_{t}(\hat{y})=\emptyset$. Return $Z_{ij}(y)=\frac{1}{nw_{\max}}\sum_{t\in [T_{ij}]}\left\{ \frac{p^{ij}_{t}\left(\hat{X}^{ij}_{t},\hat{y}\right)}{W^{ij}\left(\hat{X}^{ij}_{t},\hat{y}\right)}|H^{ij}_{t}(\hat{y})|w(X_{t},y)\right\}$.
\end{itemize}
where $W^{ij}(x,y) = \sum_{t\in [T_{ij}]}(p^{ij}_{t}(x,y))^{2}$. For $\|y\|\in S_{j}$, we   denote this estimator as $Z_{ij}\sim \mathcal{D}_{ij}(y)$. The estimator is unbiased and has complexity $\mathcal{C}_{ij}$ bounded by $O(dL(\phi_{ij})^{5/6}M_{\phi_{ij}})$ where $M_{\phi_{ij}} = \exp(O\left(\left\{\log(L(\phi_{ij})) L(\phi_{ij})\right\}^{2/3}\right))$ and given explicitly below \eqref{eq:below} in the proof of Theorem \ref{thm:log-convex}. We next bound its variance. Towards that end, we define a different estimator:
\begin{align}
\tilde{Z}_{ij}&=\frac{1}{nw_{\max}}\sum_{t\in [T_{ij}]}\left\{ \frac{p^{ij}_{t}\left(\hat{X}^{ij}_{t},\hat{y}\right)}{W^{ij}\left(\hat{X}^{ij}_{t},\hat{y}\right)}|H^{ij}_{t}(\hat{y})|w(\tilde{X}_{t},\tilde{y})\right\}\\
&=\left(\frac{p_{0}(r_{i})|X_{i}|e^{\phi_{ij}^{*}}}{nw_{\max}}\right)\frac{1}{|X_{i}|e^{\phi_{ij}^{*}}}\sum_{t\in [T_{ij}]}\left\{ \frac{p^{ij}_{t}\left(\hat{X}^{ij}_{t},\hat{y}\right)}{W^{ij}\left(\hat{X}^{ij}_{t},\hat{y}\right)}|H^{ij}_{t}(\hat{y})|e^{\phi_{ij}(\langle \hat{x},\hat{y}\rangle)}\right\}
\end{align}  
For this estimator we get by \eqref{eq:mu_ij} and \eqref{eq:A_ij} that   $\E[\tilde{Z}_{ij}]=A_{ij}\mu_{ij}$. Furthermore, by our construction of $\{\mathcal{H}_{t}^{ij},p_{t}^{ij}\}_{t\in [T_{ij}]}$  and Theorem \ref{thm:scale-free} for $\beta=1/2$ it follows that:
\begin{equation}\label{eq:modified}
\E[\tilde{Z}_{ij}^{2}] \leq A_{ij}^{2}\cdot\left( 16M_{\phi_{ij}}^{3}+1\right)\mu_{ij}^{3/2}
\end{equation}
Finally, due to Lemma \ref{lem:ratio} we have that $Z_{ij}\leq e \tilde{Z}_{ij}$.

\paragraph{Step 4} We are now in position to define the final estimator and bound its variance. For $\|y\|\in S_{j}$ and $i\in [K^{*}]$,   let $Z_{ij}\sim \mathcal{D}_{ij}$  as before, and define:
\begin{equation}
Z_{j}(y) = \sum_{i\in [K^{*}]}Z_{ij}(y)
\end{equation}
The estimator is unbiased $\E[Z_{j}(y)]=Z_{w}(y)$ and the variance is bounded by
\begin{align}
\E[Z_{j}^{2}] &\leq  (\E[Z_{j}(y)])^{2} + \sum_{i\in [K^{*}]}\E[Z_{ij}^{2}]\\
&\leq \mu^{2} + e^{2}\sum_{i\in [K^{*}]} (16 M_{\phi_{ij}}^{3}+1) A_{ij}^{2}\mu_{ij}^{3/2}\\
&\leq \mu^{2} + e^{2}\sum_{i\in [K^{*}]} (16 M_{\phi_{ij}}^{3}+1) A_{ij}^{1/2}e^{3/2}(e^{-1}A_{ij}\mu_{ij})^{3/2}\\
&\leq \mu^{2} + e^{5/2} \max_{i\in [K^{*}]}\{16M_{\phi_{ij}}^{3}+1\} \left(e^{-1}\sum_{i\in [K^{*}]}A_{ij}\mu_{ij}\right)^{3/2}\\
&\leq \mu^{2} + e^{5/2}(16M_{\phi_{K^{*}K^{*}}}^{3}+1)\mu^{3/2}
\end{align}
where in the penultimate inequality we used $A_{ij}\leq 1$,  H\"{o}lder's inequality and super-additivity of $g(x):=x^{3/2}$. The final steps follows from Lemma \ref{lem:decomposition} and monotonicity of $g(x)$. This shows that our estimator is $V$-bounded with $V(\mu)=2e^{5/2}(8M_{\phi_{K^{*}K^{*}}}^{3}+1)\mu^{-1/2}$ and complexity $O\left(d(K^{*})^{2}(LR^{2})^{5/6}M_{\phi_{K^{*}K^{*}}}^{4}\right)$ with $M_{\phi_{K^{*}K^{*}}}=\exp(O\left(\left\{\log(L(K^{*})^{2})L(K^{*})^{2}\right\}^{2/3}\right))$.
\subsection{Proof of Lemma \ref{lem:ratio}}
We first show that for all $x_{1},x_{2}\in S_{i}(\gamma)$, $y_{1},y_{2}\in S_{j}(\gamma)$, and $\gamma\leq 1$ we have:
\begin{eqnarray}
\bigl|\|x_{1}\| -\|x_{2}\|\bigr| &\leq&  r_{i} \gamma\label{eq:fact1}\\
\bigl|\|x_{1}\|\|y_{1}\| - \|x_{2}\|\|y_{2}\|\bigr| &\leq& 3r_{i}r_{j} \gamma\label{eq:fact2}
\end{eqnarray}
To see the first part, assume without loss of generality  that  $\|x_{1}\|\geq \|x_{2}\|$ and $\|y_{1}\|\geq \|y_{2}\|$. We have for $z\in \{x,y\}$: $\|z_{1}\|  -\|z_{2}\| \leq (1+\gamma)^{i(z_{1})}r_{0} - (1+\gamma)^{i(z_{2})-1}r_{0} \leq (1+\gamma)^{i(z_{1})-1}r_{0}\gamma $. For the second part, we used the fact that $\gamma\leq 1$.
\begin{align}
\|y_{1}\|\|x_{1}\|-\|y_{2}\|\|x_{2}\| &\leq (1+\gamma)^{i(y_{1})+i(x_{1})}r_{0}^{2} - (1+\gamma)^{i(y_{2})+i(x_{2})-2}r_{0}^{2}\\
&\leq (1+\gamma)^{i(y_{1})+i(x_{1})-2}r_{0}^{2}\left((1+\gamma)^{2}-1 \right)\\
&=3r_{i}r_{j}\gamma
\end{align}  Using \eqref{eq:fact1},\eqref{eq:fact2} and the fact that $\langle x, y\rangle = \|x\|\|y\|\langle \hat{x},\hat{y}\rangle$ we get:
\begin{align}
\phi(\langle \tilde{x}, \tilde{y}\rangle)&\geq \phi(\langle x, y\rangle) - L(\phi) (\|x\|\|y\|-\|\tilde{x}\|\|\tilde{y}\|)|\langle \hat{x},\hat{y}\rangle|\geq \phi(\langle x, y\rangle) - 3L(\phi)r_{i(x)}r_{i(y)}\gamma\\ \phi(\langle \tilde{x}, \tilde{y}\rangle)&\leq \phi(\langle x, y\rangle) + L(\phi) (\|x\|\|y\|-\|\tilde{x}\|\|\tilde{y}\|)|\langle \hat{x},\hat{y}\rangle| \leq \phi(\langle x, y\rangle) + 3L(\phi)r_{i(x)}r_{i(y)}\gamma
\end{align}
Putting these two together and by the fact that $p_{0}$ is $(q,HR\gamma, \gamma)$-log-Lipschitz the statement follows.

\section{Importance Sampling for Vector Functions}\label{sec:gradients}

In this section, we show that for a class of unbiased estimators, that result from \emph{jointly sampling} a random weight function $U:X\cup\{\bot\}\to \R_{+}$ and a random point $Y\in X\cup \{\bot\}$ according to some \emph{balanced distribution}, the variance of an unbiased estimator for the sum of vectors  is bounded by that of the same distribution applied for the vector norms (Corollary \ref{cor:gradients}). The class of such estimators  include trivially classical importance sampling as well as   Hashing-Based-Estimators (Lemma \ref{lem:HBE}). Using this connection we will show how to estimate sum of gradients when the gradient norms are log-convex functions of the inner product.

\subsection{Randomly weighted estimators via Balanced distributions}
We start by defining a class of estimators that work by sampling a point $Y$ from $X\cup\{\perp\}$ and a,  possibly random and correlated with $Y$,  function $U:X\cup\{\perp\}\to \R_{+}$ with support possibly on a subset $S$ of $X$.
\begin{definition}[Balanced distribution]\label{def:reweighted}
 Given a finite set $S\subset X$, let $\mathcal{D}$ be a distribution of a pair of  random variables $(U,Y)\sim \mathcal{D}$ where $Y\in X\cup\{\perp\}$ and $U:X\cup \{\perp\}\to \R_{+}$. A distribution is called $S$-\emph{balanced} if $U(S^{c}\cup\{\perp\})=\{0\}$, and $\E[U(x)|Y=x]=\frac{1}{\P[Y=x]}\in (0,\infty)$  for all  $x\in S$.
\end{definition} 
Classical importance sampling schemes correspond to the case where  $U(x)=\frac{1}{\P[Y=x]}$ is a deterministic function of $x$. We show next that any such distribution, even with random $U$, can be used to create unbiased estimators for the sum of a function on $S$. 
\begin{lemma}[Moments]\label{lem:reweighted-moments}
 Let $S\subseteq X$, $f:X\cup \{\perp\}\to \R$ a bounded function, and $\mathcal{D}$ an $S$-balanced distribution. For $(U,Y)\sim \mathcal{D}$ it holds that
\begin{equation}
 \E[U(Y) f(Y)] = \sum_{x\in S} f(x) \qquad \text{and} \qquad 
\E[\{U(Y)f(Y)\}^{2}] = \sum_{x\in S}\frac{\E[U^{2}(x)|Y=x]}{\E[U(x)|Y=x]}f^{2}(x)
\end{equation}
\end{lemma}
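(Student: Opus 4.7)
The plan is to prove both identities by a single application of the tower property of expectations, conditioning on the random location $Y$. Writing $\E[U(Y) f(Y)] = \E\big[\,\E[U(Y) f(Y) \mid Y]\,\big]$ and pulling out the deterministic factor $f(Y)$ from the inner conditional expectation, we obtain
\[
\E[U(Y) f(Y)] = \sum_{y \in X \cup \{\perp\}} f(y)\, \E[U(y) \mid Y = y]\, \P[Y = y].
\]
The first step is to observe that the balanced-distribution assumption $U(S^{c} \cup \{\perp\}) = \{0\}$ kills every term with $y \notin S$ (almost surely, hence in conditional expectation as well), so the sum collapses to $y \in S$.

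On the remaining set I would substitute the defining identity $\E[U(y) \mid Y = y] = 1/\P[Y = y]$, which causes the $\P[Y = y]$ factor to cancel and yields $\E[U(Y) f(Y)] = \sum_{x \in S} f(x)$, proving the first claim.

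For the second moment I would repeat the same conditioning argument applied to $U^{2}(Y) f^{2}(Y)$:
\[
\E\bigl[\{U(Y) f(Y)\}^{2}\bigr] = \sum_{y \in S} f^{2}(y)\, \E[U^{2}(y) \mid Y = y]\, \P[Y = y],
\]
where again the terms with $y \notin S$ vanish because $U(y) \equiv 0$ there. Replacing $\P[Y = y]$ by $1/\E[U(y) \mid Y = y]$ gives exactly the stated formula.

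The argument is essentially bookkeeping; there is no real obstacle. The one subtle point that deserves attention is that $U$ is itself random (and possibly correlated with $Y$), so one must be careful to use the \emph{conditional} moments $\E[U(y) \mid Y = y]$ and $\E[U^{2}(y) \mid Y = y]$ rather than treating $U$ as a deterministic reweighting, and to handle the sink state $\perp$ cleanly via the support condition $U(S^{c} \cup \{\perp\}) = \{0\}$. Once these are handled, the two identities fall out immediately from the law of total expectation and the balancing condition.
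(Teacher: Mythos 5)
Your proof is correct and follows essentially the same route as the paper: both arguments decompose the expectation by conditioning on $Y$ via the law of total expectation, use the support condition to restrict attention to $x\in S$, and then apply the balancing identity $\E[U(x)\mid Y=x]=1/\P[Y=x]$ to cancel the probability factor. The only cosmetic difference is that you begin the sum over all of $X\cup\{\perp\}$ and then explicitly discard the terms outside $S$, whereas the paper writes the sum over $S$ directly, leaving that step implicit.
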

\begin{proof} Using the law of total probability we have:
\begin{align}
\E[U(Y)f(Y)] &= \sum_{x\in S}\E[U(Y)f(Y)|Y=x]\P[Y=x]\\
&=\sum_{x\in S}f(x)\E[U(x)|Y=x]\P[Y=x]\\
&= \sum_{x\in S}f(x)
\end{align}
We proceed similarly:
\begin{align}
\E[\{U(Y)f(Y)\}^{2}] &= \sum_{x\in S}\E[\{U(Y)f(Y)\}^{2}|Y=x]\P[Y=x]\\
&=\sum_{x\in S}\E[U^{2}(x)|Y=x]\P[Y=x]f^{2}(x)\\
&= \sum_{x\in S}\frac{\E[U^{2}(x)|Y=x]}{\E[U(x)|Y=x]}f^{2}(x)
\end{align}
\end{proof}
Finally, we show that for vector functions the variance is controlled by the variance of the corresponding estimator for the sum of the gradient norms.
\begin{corollary}[Vectors to Norms]\label{cor:gradients} Let $g:X\cup \{\perp\}\to \R^{d}$ a bounded function, and $S\subseteq X$. For any $S$-balanced distribution $(U,Y)\sim \mathcal{D}$, we have $\E[U(Y)g(Y)]=\sum_{x\in S} g(x)$ and 
\begin{equation}
\E[\|U(Y)g(Y)\|^{2}] = \E[\left\{U(Y)\cdot \|g(Y)\|\right\}^{2}] = \sum_{x\in S}\frac{\E[U^{2}(x)|Y=x]}{\E[U(x)|Y=x]}\|g(x)\|^{2}
\end{equation}
\end{corollary}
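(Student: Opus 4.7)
The plan is to reduce the corollary directly to the scalar moment identities in Lemma \ref{lem:reweighted-moments}, by exploiting the fact that $U$ takes non-negative scalar values while only $g$ is vector-valued. No new probabilistic argument is needed; the content is bookkeeping.

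For the first moment, I would apply Lemma \ref{lem:reweighted-moments} coordinate-wise. Writing $g(x) = (g_1(x), \ldots, g_d(x))$ and applying the lemma with $f(x) = g_i(x)$ for each $i \in [d]$, I obtain $\E[U(Y) g_i(Y)] = \sum_{x \in S} g_i(x)$. Stacking the coordinates gives the vector identity $\E[U(Y)g(Y)] = \sum_{x \in S} g(x)$. This uses linearity of expectation and nothing about the specific structure of $\mathcal{D}$ beyond the $S$-balanced property already invoked by the lemma.

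For the second moment, the key observation is that since $U(Y) \in \R_{+}$ is a non-negative scalar, we have the pointwise identity
\begin{equation*}
\|U(Y) g(Y)\|^2 \;=\; U^2(Y)\, \|g(Y)\|^2 \;=\; \bigl\{U(Y)\,\|g(Y)\|\bigr\}^2.
\end{equation*}
So the random variable whose expectation we want equals the square of a random variable of the form $U(Y) f(Y)$ with $f(x) := \|g(x)\|$, which is a bounded real-valued function on $X \cup \{\perp\}$. Applying the second identity of Lemma \ref{lem:reweighted-moments} with this choice of $f$ yields
\begin{equation*}
\E\bigl[\{U(Y)\|g(Y)\|\}^2\bigr] \;=\; \sum_{x \in S} \frac{\E[U^2(x)\mid Y=x]}{\E[U(x)\mid Y=x]}\,\|g(x)\|^2,
\end{equation*}
which is exactly the claimed formula.

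There is essentially no obstacle: the only subtlety is verifying that $U \ge 0$ so that the absolute value in $\|U \cdot g\|$ can be pulled out of the norm, and this is guaranteed by the definition of an $S$-balanced distribution (which requires $U : X\cup\{\perp\} \to \R_+$). All three expressions in the corollary are therefore equal by direct substitution, and the proof can be written in a few lines.
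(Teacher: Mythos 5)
Your proof is correct and follows the same route as the paper's: coordinate-wise application of Lemma \ref{lem:reweighted-moments} plus linearity of expectation for the first moment, and applying the lemma with $f(x)=\|g(x)\|$ for the second. Your explicit note that $U\ge 0$ justifies pulling $U$ out of the norm is a small clarification the paper leaves implicit, but it is the same argument.
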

\begin{proof}
The first equation follows   by applying Lemma \ref{lem:reweighted-moments} for $i\in [d]$, $g_{i}:X\to \R$ and linearity of expectation, while  the second part by applying the lemma for  $f(x)=\|g(x)\|$.
\end{proof}

\subsection{Hashing-Based-Estimators}
We next show that Hashing-Based-Estimators induce indeed balanced distributions for the support of the collision probability on $X$ for a given query $y$.
\begin{lemma}[HBE]\label{lem:HBE}
Given a set $X\subset \mathcal{X}$, and  a hashing scheme $\mathcal{H}$ with collision probabilities $p:\mathcal{X}\cup \{\perp\}\times \mathcal{X}\to [0,1]$, let $(h,g)\sim \mathcal{H}$. For any given $y\in \mathcal{X}$, let $Y\sim H_{X}(y)$ and $S(y):=\{x\in X|p(x,y)>0\}$, the distribution of $\left(\frac{|H(y)|}{p(Y,y)},Y\right)$ is $S(y)$-balanced.
\end{lemma}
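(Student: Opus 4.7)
\medskip

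To prove the lemma I need to verify the two defining properties of an $S(y)$-balanced distribution for $S = S(y) = \{x\in X : p(x,y)>0\}$ and the pair $(U,Y)$ with $U = |H(y)|/p(Y,y)$ (and $U(\perp)=0$ by the standing convention). The support condition is immediate: if $x \in X$ has $p(x,y) = 0$, then $\P[h(x)=g(y)]=0$, so $x \notin H(y)$ almost surely, which forces $Y \neq x$ a.s.; hence the random weight $U$ is supported on $S(y)$, and $U(\perp)=0$. So all the content lies in verifying that $\E[U(x)\mid Y=x] = 1/\P[Y=x] \in (0,\infty)$ for every $x\in S(y)$.

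The plan is a two-step tower-of-expectations calculation, conditioning first on the hash pair $(h,g)\sim\cH$. By construction of $Y\sim H(y)$, we have $\P[Y=x\mid h,g] = \mathbb I\{x\in H(y)\}/|H(y)|$ on the event $\{H(y)\neq\emptyset\}$. Marginalizing gives
\[
\P[Y=x] \;=\; \E_{(h,g)}\!\left[\frac{\mathbb I\{x\in H(y)\}}{|H(y)|}\right],
\]
and the same tower applied to $|H(y)|\cdot \mathbb I\{Y=x\}$ collapses the $|H(y)|$ factor:
\[
\E\!\left[|H(y)|\cdot \mathbb I\{Y=x\}\right] \;=\; \E_{(h,g)}\!\left[|H(y)|\cdot \tfrac{\mathbb I\{x\in H(y)\}}{|H(y)|}\right] \;=\; \E_{(h,g)}[\mathbb I\{x\in H(y)\}] \;=\; p(x,y).
\]
The latter identity also shows $\P[Y=x]>0$ whenever $p(x,y)>0$, so $\P[Y=x]\in(0,\infty)$ for $x\in S(y)$.

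Combining these, for any $x\in S(y)$,
\[
\E[U(x)\mid Y=x]\cdot \P[Y=x] \;=\; \frac{1}{p(x,y)}\,\E\!\left[|H(y)|\cdot \mathbb I\{Y=x\}\right] \;=\; \frac{p(x,y)}{p(x,y)} \;=\; 1,
\]
which is exactly the balanced condition $\E[U(x)\mid Y=x] = 1/\P[Y=x]$. I do not anticipate any real obstacle here; the only point to be careful about is correctly interpreting the $Y=\perp$ case (which carries zero weight and is already excluded from the sum over $S(y)$) and the cancellation of $|H(y)|$ under the inner expectation, which relies on the fact that $Y$ is drawn uniformly from a nonempty bucket conditional on $(h,g)$.
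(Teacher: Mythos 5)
Your proof is correct and follows essentially the same route as the paper's: both compute $\E[|H(y)|\,\mathbb{I}[Y=x]]=p(x,y)$ by exploiting that $Y$ is uniform on the nonempty bucket given $(h,g)$, so the $|H(y)|$ factor cancels; you condition on $(h,g)$ directly while the paper conditions on the event $\{x\in H(y)\}$, but these are the same tower-of-expectations computation. You also explicitly check the support condition, which the paper leaves implicit — a harmless and sensible addition.
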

\begin{proof} For all $x\in S(y)$,
\begin{align}
\E\left[\left.\frac{|H(y)|}{p(Y,y)}\right|Y=x\right] &=\frac{ \E\left[\frac{|H(y)|}{p(x,y)}\mathbb{I}[Y=x]\right]}{\P[Y=x]}\\
&= \frac{ \E\bigl[|H(y)|\mathbb{I}[Y=x]\mathbb{I}[x\in H(y)]\bigr]}{p(x,y)\P[Y=x]}\\
&= \frac{ \E\bigl[|H(y)|\mathbb{I}[Y=x]\bigr|x\in H(y)\bigr]p(x,y)}{p(x,y)\P[Y=x]}\\
&=\frac{1}{\P[Y=x]}\in (0,\infty)
\end{align}
\end{proof}
\subsection{Multi-resolution HBE}
To cover Multi-resolution HBE, or   their    Multi-scale extension described in Section \ref{sec:reduction}, we show that adding together randomly weighted estimators, resulting from balanced distributions that are pairwise independent, produces the results we expect. 
\begin{corollary} \label{cor:gradient}
Given $X\subset \mathcal{X}$, $y\in \mathcal{X}$, let $(U_{t},Y_{t})\sim \mathcal{D}_{t}(y)$ for $t\in [T]$ being pairwise independent and $D_{t}(y)$ t being $S_{t}(y)$-balanced. Let $T(x,y) = \{t\in [T]|x\in S_{t}(y)\}$. For  a collection of  bounded  functions $\{f_{t}:\mathcal{X}\cup \{\perp\}\to \R^{d}\}_{t\in [T]}$, we have:
\begin{equation}
\E[\sum_{t\in [T]} U_{t}(Y_{t}) f_{t}(Y_{t})] = \sum_{x\in X}\sum_{t\in T(x,y)}f_{t}(x)
\end{equation}
and $\E[\|\sum_{t\in [T]} U_{t}(Y_{t}) f_{t}(Y_{t})\|^{2}]\leq\sum_{t\in [T]} \E[\{U_{t}(Y_{t})\| f_{t}(Y_{t})\|\}^{2}] + \left(\E[\sum_{t\in [T]}U_{t}(Y_{t}) \|f_{t}(Y_{t})\|]\right)^{2}$.
\end{corollary}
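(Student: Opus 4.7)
The plan is to prove the two claims separately, with the first following directly from the single-index Corollary~\ref{cor:gradients} and linearity, and the second reducing to the classical variance-of-a-sum identity under pairwise independence.

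For the expectation claim, I would apply Corollary~\ref{cor:gradients} individually for each $t\in[T]$ to the function $f_t$ and the $S_t(y)$-balanced distribution $\mathcal{D}_t(y)$, giving $\E[U_t(Y_t)f_t(Y_t)]=\sum_{x\in S_t(y)}f_t(x)$. Summing over $t$ and swapping the order of summation yields $\sum_{t\in[T]}\sum_{x\in S_t(y)}f_t(x)=\sum_{x\in X}\sum_{t\in T(x,y)}f_t(x)$, where the rewrite uses the definition $T(x,y)=\{t:x\in S_t(y)\}$ (terms with $x\notin\bigcup_t S_t(y)$ simply contribute nothing).

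For the second-moment bound, set $V_t:=U_t(Y_t)f_t(Y_t)\in\R^d$ and expand
\[
\Bigl\|\sum_{t\in[T]}V_t\Bigr\|^2 \;=\; \sum_{t\in[T]}\|V_t\|^2 \;+\; \sum_{t\neq s}\langle V_t,V_s\rangle.
\]
Pairwise independence of $(U_t,Y_t)$ implies pairwise independence of $V_t,V_s$, so $\E[\langle V_t,V_s\rangle]=\langle\E[V_t],\E[V_s]\rangle$ for $t\neq s$. Completing the square then gives
\[
\sum_{t\neq s}\langle \E[V_t],\E[V_s]\rangle \;=\; \Bigl\|\sum_{t}\E[V_t]\Bigr\|^2 - \sum_{t}\|\E[V_t]\|^2 \;\leq\; \Bigl\|\E\Bigl[\sum_{t}V_t\Bigr]\Bigr\|^2.
\]
Combining with the diagonal terms $\sum_t\E[\|V_t\|^2]=\sum_t\E[\{U_t(Y_t)\|f_t(Y_t)\|\}^2]$ (using $U_t\ge 0$) yields the first summand of the stated bound.

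It remains to dominate $\|\E[\sum_t V_t]\|^2$ by $(\E[\sum_t U_t(Y_t)\|f_t(Y_t)\|])^2$. Using $U_t(Y_t)\ge 0$, the triangle inequality in $\R^d$ gives $\|\sum_t V_t\|\le \sum_t U_t(Y_t)\|f_t(Y_t)\|$ pointwise, and Jensen's inequality (for the norm) gives $\|\E[\sum_t V_t]\|\le \E[\|\sum_t V_t\|]$. Chaining the two and squaring delivers exactly the required upper bound. I do not anticipate any real obstacle here: the only thing to be slightly careful about is that pairwise (rather than mutual) independence suffices for the cross-term identity, because only second-order moments of pairs appear, and that the non-negativity of $U_t$ is what lets the vector triangle inequality pass to norms of $f_t$.
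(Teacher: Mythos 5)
Your proof is correct and fills in exactly the calculation that the paper's terse remark ("follows from triangle inequality," together with the stated pairwise independence) leaves implicit: linearity plus Corollary~\ref{cor:gradients} for the mean, and cross-term cancellation via pairwise independence followed by the pointwise triangle inequality and Jensen for the second moment. This is essentially the same approach as the paper, just spelled out in detail.
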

\begin{proof} The first part follows easily due to linearity and Lemma \ref{lem:reweighted-moments}, while the second one follows from triangle inequality.
\end{proof}

This shows that if Multi-resolution HBE has small variance in estimating the sum of the vector norms, it  can be used to estimate the sum of the vectors with the    same variance up to constants.

\begin{corollary}
Let $g:\mathcal{S}^{d-1}\times \mathcal{S}^{d-1}\to \R^{m}$ be  a vector function  such that $\|g(x,y)\|_{2}=e^{\phi(\langle x, y\rangle)}$ for some convex function $\phi$. Given $\epsilon,\tau\in (0,1)$, there exists an explicit constant $M_{\phi}$ and a data structure using space $O\left(dL(\phi)^{5/6}M_{\phi}^{3}\frac{1}{\epsilon^{2}}\frac{1}{\sqrt{\tau}} \cdot n\right)$ and query time $O(dL(\phi)^{5/6}M_{\phi}^{4}\frac{1}{\epsilon^{2}}\frac{1}{\sqrt{\mu}})$ that for any $y\in \mathcal{S}^{d-1}$ with constant probability can either produce a   a vector $G$ such that:
\begin{equation}
\left\|G- \frac{1}{ne^{\phi_{\max}}}\sum_{x\in X}g(x,y)\right\|_{2}\leq \epsilon \mu
\end{equation}
if $\mu:=\frac{1}{ne^{\phi_{\max}}}\sum_{x\in X}\|g(x,y)\|_{2} \geq \tau$ or assert  that $\mu < \tau$.
\end{corollary}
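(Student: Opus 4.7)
The plan is to mirror the proof of Theorem~5.2, using the vector-valued function $g$ in place of the scalar log-convex $w$, and then invoke Corollary~7.3 to convert the second-moment analysis for the norm $\|g(x,y)\|_2 = e^{\phi(\langle x,y\rangle)}$ (which is precisely a log-convex function of the inner product) into a bound on the expected squared L2 error of the vector estimator. Essentially every step goes through with only a constant loss in variance.

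First, I would instantiate the scale-free Multi-Resolution HBE of Theorem~5.1 with $\beta=\tfrac12$ applied to the scalar $\|g(x,y)\| = e^{\phi(\langle x,y\rangle)}$, obtaining a $(\tfrac12, M_\phi)$-scale free collection $\{\mathcal{H}_t, p_t\}_{t\in[T]}$ with per-sample complexity $\mathcal{C} = O(dL(\phi)^{5/6} M_\phi)$. By Lemma~7.4 each hash family induces a balanced distribution $\mathcal{D}_t(y)$ on a pair $(U_t, Y_t)$ with $U_t(\cdot) = |H_t(y)|/p_t(\cdot, y)$; the $T$ distributions are pairwise (in fact mutually) independent since the underlying hash functions are drawn independently. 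With the $p^2$-weighting $\tilde w_t(x,y) = p_t^2(x,y)/\sum_{t'}p_{t'}^2(x,y)$ from Section~3.1, set $f_t(x) = g(x,y)\tilde w_t(x,y)/(ne^{\phi_{\max}})$ and define the vector estimator $\hat Z_g = \sum_t U_t(Y_t)\, f_t(Y_t)$.

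Next, Corollary~7.3 does the heavy lifting. Its first identity gives $\E[\hat Z_g] = \sum_{x\in X}\sum_{t\in T(x,y)} f_t(x) = \mu_g := \tfrac{1}{ne^{\phi_{\max}}}\sum_x g(x,y)$. Its second inequality gives
\[
\E\bigl[\|\hat Z_g\|_2^2\bigr] \;\le\; \sum_t \E\bigl[(U_t(Y_t)\|f_t(Y_t)\|)^2\bigr] + \Bigl(\E\bigl[\textstyle\sum_t U_t\|f_t\|\bigr]\Bigr)^2.
\]
The second term equals $\mu^2$. The first is the "diagonal" of the second moment of the scalar MR-HBE $\hat Z$ for the norm $\|g\|$; by pairwise independence the off-diagonal cross terms are nonnegative, so the diagonal is at most $\E[\hat Z^2]$. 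Theorem~3.4 (scale-free) with $\beta=\tfrac12$ yields $\E[\hat Z^2] \le 16 M_\phi^3\mu^{3/2} + \mu^2$. Combining and using $\E[\|\hat Z_g - \mu_g\|_2^2] = \E[\|\hat Z_g\|_2^2] - \|\mu_g\|_2^2 \le \E[\|\hat Z_g\|_2^2]$, I obtain $\E[\|\hat Z_g - \mu_g\|_2^2] = O(M_\phi^3\mu^{3/2})$ for $\mu \in (0,1]$.

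Finally, I would average $N = \Theta(M_\phi^3/(\epsilon^2\sqrt{\mu}))$ independent copies of $\hat Z_g$ to form $G$, giving $\E[\|G - \mu_g\|_2^2] \le \epsilon^2\mu^2/4$, and apply Markov's inequality to the nonnegative variable $\|G - \mu_g\|_2^2$ to get $\|G - \mu_g\|_2 \le \epsilon\mu$ with probability at least $3/4$. To handle the unknown $\mu$ and the ``or assert $\mu<\tau$'' clause, I would run the scalar V-bounded data structure of Theorem~5.2 in parallel: it either certifies $\mu<\tau$ (and we do the same) or returns a constant-factor estimate $\hat\mu$ of $\mu$, which is then used to choose the right number of averaged copies via the geometric-levels trick underlying Theorem~2.3. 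Because the per-sample complexity of the vector estimator is identical to the scalar one and the sample count $N$ is the same up to constants, both the space $O(dL(\phi)^{5/6}M_\phi^3\epsilon^{-2}\tau^{-1/2}\cdot n)$ and query time $O(dL(\phi)^{5/6}M_\phi^4\epsilon^{-2}\mu^{-1/2})$ follow from the scalar analysis. The main obstacle is verifying rigorously that pairwise independence of the $(U_t, Y_t)$ really lets the ``diagonal'' term in Corollary~7.3 be absorbed into $\E[\hat Z^2]$; once that is done, the rest is the V-bounded template applied coordinate-wise in L2.
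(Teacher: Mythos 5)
Your proposal is correct and follows essentially the same route as the paper: build a scale-free MR-HBE for the scalar $Z_\phi(y)$ where $\phi=\log\|g\|_2$, then use the balanced-distribution machinery (Lemma 7.4 and Corollary \ref{cor:gradient}) to lift the unbiasedness and second-moment bound to the vector estimator, and finish with the $V$-bounded template. The paper's own proof is a two-sentence invocation of Theorem \ref{thm:main} and Corollary \ref{cor:gradient}; you simply fill in the intermediate steps (including the correct observation that the diagonal second-moment sum is dominated by $\E[\hat Z^2]$ because the cross terms are nonnegative), which are implicit in the paper.
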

\begin{proof} We first call Theorem \ref{thm:main} to construct an Multi-resolution HBE for the problem of approximating $Z_{\phi}(y)$, where $\phi=\log(\|g\|_{2})$. By Corollary \ref{cor:gradient}, this shows that we can turn our MR-HBE estimator to an unbiased  estimator for $\sum_{x\in X}g(x,y)$ and that the variance is bounded by that of estimating $Z_{\phi}(y)$. 
\end{proof}
%\subsection{Application to Gradient Estimation}
%We assume that we are given points $X=\{x_{1},\ldots,x_{n}\}\subset \mathcal{S}^{d-1}$, and a function $f:\mathcal{S}^{d-1}\times  \mathcal{S}^{d-1}\to \R$ such that $\nabla f(x,y)=xe^{\phi(\langle x, y\rangle)}$ for some convex Lipschitz function $\phi$, or more generally whose gradient norm is of the form \eqref{eq:geneal_form}. We wish to estimate the gradient:
%\begin{equation}
%\nabla \sum_{i=1}^{n}f(x_{i},y)  =  \sum_{i=1}^{n}x_{i} e^{\phi(\langle x_{i},y\rangle)}
%\end{equation}

\section{Lower bound under SETH or OVC}\label{sec:lower}
 
\begin{conjecture}[Strong Exponential Time Hypothesis (SETH)\cite{impagliazzo2001complexity}]
For any $\epsilon>0$, there exists $k=k(\epsilon)$ such that $k$-SAT on $n$ variables cannot be solved in time $O(2^{(1-\epsilon)n})$.
\end{conjecture}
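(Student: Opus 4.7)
The plan is to reduce the Orthogonal Vectors problem to the task of $\alpha$-approximating $\mu := \frac{1}{n^2}\sum_{x\in X,\, y\in Y} e^{L\langle x,y\rangle}$. By the standard Williams reduction, SETH implies OVC, which states that for every $\delta>0$ there is a constant $c(\delta)$ such that OV on two sets of $n$ binary vectors in dimension $d_0 = c(\delta)\log n$ admits no $n^{2-\delta}$-time algorithm. I will show that any $n^{2-O(\delta)}$-time $\alpha$-approximation to $\mu$ would decide such OV instances, contradicting OVC (and hence SETH).

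Given an OV instance $(A,B)\subset \{0,1\}^{d_0}$ with $|A|=|B|=n$, I first pass to balanced instances with $|a|=|b|=d_0/2$ via a standard padding reduction, and then embed onto the unit sphere by $x(a) := (\mathbf{1}_{d_0}-a)/\sqrt{d_0/2}$ and $y(b) := b/\sqrt{d_0/2}$. A direct computation gives $\langle x(a), y(b)\rangle = 1 - 2(a\cdot b)/d_0$, so every orthogonal pair attains the maximal inner product $\rho_{\text{yes}}=1$ while every non-orthogonal pair satisfies $\langle x(a),y(b)\rangle \leq \rho_{\text{no}} := 1 - 2/d_0$. To amplify the gap $g := \rho_{\text{yes}}-\rho_{\text{no}}$ from its naive value $\Theta(1/\log n)$ up to a positive constant $g(\delta)$ while keeping the ambient dimension at $O_\delta(\log n)$, I compose this embedding with a tensor-power amplification of order $k = \Theta(d_0)$ (which lives on $\mathcal{S}^{d_0^k-1}$ and drives the non-orthogonal inner product down to $(1-2/d_0)^k \leq e^{-\Omega(1)}$) followed by a Johnson--Lindenstrauss sketch back to $O_\delta(\log n)$ coordinates with additive accuracy $o(g)$ and a final renormalization onto $\mathcal{S}^{d-1}$.

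With a constant gap $g=g(\delta)$ in hand, a simple threshold argument completes the reduction. If no orthogonal pair exists, every term satisfies $e^{L\langle x(a),y(b)\rangle} \leq e^{L\rho_{\text{no}}}$, so $\mu \leq e^{L\rho_{\text{no}}}$; conversely, the presence of even a single orthogonal pair forces $\mu \geq e^{L\rho_{\text{yes}}}/n^2$. Their ratio equals $e^{Lg}/n^2$, which exceeds $\alpha^2$ whenever $Lg > 2\log n + 2\log \alpha$, i.e., whenever $L > (2/g(\delta))(1+\log\alpha/\log n)\log n$. Setting $C(\delta,\alpha) := (2/g(\delta))(1+\log\alpha/(2\log n))$ and $L > C(\delta,\alpha)\log n$, an $\alpha$-approximation $\widehat\mu$ to $\mu$ must lie on opposite sides of $\alpha\cdot e^{L\rho_{\text{no}}}$ in the two cases, so running the hypothetical $n^{2-O(\delta)}$-time approximation algorithm and comparing $\widehat\mu$ to this threshold decides the OV instance in $n^{2-O(\delta)}$ time, the desired contradiction.

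The main obstacle is the constant-gap amplification step: the naive spherical embedding yields only a $\Theta(1/\log n)$ gap, which would force $L = \Omega(\log^2 n)$ rather than the stated $L = \Theta(\log n)$ threshold. Producing a constant gap while staying in $O_\delta(\log n)$ dimensions requires carefully composing tensor amplification (which boosts the gap but inflates the intermediate dimension to $d_0^{\Theta(d_0)}$) with a dimensionality-reduction sketch whose additive error must be $o(g(\delta))$; tracking the interplay between the sketching error, the number of tensor levels, and the SETH-to-OVC threshold $c(\delta)$ is what produces the unusual double-exponential dependence $e^{e^{\delta/c(\delta)}}$ in $C(\delta,\alpha)$ quoted in the paper. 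The remaining bookkeeping---that the entire reduction runs in near-linear time, and that the final $\alpha$-approximation algorithm is called only $O(\log n)$ times with success amplification---is routine.
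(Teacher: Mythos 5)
The statement you were asked about is the Strong Exponential Time Hypothesis itself. This is a \emph{conjecture}: the paper does not prove it (nor could it); it is quoted as an assumption from Impagliazzo--Paturi. Your text does not prove SETH either --- what you have written is an attempted proof of the paper's conditional lower bound (Theorem \ref{thm:lower}), which \emph{assumes} SETH/OVC. So, as an answer to the stated question, the proposal addresses the wrong statement.

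Read instead as a proof sketch of Theorem \ref{thm:lower}, the skeleton (spherical embedding of OV, threshold argument comparing $e^{L\rho_{\mathrm{no}}}$ against $e^{L\rho_{\mathrm{yes}}}/n^{2}$, and the resulting condition $Lg \gtrsim 2\log n + 2\log\alpha$) matches the paper's second and third steps, but the step you defer as the ``main obstacle'' --- amplifying the $\Theta(1/\log n)$ inner-product gap to a constant $g(\delta)$ while staying in $O_{\delta}(\log n)$ dimensions --- is not routine bookkeeping; it is where the whole difficulty lives, and your tensor-plus-JL route does not work. The tensored vectors $x^{\otimes k}$ with $k=\Theta(d_{0})=\Theta(\log n)$ live in dimension $d_{0}^{k}=n^{\Theta(\log\log n)}$, so an explicit JL projection cannot be computed in subquadratic total time, and fast tensor sketches do not give the all-pairs, $O(\log n)$-dimension, constant-error guarantee you need (their variance/target-dimension blowup in $k$ is exactly the problem). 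Worse, the goal is self-defeating: if one \emph{could}, in subquadratic time, embed OV instances into $\mathcal{S}^{d-1}$ with $d=O_{\delta}(\log n)$ so that yes-instances contain a pair with inner product near $1$ and no-instances have all pairs below a constant bounded away from $1$, then constant-factor approximate bichromatic closest pair via LSH would decide OV in time $n^{2-\Omega(1)}$, refuting the very OVC you are assuming. This is why the paper does not attempt any such amplification and instead invokes Rubinstein's distributed-PCP hardness result for approximate bichromatic closest pair (Theorem \ref{thm:aviad}) as a black box: there the achievable inner-product gap is only $\Theta(1/T')$ with $T'=O\bigl(e^{e^{\delta/c(\delta)}}\bigr)$, and plugging that gap into the threshold computation is precisely what produces the constant $C(\delta,\alpha)=O\bigl(e^{e^{\delta/c(\delta)}}\bigr)\bigl(1+\log\alpha/2\log n\bigr)$; your sketch, which would yield a $\delta$-independent constant gap, could not produce this dependence and does not engage with the balancing of $\epsilon$ against the $c(\delta)\log^{2}\log(1/\epsilon)/\log(1/\epsilon)$ term in the runtime exponent that the paper carries out.
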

A conjecture that is implied by SETH~\cite{williams2005new,williams2014finding}, concerns the complexity of finding a pair of orthogonal vectors amongst two set of  binary vectors.
\begin{conjecture}[Orthogonal Vectors Conjecture (OVC)]
For every $\delta>0$ there exists $c=c(\delta)$ such that given two sets $A,B\subset \{0,1\}^{m}$ of cardinality $N$, where $m=c\log N$, deciding if there is a pair $(a,b)\in A\times B$ such that $a\top b=0$ cannot be solved in time $O(N^{2-\delta})$.
\end{conjecture}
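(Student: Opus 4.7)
The plan is to prove OVC by establishing the implication SETH $\Rightarrow$ OVC via contrapositive: assume that for some $\delta>0$ and for \emph{every} constant $c$ there exists an $O(N^{2-\delta})$-time algorithm for deciding OV on two sets of $N$ vectors in $\{0,1\}^{c\log N}$, and build from it a sub-exponential-savings $k$-SAT algorithm for every $k$, contradicting SETH. This is the classical reduction of Williams~\cite{williams2005new}, combined with the Sparsification Lemma of Impagliazzo--Paturi--Zane~\cite{impagliazzo2001complexity} to drive the number of clauses down so that the dimension of the resulting OV instance remains logarithmic in $N$.

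First, I would apply the Sparsification Lemma with parameter $\epsilon'>0$ to an arbitrary $k$-SAT instance $\varphi$ on $n$ variables, reducing satisfiability of $\varphi$ to satisfiability of the disjunction of at most $2^{\epsilon' n}$ $k$-SAT instances, each with at most $M = C_k(\epsilon')\cdot n$ clauses for a constant $C_k(\epsilon')$ depending only on $k$ and $\epsilon'$. It then suffices to decide each \emph{sparse} instance in time $O(2^{(1-\epsilon)n})$ for some $\epsilon>\epsilon'$, with a small overhead $2^{\epsilon' n}$ for enumerating the pieces.

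Next, for a fixed sparse instance, I would perform the standard ``meet in the middle'' encoding: split the $n$ variables into two halves $V_1, V_2$ of size $n/2$, enumerate all $N := 2^{n/2}$ partial assignments to each half, and for each assignment $a$ to $V_1$ construct the vector $u_a\in\{0,1\}^M$ whose $i$-th coordinate equals $1$ iff clause $i$ is \emph{not} already satisfied by $a$; define $v_b\in\{0,1\}^M$ analogously for partial assignments $b$ to $V_2$. Then the full assignment $(a,b)$ satisfies the sparse instance iff every clause is satisfied by at least one side, iff $\langle u_a, v_b\rangle = 0$. Hence the sparse instance is satisfiable iff there is an orthogonal pair in $A := \{u_a\}_{a}$, $B := \{v_b\}_{b}$. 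Crucially, $|A|=|B|=N$ and the ambient dimension is $M = 2C_k(\epsilon')\cdot \log_2 N$, matching the OVC hypothesis with $c := 2C_k(\epsilon')$.

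Finally, by the assumed OV algorithm each sparse instance is decided in time $O(N^{2-\delta}) = O(2^{(1-\delta/2)n})$, and there are $2^{\epsilon' n}$ of them, so the total running time on $\varphi$ is $O(2^{(1-\delta/2+\epsilon')n})$. Choosing $\epsilon' = \delta/4$ and $\epsilon = \delta/4$ yields an $O(2^{(1-\epsilon)n})$ algorithm for $k$-SAT for every $k$, contradicting SETH. The main obstacle in this plan is the control of the clause count: without the Sparsification Lemma, the number of clauses $M$ could be as large as $n^k$, in which case $M \ne O(\log N)$ and the OVC hypothesis would not apply to the constructed instance; handling this by first sparsifying is the one non-elementary ingredient, while the remainder of the reduction is a direct encoding and a routine time-accounting calculation.
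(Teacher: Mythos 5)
The statement you are trying to prove is not a theorem of the paper at all: it is the Orthogonal Vectors \emph{Conjecture}, which the paper states as an unproved hardness assumption (used together with SETH in Theorem \ref{thm:lower}) and for which the paper offers no proof, only the remark, with citations to Williams, that OVC is \emph{implied by} SETH. Your proposal does not (and could not) establish the conjecture as stated, since the literal statement is an unconditional lower bound that is open; what you actually prove is the implication SETH $\Rightarrow$ OVC, i.e.\ exactly the content of the paper's citation rather than of the displayed statement.

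Taken as a proof of that implication, your sketch is the standard and correct argument: the quantifiers in the contrapositive are handled properly (you assume that for some $\delta>0$ and \emph{every} constant $c$ there is an $O(N^{2-\delta})$ OV algorithm in dimension $c\log N$, which is precisely the negation of the conjecture as stated), the Sparsification Lemma is correctly identified as the ingredient that keeps the clause count, and hence the OV dimension, at $O(n)=O(\log N)$, the split-and-encode step with $\langle u_a,v_b\rangle=0$ iff the combined assignment satisfies the sparse instance is right, and the time accounting $2^{\epsilon' n}\cdot O(N^{2-\delta})=O(2^{(1-\delta/2+\epsilon')n})$ with $\epsilon'=\delta/4$ gives uniform savings for every $k$, contradicting SETH. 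So the only genuine issue is the mismatch between what you set out to prove and what a ``proof'' of this statement can mean: within this paper OVC functions as an assumption, and the correct move is either to leave it as such or to record your argument explicitly as the (known) reduction showing that refuting OVC would refute SETH, not as a proof of the conjecture itself.
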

These popular conjectures have been the base of a flurry of quadratic hardness results in the past years.  The basis of our hardness result is the following recent theorem by Aviad Rubinstein~\cite{rubinstein2018hardness}.  Let $d^{2}(A,B):=\min\limits_{a\in A}\min\limits_{b\in B}\{\|a-b\|^{2}_{2}\}$ be the minimum squared distance between  $A,B\subset \R^{d}$.

\begin{theorem}[Theorem 4.1\cite{rubinstein2018hardness}]\label{thm:aviad}
Unless SETH and OVC are false, the following holds:  for every $\delta>0$ and $\epsilon\in (0,e^{-1})$ there exist  constants $c(\delta)>0$, $T(\epsilon)=O(\frac{\log \frac{1}{\epsilon}}{\log \log \frac{1}{\epsilon}})$ and $T'=2^{O(T\log T)}=O(\frac{1}{\epsilon})$ such that given two sets $A,B\subset \{0,1\}^{d}$ of $N$ vectors with
\begin{itemize}
\item Dimension: $d \geq  2 mT'$, with $m = c(\delta)\log N$
\item Sparisty: for all $x\in A\cup B$, $\|x\|_{2}^{2}= mT'$
\end{itemize}
there is no algorithm that decides whether $d^{2} (A,B) \left\{ \begin{matrix} = m (T'-1) \\
\textrm{or}\\
\geq mT'\qquad
\end{matrix} \right.$
   in time $N^{2- O\left(\delta + c(\delta)\frac{\log^{2}\log \frac{1}{\epsilon}}{\log \frac{1}{\epsilon}}\right)}$.
\end{theorem}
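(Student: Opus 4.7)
The plan is to reduce from the orthogonal vectors problem (OV) to the gap closest-pair instance in the statement via a distributed-PCP--style gadget. Starting from an OV instance $A', B' \subset \{0,1\}^m$ with $m = c(\delta)\log N$, OVC (a consequence of SETH) rules out any $N^{2-\delta}$ algorithm distinguishing ``some orthogonal pair exists'' from ``$\langle a',b'\rangle \geq 1$ for every pair.'' The reduction below produces sets $A,B \subset \{0,1\}^d$, $d = 2mT'$, with each vector of Hamming weight exactly $mT'$ (hence $\|x\|_2^2 = mT'$, giving the required sphere condition), such that the YES case of OV yields a pair with $\|a-b\|_2^2 = m(T'-1)$ while in the NO case every pair satisfies $\|a-b\|_2^2 \geq mT'$.

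The heart of the argument is the distributed PCP encoding. Choose $T = \Theta(\log(1/\epsilon)/\log\log(1/\epsilon))$ and $T' = 2^{O(T\log T)}$. One constructs a two-prover protocol with $m$ parallel question-locations in which the verifier reads $O(T)$-bit responses from each prover and accepts iff a local predicate on the answer pair holds at every location; the protocol is designed so that honest provers can succeed iff $\langle a',b'\rangle = 0$. Encode every prover strategy as a vector: at each location place a gadget of length $2T'$ whose Hamming pattern records the prover's answer (paired with its bit-complement so the block has weight exactly $T'$). Then, up to an additive constant, $\langle a,b\rangle$ counts the locations at which the two provers' answers satisfy the verifier's predicate. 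Completeness extracts from an orthogonal OV pair a vector pair agreeing at all $m$ locations, yielding $\|a-b\|_2^2 = m(T'-1)$; soundness (together with the choice of $T$) forces every pair on a NO-instance to disagree on more than half of the locations, giving $\|a-b\|_2^2 \geq mT'$.

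Parameter tracking then closes the argument. The reduction runs in time $N \cdot \poly(d)$ and produces $N$ vectors in $\{0,1\}^d$ with $d = O(c(\delta)\log N / \epsilon)$. An algorithm solving the gap closest-pair problem in time $N^{2-\Delta}$ would therefore yield an $N^{2-\Delta + o(1)}$ algorithm for OV, contradicting OVC/SETH whenever $\Delta > O\!\left(\delta + c(\delta) \tfrac{\log^2\log(1/\epsilon)}{\log(1/\epsilon)}\right)$. The main obstacle is the PCP construction itself: one must design a distributed verifier whose soundness error decays fast enough in $T$, whose YES vs.\ NO answer translates into an \emph{exact} unit shift in the Hamming distance (the theorem states a precise gap of $m(T'-1)$ vs.\ $mT'$, not merely an asymptotic separation), and whose alphabet/proof-size trade-off matches the stated $T$ and $T'$. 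Once this verifier is in hand, the sparsification via paired indicator gadgets, the padding to constant Hamming weight, and the complexity-theoretic contradiction are essentially formal.
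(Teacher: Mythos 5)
This statement is not proved in the paper at all: it is Rubinstein's Theorem 4.1, imported verbatim from \cite{rubinstein2018hardness} and used as a black box in Section \ref{sec:lower} (the paper's own contribution there is only the reduction from this gap bichromatic closest-pair problem to approximating $\frac{1}{n}\sum_{y}\frac{1}{n}\sum_{x}e^{L\langle x,y\rangle}$). So the relevant comparison is with Rubinstein's proof, and on that score your proposal is a plan rather than a proof. You correctly identify the route Rubinstein takes — a distributed-PCP/MA-communication protocol for Orthogonal Vectors, with each prover's answers encoded into sparse binary blocks so that $\langle a,b\rangle$ counts (up to an additive constant) the number of accepting locations — but you then explicitly defer ``the PCP construction itself,'' and that construction is the entire technical content of the theorem. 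The specific parameters in the statement, $T=O(\log\frac{1}{\epsilon}/\log\log\frac{1}{\epsilon})$, $T'=2^{O(T\log T)}=O(1/\epsilon)$, and the exponent $N^{2-O(\delta+c(\delta)\log^{2}\log\frac{1}{\epsilon}/\log\frac{1}{\epsilon})}$, come out of the concrete protocol (in Rubinstein's argument, an Aaronson--Wigderson-style MA protocol made efficient via algebraic-geometry codes, followed by soundness amplification); none of this can be checked from your sketch, and no elementary substitute is offered.

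Two further points need care even granting a suitable verifier. First, the exact gap: since all vectors have $\|x\|_{2}^{2}=mT'$, we have $\|a-b\|_{2}^{2}=2mT'-2\langle a,b\rangle$, so the YES case $\|a-b\|_{2}^{2}=m(T'-1)$ versus NO case $\|a-b\|_{2}^{2}\geq mT'$ forces the per-location gadget to shift the block inner product by exactly one unit between accepting and rejecting answer pairs, uniformly over all answer pairs, while keeping every block of length $2T'$ and weight exactly $T'$. Your description (``a gadget whose Hamming pattern records the prover's answer paired with its bit-complement'') gestures at this but does not verify it; note in particular that literal agreement of the two blocks would give block distance $0$, so ``agreeing at all $m$ locations'' must mean ``accepting at all locations'' under a gadget with the one-unit-shift property, which is exactly the deferred design work. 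Second, soundness must hold against \emph{arbitrary} pairs of vectors produced by the reduction (any proof pair), not just honest encodings, and must guarantee at most half the locations accept for every pair arising from a NO instance; this is where the amplification driving the $T,T'$ trade-off enters. As it stands, the proposal assumes the theorem's hardest part rather than proving it.
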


Our proof will proceed by translating hardness for the problem of Approximate Bi-chromatic Closest pair to our setting. This connection was first established  in  \cite{backurs2017fine} to obtain quadratic hardness results for Kernel Methods and Neural Networks.

\subsection{Proof of Theorem \ref{thm:lower}}

\begin{proof} 
The proof proceeds by showing how to reduce an instance $(A,B)$ of the approximate Bi-chromatic closest pair   in Theorem \ref{thm:aviad} to an instance $(X,Y) \subset \mathcal{S}^{d-1}\times \mathcal{S}^{d-1}$ of producing a $\alpha$ approximation to: $\frac{1}{N^{2}}\sum_{x\in X}\sum_{y\in Y} e^{L \cdot (\langle x, y\rangle-1)}$.
\paragraph{Setting $\epsilon = e^{-e^{\delta/c(\delta)}}\in (0,e^{-1})$ in Theorem \ref{thm:aviad}}
We start by finding a constant $\epsilon \in (0,e^{-1})$ such that:
\begin{align}
&\qquad c(\delta)\frac{\log^{2}\log \frac{1}{\epsilon}}{\log \frac{1}{\epsilon}} \leq \delta\\
&\Leftrightarrow \log^{2}\log\frac{1}{\epsilon} \leq \left(\frac{\delta}{c(\delta)}\right) \log\frac{1}{\epsilon}\\
& \Leftrightarrow \zeta^{2} \leq  \left(\frac{\delta}{c(\delta)}\right) e^{\zeta}
\end{align}
where $\zeta = \log \log\frac{1}{\epsilon} >0$. Setting $\zeta = \frac{\delta}{c(\delta)}>0$ we get  $e^{\zeta} \geq 1+ \zeta \geq \zeta>0 $.  For this choice we have:
\begin{align}
\epsilon = e^{-e^{\zeta}} < e^{-1} \Leftrightarrow e^{\zeta}>1
\end{align}
Hence, we may pick $\epsilon = e^{-e^{\frac{\delta}{c(\delta)}}}$ for which $\tilde{T}(\delta) =  O(\frac{e^{\delta/c(\delta)}}{\delta/c(\delta)})$  and $\tilde{T}'(\delta) = O(e^{e^{\delta/c(\delta)}})$. Theorem \ref{thm:aviad} then shows that 
there is no $N^{2-O(\delta)}$ algorithm to decide between: $d^{2}(A, B) \left\{ \begin{matrix} = m (\tilde{T}'(\delta)-1) \\
\textrm{or}\\
\geq m\tilde{T}'(\delta)\qquad
\end{matrix} \right.$. 

\paragraph{Translating distance bounds to Density bounds for Gaussian Kernel}
We next show that distinguishing between the two cases for $d^{2}(A,B)$ distinguishes between two values for the average of  the Gaussian kernel between points in the two datasets.
In the case where $d^{2}(A,B)\geq m \tilde{T}^{'}(\delta)$, we have that:
\begin{align}
\frac{1}{N^{2}}\sum_{a \in A}\sum_{b\in B}e^{-\beta \|a-b\|^{2}} \leq e^{-\beta m \tilde{T}^{'}(\delta)}
\end{align}
In the other case, where $d^{2}(A,B)=m(\tilde{T}'(\delta)-1)$ we get:
\begin{align}
\frac{1}{N^{2}}\sum_{a \in A}\sum_{b\in B}e^{-\beta \|a-b\|^{2}} \geq \frac{1}{N^{2}} e^{-\beta d^{2}(A,B)} = e^{-\beta m \tilde{T}^{'}(\delta)}\cdot e^{-2\log N + \beta m}
\end{align}
So as long as $e^{-2\log N+\beta m} > \alpha   \Leftrightarrow \beta > \frac{2\log N + \log\alpha}{m}$ any algorithm that can produce a $\alpha$-approximation to $\frac{1}{N^{2}}\sum_{a \in A}\sum_{a\in B}e^{-\beta\|a-b\|^{2}}$ distinguishes between the two cases as such it cannot run in time $N^{2-O(\delta)}$. 
\paragraph{Gaussian Kernel to Log-convex (linear) and Bound on Lipschitz Constant}
To complete the proof we observe that:
\[
\beta \|a-b\|^{2}= -\beta 2m\tilde{T}' (\langle\frac{a}{\sqrt{m\tilde{T}'}},\frac{b}{\sqrt{m\tilde{T}'}}-1\rangle) = L(\langle\frac{a}{\sqrt{m\tilde{T}'}},\frac{b}{\sqrt{m\tilde{T}'}}\rangle-1)
\]  with $L:=2\beta m\tilde{T}'$. Setting $Y:=\{a /\sqrt{m\tilde{T}'}: a\in  A \}$ and $X:=\{b/\sqrt{m\tilde{T}'}: b\in  B \}$ we have that:
\[
e^{-\beta \|a-b\|^{2}} = e^{L (\langle y,x\rangle -1)}
\]
and $X,Y\subset \mathcal{S}^{d-1}$. Hence, substituting the lower bound on $\beta$ we get that for:
\[
L > 2\tilde{T}'(\delta) (2\log N + \log \alpha) = \left\{C(\delta) \left(1 + \frac{\log  \alpha }{2\log N}\right)\right\}\cdot \log N
\]
where $C(\delta) = O\left(e^{e^{\frac{\delta}{c(\delta)}}}\right)$
there is no algorithm that approximates the sum in time less than $N^{2-O(\delta)}$.
\end{proof}
\section{Remaining Proofs}\label{sec:proofs}
This section contains proofs of lemmas and theorems stated in the main paper as well as various auxiliary results. 

\subsection{Proof of Corollary \ref{cor:examples}}
Under the condition $r\leq \frac{1}{2}\sqrt{\log n}$ we have that the Lipschitz constants of the first four functions in Table \ref{tbl:examples} are bounded by $L(\phi)\leq 2r^{2} \leq \frac{1}{2}\log n$. This is also true  for the last function under the condition $0\leq k \leq \frac{c-1}{2}\log n$.  The result follows from $\mu \in [e^{-2L(\phi)},1]\subseteq [\frac{1}{n},1]$.
\subsection{Moments of Multi-resolution HBE}

\begin{proof}[Proof of Lemma \ref{lem:moments}]
We start by computing the first moment:
\begin{align}
\E[Z_{T}(y)] &= \frac{1}{|X|}\sum_{t\in [T]} \E\left[\frac{w_{t}( X_{t},y)}{p_{t}(X_{t},y)}|H_{t}(y)|\right]\\
&=\frac{1}{|X|}   \sum_{x\in X}\sum_{t\in T(x,y)}w_{t}( x, y )\\
&=  \frac{1}{|X|}   \sum_{x\in X}w(x, y)
\end{align}
The second moment is given by
\begin{align}
\E[Z_{T}^{2}] & = \frac{1}{|X|^{2}}\sum_{t\in [T]} \sum_{t^{'}\in [T]} \E\left[ \frac{w_{t}( X_{t}, y)}{p_{t}(X_{t},x)}|H_{t}(y)|\frac{w_{t^{'}}(  X_{t^{'}},x)}{p_{t^{'}}(X_{t^{'}},x)}|H_{t^{'}}(y)|\right]\\
& \leq\frac{1}{|X|^{2}}  \sum_{t\in [T]} \E\left[\frac{w_{t}^{2}(X_{t},y)}{p^{2}_{t}(X_{t},y)}|H_{t}(y)|^{2}\right] + \mu^{2}\\
& = \frac{1}{|X|^{2}}\sum_{x\in X} \sum_{t\in T(x,y)}\frac{w_{t}^{2}(x, y)}{p_{t}(x,y)}\E\left[|H_{t}(y)||x\in H_{t}(y)\right] + \mu^{2}\\
&\leq  \frac{1}{|X|^{2}}\sum_{x\in X} \sum_{t\in T(x,y)}\frac{w_{t}^{2}(x, y)}{p_{t}(x,y)}\sum_{z\in X}\frac{\min\{p_{t}(z,y),p_{t}(x,y)\}}{p_{t}(x,y)} + \mu^{2}
\end{align}
\end{proof}

\subsection{Distance Sensitive Hashing on the unit sphere}
To analyze the collision probability of the DSH scheme we closely follow  the proof of Aumuller et al.~\cite{aumuller2018distance} with the difference that we use Proposition \ref{prop:tails_two} to bound bi-variate Gaussian integrals. 
\begin{proposition}[Proposition 3~\cite{szarek1999nonsymmetric}]\label{prop:szarek}
 Let $X_{1}\sim N(0,1)$ and $t>0$
\begin{equation}
\frac{1}{\sqrt{2\pi}}\frac{1}{t+1}e^{-\frac{t^{2}}{2}} \leq \P[X_{1}\geq t] \leq \frac{1}{\sqrt{2\pi}}\frac{1}{t}e^{-\frac{t^{2}}{2}}\label{eq:tails_one}
\end{equation}
\end{proposition}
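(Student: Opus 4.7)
The plan is to establish both bounds via the identity $\frac{d}{dx}e^{-x^2/2} = -xe^{-x^2/2}$, which produces closed-form antiderivatives once we multiply the density by a suitable factor of $x$.

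For the upper bound, I would exploit the fact that $x/t \geq 1$ on the integration domain $[t,\infty)$ and write
\[
\int_t^\infty e^{-x^2/2}\,dx \;\leq\; \int_t^\infty \frac{x}{t}\,e^{-x^2/2}\,dx \;=\; \frac{1}{t}e^{-t^2/2}.
\]
Dividing by $\sqrt{2\pi}$ yields the upper bound in \eqref{eq:tails_one}.

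For the lower bound, the plan is to reduce the inequality to a monotonicity statement. Define the defect
\[
h(t) \;:=\; \int_t^\infty e^{-x^2/2}\,dx - \frac{1}{1+t}e^{-t^2/2},
\]
and note that $h(t) \to 0$ as $t \to \infty$, since the upper bound just proved shows that both terms decay like $e^{-t^2/2}/t$. I would then differentiate term by term: the first piece contributes $-e^{-t^2/2}$, and a quotient-rule computation on the second piece gives $-\frac{1+t+t^2}{(1+t)^2}e^{-t^2/2}$. Combining over the common denominator $(1+t)^2$ produces the cancellation $(1+t+t^2)-(1+2t+t^2) = -t$, leaving
\[
h'(t) \;=\; -\frac{t}{(1+t)^2}\,e^{-t^2/2} \;\leq\; 0 \quad\text{for } t > 0.
\]
Since $h$ is nonincreasing and vanishes at infinity, $h(t) \geq 0$ for all $t > 0$, which rearranges to the desired lower bound.

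The only real subtlety is the choice of the denominator $1+t$ rather than the sharper Mill's ratio denominator $(1+t^2)/t$: this particular form is precisely what makes $h'$ collapse to a single nonpositive term after simplification, so the monotonicity argument becomes transparent. Once that form is fixed, both halves of \eqref{eq:tails_one} follow from a one-line calculus calculation, which is presumably the reason this cleaner (but weaker) version is the one recorded in \cite{szarek1999nonsymmetric} and used downstream in Lemma~\ref{lem:collision_basic}.
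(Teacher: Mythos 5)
Your proof is correct, and both halves check out: the upper bound follows from $x/t \geq 1$ on $[t,\infty)$ exactly as you write, and your computation of $h'(t) = -\tfrac{t}{(1+t)^2}e^{-t^2/2} \leq 0$ together with $h(t) \to 0$ does give $h \geq 0$ on $(0,\infty)$. The only thing to note is that the paper does not prove this statement at all — it is imported verbatim as Proposition 3 of \cite{szarek1999nonsymmetric} — so there is no argument in the paper to compare against; your self-contained derivative-of-the-defect argument is a clean way to supply the missing proof (and is essentially the standard one). One tiny imprecision worth smoothing: what you need for the boundary condition is only that $h(t)\to 0$, and the upper bound you proved controls the integral term, while the other term $\tfrac{1}{1+t}e^{-t^2/2}\to 0$ is immediate; saying both terms ``decay like $e^{-t^2/2}/t$'' slightly overstates what the upper bound alone establishes, though it is of course true.
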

\begin{proposition}[Propositions 3.1 \& 3.2~\cite{hashorva2003multivariate}]\label{prop:tails_two}
 Let $(X_{1},X_{2})\sim \mathcal{N}(0,\begin{bmatrix}
1&\rho\\
\rho & 1
\end{bmatrix})$ be two $\rho$-correlated standard normal random variables. For all $\rho <1$ and $t>0$:
\begin{align}
 \P[X_{1}>t  \wedge X_{2}>t]&\geq \frac{4}{(1+\sqrt{1+4\frac{(1+\rho)^{2}}{\min(1-\rho,1+\rho)}})^{2}}\frac{\min(1-\rho,1+\rho)}{(1+\rho)^{2}}\frac{1+|\rho|}{2\pi\sqrt{1-\rho^{2}}} e^{-\frac{2}{1+\rho}\frac{t^{2}}{2}}\\   
\P[X_{1}>t  \wedge X_{2}>t] &\leq\frac{(1+\rho)^{\frac{3}{2}}}{2\pi \sqrt{1-\rho}} e^{-\frac{2}{1+\rho}\frac{t^{2}}{2}}
 \end{align}
\end{proposition}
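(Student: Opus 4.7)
The plan is to reduce both bivariate tail bounds to one-dimensional Mills-ratio estimates already provided by Proposition \ref{prop:szarek}. Write $X_2 = \rho X_1 + \sqrt{1-\rho^2}\, Z$ with $Z\sim N(0,1)$ independent of $X_1$; then, letting $\phi$ and $\bar\Phi$ denote the standard normal density and survival function,
\[
\P[X_1 > t,\, X_2 > t] \;=\; \int_t^\infty \phi(x)\,\bar\Phi\!\left(\frac{t - \rho x}{\sqrt{1-\rho^2}}\right) dx.
\]
This is the master identity both bounds will be derived from. An alternative parametrization is the orthogonal decomposition $U = (X_1+X_2)/\sqrt{2(1+\rho)}$, $V=(X_1-X_2)/\sqrt{2(1-\rho)}$, which makes the anisotropy $\min(1-\rho,1+\rho)/(1+\rho)^2$ appearing in the lower bound manifest.

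For the upper bound, I would apply the upper Mills-ratio inequality of Proposition \ref{prop:szarek} to the inner $\bar\Phi$. Combining with $\phi(x)$ and completing the square in $x$, the product of Gaussian exponents separates as $e^{-t^2/(1+\rho)}$ multiplied by a Gaussian in $x$ of variance proportional to $(1-\rho^2)$. Integrating the latter over $[t,\infty)$ and collecting the $\sqrt{1-\rho^2}$ Jacobian with the $1/\sqrt{2\pi}$ normalizations produces a constant of the form $(1+\rho)^{3/2}/(2\pi\sqrt{1-\rho})$, yielding the claimed bound.

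For the lower bound, I would restrict the outer integral to $x\in[t,\,t+c]$ for a parameter $c>0$. On this interval, the argument $(t-\rho x)/\sqrt{1-\rho^2}$ remains controlled so that the lower Mills-ratio inequality of Proposition \ref{prop:szarek} applies. Lower-bounding the integrand by its value at $x=t+c$ and multiplying by the length $c$ gives an expression of the form $c\cdot(\text{polynomial prefactor})\cdot e^{-t^2/(1+\rho)}\cdot e^{-\Theta(c^2) - \Theta(ct)}$. Choosing $c$ of the form $c_*/t$ absorbs the linear-in-$t$ exponent into a constant, and the remaining scalar optimization over $c_*$ amounts to maximizing $c_*\,e^{-\alpha c_*^2 - \beta c_*}$ for explicit $\alpha=\alpha(\rho),\beta=\beta(\rho)$. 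The first-order condition is the quadratic $2\alpha c_*^2 + \beta c_* - 1 = 0$, whose positive root is $c_* = 2/(1+\sqrt{1+8\alpha/\beta^2})$; substituting back recovers exactly the factor $4/(1+\sqrt{1+4(1+\rho)^2/\min(1-\rho,1+\rho)})^2$ in the stated bound.

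The main obstacle is a careful bookkeeping of constants uniformly in $\rho\in(-1,1)$, especially as $|\rho|\to 1$ where the two principal directions of the covariance decouple in scale. In particular one must track why the scalar optimization over $c$ gives exactly the combination $\min(1-\rho,1+\rho)/(1+\rho)^2$ rather than a cruder quantity, and handle the cases $\rho\ge 0$ and $\rho<0$ in a single expression via the $\min$ and the absolute value $|\rho|$ in the density factor $\tfrac{1+|\rho|}{2\pi\sqrt{1-\rho^2}}$. Everything else is calculus on one-dimensional Gaussian integrals.
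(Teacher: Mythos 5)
This proposition is not proved in the paper: it is attributed to Propositions 3.1 and 3.2 of Hashorva \cite{hashorva2003multivariate} and then used as a black box in the proof of Lemma \ref{lem:collision_basic}, so there is no internal proof to compare your argument against. You are therefore not reproducing a missing step of the paper but rather reconstructing a cited external estimate.

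On the merits of the sketch itself, the lower-bound strategy --- truncate the outer integral to a strip $[t,t+c]$, invoke the one-dimensional lower bound of Proposition \ref{prop:szarek} on the (then controlled) inner survival function, and optimize the strip width $c$ of order $1/t$ --- is a legitimate route to bounds of this shape, and the quadratic first-order condition you describe is the right mechanism for producing the Mills-ratio-like rational prefactor. Two things would need more care: the positive root of $2\alpha c_*^2+\beta c_*-1=0$ is $c_*=2/\bigl(\beta(1+\sqrt{1+8\alpha/\beta^2})\bigr)$, not $2/(1+\sqrt{1+8\alpha/\beta^2})$, so a factor of $\beta$ has gone missing; and for $\rho>0$ the integrand $\phi(x)\bar\Phi\bigl((t-\rho x)/\sqrt{1-\rho^2}\bigr)$ is a product of a decreasing and an increasing factor on $[t,t+c]$, so ``lower-bounding the integrand by its value at $x=t+c$'' is not automatically valid and the two factors have to be bounded at opposite endpoints.

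The upper-bound sketch has a real gap. For $\rho>0$ the inner argument $u(x)=(t-\rho x)/\sqrt{1-\rho^2}$ is decreasing in $x$ and crosses zero at $x=t/\rho$, so the Mills upper bound $\bar\Phi(u)\le \tfrac{1}{\sqrt{2\pi}}u^{-1}e^{-u^2/2}$ from Proposition \ref{prop:szarek} is inapplicable on the tail $x\ge t/\rho$ and is vacuous near $u=0$. After you ``complete the square'' you are left with a $1/u(x)$ singularity inside the integral, not the clean Gaussian you describe, and the stated prefactor $(1+\rho)^{3/2}/(2\pi\sqrt{1-\rho})$ does not fall out. You need an argument that is insensitive to the sign of $u(x)$: for instance, shift coordinates to the corner $(t,t)$, write the quadratic form as $\tfrac{t^2}{1+\rho}+\tfrac{t(a+b)}{1+\rho}+\tfrac{a^2-2\rho ab+b^2}{1-\rho^2}$ with $a,b\ge 0$, discard the nonnegative linear term, and bound the remaining two-dimensional Gaussian integral over the first quadrant; or, staying one-dimensional, replace the Mills bound by a sign-agnostic estimate such as $\bar\Phi(u)\le e^{-\max(u,0)^2/2}$. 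Either way, the step as written does not go through.
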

We first simplify the sub-exponential terms appearing on the above inequalities using our assumption that $|\rho| < 1-\delta$.  Since the function $\frac{(1+\rho)^{\frac{3}{2}}}{2\pi\sqrt{1-\rho}}$ is increasing in $\rho$ we get $\frac{(1+\rho)^{\frac{3}{2}}}{2\pi\sqrt{1-\rho}} \leq \frac{\sqrt{2}}{\pi \sqrt{\delta}}$. Additionally, we have that $\frac{\min(1-\rho, 1+\rho)}{(1+\rho)^{2}} \geq \frac{\delta}{4}$  and  $(a+b)^{2}\leq 2(a^{2}+b^{2})$ for all $a,b\in \R$. Using the above bounds we get:
\[
\frac{4}{(1+\sqrt{1+4\frac{(1+\rho)^{2}}{\min(1-\rho,1+\rho)}})^{2}}\frac{\min(1-\rho,1+\rho)}{(1+\rho)^{2}}\frac{1+|\rho|}{2\pi\sqrt{1-\rho^{2}}} \geq \frac{2}{2+\frac{16}{\delta}} \frac{\delta}{4}\frac{1}{2\pi} \geq \frac{\delta^{2}}{8+\delta}\frac{1}{8\pi}
\]
We are now in a position ot bound the collision probability.
\begin{proof}[Proof of Lemma \ref{lem:collision_basic}]
The collision probability can be written as:
\begin{align}
\P[h(x)=g(y)] = \P[h(x)\leq m \wedge g(y)\leq m] \frac{\P[\langle x, g\rangle\geq t\wedge \langle y, g\rangle\geq t ]}{\P[\langle x, g\rangle\geq t \vee \langle y, g\rangle\geq t]}\label{eq:def_prob}
\end{align}
We are going to obtain upper and lower bounds for both terms. We start first with the second term. An easy calculation shows that the vector $(X_{1},X_{2}):=(\langle x, g\rangle, \langle y, g\rangle)\sim \mathcal{N}(0,\begin{bmatrix}
1&\rho\\
\rho & 1
\end{bmatrix})$ follows a bivariate normal distribution with unit variances and correlation $\rho=\langle x, y\rangle$. Hence,   $\P[\langle x, g\geq t\rangle] =  \P[\langle y, g\geq t\rangle] = \P[X_{1}\geq t]$ and  $\P[\langle x, g\rangle\geq t\wedge \langle y, g\rangle\geq t ] = \P[X_{1}\geq t\wedge X_{2}\geq t]$.  Using monotonicity and union bound we get that:
\begin{equation}
\frac{1}{2}\frac{\P[X_{1}\geq t \wedge X_{2}\geq t]}{\P[X_{1}\geq t]}\leq \frac{\P[\langle x, g\rangle\geq t\wedge \langle y, g\rangle\geq t ]}{\P[\langle x, g\rangle\geq t \vee \langle y, g\rangle\geq t]} \leq \frac{\P[X_{1}\geq t \wedge X_{2}\geq t]}{\P[X_{1}\geq t]}\label{eq:two_intermediate}
\end{equation}
Using \eqref{eq:two_intermediate} and the  estimates from Propositions \ref{prop:szarek}, \ref{prop:tails_two}
\begin{align}
\frac{\sqrt{2}\delta^{2}}{148\sqrt{\pi}} e^{-\frac{1-\rho}{1+\rho}\frac{t^{2}}{2}}\leq \frac{\P[\langle x, g\rangle\geq t\wedge \langle y, g\rangle\geq t ]}{\P[\langle x, g\rangle\geq t \vee \langle y, g\rangle\geq t]}  \leq \frac{2}{\sqrt{\pi}\sqrt{\delta}} e^{-\frac{1-\rho}{1+\rho}\frac{t^{2}}{2}}\label{eq:two_final}
\end{align}
Next, we bound the remaining term as
\begin{align}
  \P[h(x)\leq m \wedge g(y)\leq m] &\geq 1 - \P[h(x) > m \wedge g(y)> m]\nonumber\\
&\geq 1- 2(1-\P[X_{1}\geq t])^{m}\nonumber\\
& \geq 1- 2e^{-\P[X_{1}\geq t]m}\nonumber\\
&\geq 1 - \zeta
\end{align}
where in the last step we used the definition of $m(t,\zeta)$ and the lower bound from \eqref{eq:tails_one}. Using the last inequality along with \eqref{eq:two_final} and \eqref{eq:def_prob}, we arrive at:
\begin{equation}
\frac{\sqrt{2}(1-\zeta)\delta^{2}}{148\sqrt{\pi}} e^{-\frac{1-\rho}{1+\rho}\frac{t^{2}}{2}}\leq p_{+}(\rho) \leq \frac{2}{\sqrt{\pi}\sqrt{\delta}} e^{-\frac{1-\rho}{1+\rho}\frac{t^{2}}{2}}
\end{equation} 
Next, we treat the case where $1-\delta < \rho \leq 1$, let $Z_{1},Z_{2}$ be standard normal random variables then:
\begin{align}
1\geq \frac{\P[Z_{1}\geq t \wedge \rho Z_{1}+\sqrt{1-\rho^{2}}Z_{2}\geq t]}{\P[Z_{1}\geq t \vee \rho Z_{1}+\sqrt{1-\rho^{2}}Z_{2}\geq t]} \geq \frac{1}{2}\P[Z_{2}\geq \sqrt{\frac{1-\rho}{1+\rho}} t] 
\end{align}
\begin{align}
\P[Z_{2}\geq \sqrt{\frac{1-\rho}{1+\rho}} t] &\geq \frac{1}{\sqrt{2\pi}}\frac{\sqrt{1+\rho}}{\sqrt{1-\rho}+\sqrt{1+\rho}} e^{-\frac{1-\rho}{1+\rho} \frac{t^{2}}{2}}\geq \frac{1}{\sqrt{2\pi}}\frac{1}{1+\sqrt{2}} e^{-\frac{\delta}{2-\delta} \frac{t^{2}}{2}} 
\end{align}
Lastly, we show an upper bound on $p_{+}(\rho)$ for $-1\leq \rho \leq -1+\delta$, we have that:
\begin{align}
p_{+}(\rho) &\leq \frac{\P[Z_{1}\geq t\wedge  \rho Z_{1}+\sqrt{1-\rho^{2}}Z_{2}\geq t]}{\P[Z_{1}\geq t \vee \rho Z_{1}+\sqrt{1-\rho^{2}}Z_{2}\geq t]} \\
&\leq  \frac{1}{\P[Z_{1}\geq t]}\int_{t}^{\infty}\P[Z_{2}\geq  \frac{t-\rho u }{\sqrt{1-\rho^{2}}}] \frac{1}{\sqrt{2\pi}}e^{-\frac{u^{2}}{2}}du\\
&\leq 
\frac{1}{\P[Z_{1}\geq t]}\int_{t}^{\infty}\P[Z_{2}\geq  \frac{t-(-1+\delta) u }{\sqrt{1(-1+\delta)^{2}}}] \frac{1}{\sqrt{2\pi}}e^{-\frac{u^{2}}{2}}du\\
& \leq \frac{2}{\sqrt{\pi}\sqrt{\delta}} e^{-\frac{2}{\delta}\frac{t^{2}}{2}}
\end{align}
This concludes the proof.
\end{proof}
%\vskip 0.2in
%\noindent
%{\bf Proof of Corollary \ref{cor:sensitive_bounds}} 
%As we sample hash functions from the families $D_{+}(t,\zeta)$ and $D_{-}(\gamma t, \zeta)$ independently, the collision probability  $p_{\gamma,t}(\rho)=p_{+}(\rho)p_{-}(\rho)$ is the product of the two collision probabilities. Using Lemma 5 we get the required statement with $C_{1}(\delta):=\max\left\{ (\frac{\sqrt{2}(1-\zeta)\delta^{2}}{148\sqrt{\pi}} )^{-2}, (\frac{2}{\sqrt{\pi}\sqrt{\delta}} )^{2} \right\}$. \hfill $\blacksquare$
\subsection{Idealized Hashing}
We consider the idealized hashing probability $h_{\gamma,t}(\rho) =  -\left(\frac{1-\rho}{1+\rho}+\gamma^{2}\frac{1+\rho}{1-\rho}\right)\frac{t^{2}}{2}$. Its first and second derivatives are given by:
\begin{eqnarray}
h^{'}_{\gamma,t}(\rho) &=& \left(\frac{1}{(1+\rho)^{2}} - \gamma^{2} \frac{1}{(1-\rho)^{2}} \right)t^{2}\label{eq:hash_derivative}\\
h^{''}_{\gamma,t}(\rho) &=& -2 \left(\frac{1}{(1+\rho)^{3}}-\gamma^{2}\frac{1}{(1-\rho)^{3}} \right)t^{2}\label{eq:hash_second}
\end{eqnarray}
 
\vskip 0.2in
\begin{proof}[Proof of Proposition \ref{prop:derivatives}]  Using \eqref{eq:hash_derivative}, we see that the derivative becomes zero only at $\rho^{*}(\gamma) = \frac{1-\gamma}{1+\gamma}$ and that the second derivative becomes zero at $\rho^{**}(\gamma) = \frac{1-\gamma^{\frac{2}{3}}}{1+\gamma^{\frac{2}{3}}}$. Let $g(x)=\frac{1-x}{1+x}$, the function $h_{\gamma,t}$ is  concave for all $\rho\geq \rho^{**}(\gamma)=g(\gamma^{\frac{2}{3}})$. Since $g$ is decreasing for all $\rho\geq -1$, we have:
\begin{align}
\gamma \leq 1 \Rightarrow \gamma \leq \gamma^{\frac{2}{3}} \Rightarrow g(\gamma) \geq g(\gamma^{\frac{2}{3}}) \Leftrightarrow \rho^{*}(\gamma) \geq \rho^{**}(\gamma)\\
\gamma \geq 1 \Rightarrow \gamma \geq \gamma^{\frac{2}{3}} \Rightarrow g(\gamma) \leq g(\gamma^{\frac{2}{3}}) \Leftrightarrow \rho^{*}(\gamma) \leq \rho^{**}(\gamma)
\end{align}
\end{proof}
\begin{proof}[Proof of Proposition \ref{prop:infimum}]
We only show the case where $\phi$ is non-decreasing the other case follows similarly. We have that $g(\rho)\leq g(\rho_{*})$ for all $\rho\in [-1,1]$. By concavity, we know that:
\[
g(\rho)\leq g(\rho_{0})+g^{'}(\rho_{0}) (\rho-\rho_{0}), \ \forall \rho \in [-1, \rho^{*}]
\]
Therefore, we have that for all $\rho\in [1,\rho_{*}]$
\begin{align}
\phi(\rho)-g(\rho) &\geq \phi(\rho) - g(\rho_{0}) - g^{'}(\rho_{0}) (\rho-\rho_{0}) \\
&\geq  \phi^{'}(\rho_{0})-g(\rho_{0})+[\phi^{'}(\rho_{0})-g^{'}(\rho_{0})](\rho-\rho_{0}) \\
&= \phi(\rho_{0})-g(\rho_{0}) 
\end{align}
Finally, for $\rho\in [\rho^{*},1]$ we have by monotonicity $\phi(\rho) - g(\rho) \geq \phi(\rho_{*})  - g(\rho_{*}) \geq \phi(\rho_{0})-g(\rho_{0})$.
 \end{proof} 
%We require the following estimates. 
%
%\begin{corollary}\label{col:exponent_bounds}
% Under the conditions of Lemma \ref{lem:positive}, we have the following bounds: $t_{0}^{2} \leq - 2\frac{1+\rho_{0}}{1-\rho_{0}}f(\rho_{0})$, $t_{0}^{2}\gamma_{0}^{2} \leq -2\frac{1-\rho_{0}}{1+\rho_{0}}f(\rho_{0})$, and $t_{0}^{2}\max\{\gamma_{0}^{2},1\}\geq- \frac{1+\rho_{0}^{2}}{1-\rho_{0}^{2}}f(\rho_{0})$.
%\end{corollary}
\begin{proof}[Proof of Corollary \ref{col:exponent_bounds}]
Using the fact that $ a+b\leq 2 \max\{a,b\}$ and estimates from Lemma \ref{lem:positive}, we get that
\begin{align}
t_{0}^{2} &= -\frac{1}{2}\frac{1+\rho_{0}}{1-\rho_{0}} (2\phi(\rho_{0})-(1-\rho_{0}^{2})\phi^{'}(\rho_{0})) \leq - 2\frac{1+\rho_{0}}{1-\rho_{0}}\phi(\rho_{0}) \\
\gamma^{2}_{0}t_{0}^{2} &= -\frac{1}{2}\frac{1-\rho_{0}}{1+\rho_{0}} (2\phi(\rho_{0})+(1-\rho_{0}^{2})\phi^{'}(\rho_{0}))
\leq -2\frac{1-\rho_{0}}{1+\rho_{0}}\phi(\rho_{0}) 
\end{align}
When $\phi^{'}(\rho_{0})\geq 0$, we get by \eqref{eq:der_increasing} that:
\begin{align}
t_{0}^{2} & \geq -\frac{1+\rho_{0}}{1-\rho_{0}}\phi(\rho_{0})\\
\gamma_{0}^{2}t_{0}^{2}&\geq -\frac{1}{2}\frac{1-\rho_{0}}{1+\rho_{0}}2 \phi(\rho_{0})\frac{1-\rho_{0}}{2} \geq -\frac{(1-\rho_{0})^{2}}{2(1+\rho_{0})}\phi(\rho_{0})
\end{align}
Similarly, when $\phi^{'}(\rho_{0})\leq 0$, we get by \eqref{eq:der_decreasing}:
\begin{align}
t_{0}^{2} & \geq  -\frac{1}{2}\frac{1+\rho_{0}}{1-\rho_{0}}2 \phi(\rho_{0})\frac{1+\rho_{0}}{2} \geq -\frac{(1+\rho_{0})^{2}}{2(1-\rho_{0})}\phi(\rho_{0}) \\
\gamma_{0}^{2}t_{0}^{2}&\geq -\frac{1-\rho_{0}}{1+\rho_{0}}\phi(\rho_{0})
\end{align}
Using again $\max\{a,b\}\geq \frac{a+b}{2}$, we get in both cases that $\max\{\gamma_{0}^{2}t_{0}^{2},t_{0}^{2}\}\geq  - \frac{1+\rho_{0}^{2}}{1-\rho_{0}^{2}}\phi(\rho_{0})$.
\end{proof}
\subsection{Approximation}
\noindent
{\bf Proof of Lemma \ref{lem:linear}} The idea is to select a set of points $\rho_{1},\ldots, \rho_{T}$ and break $[\rho_{-},\rho_{+}]$ in intervals $\rho_{i}\leq \rho\leq \rho_{i}+\Delta(\rho_{i})$ of length $\Delta(\rho_{i})$ such that within each interval $\ell(\rho)$ is well approximated by $h_{\rho_{i}}(\rho)$. For $\rho\geq \rho_{0}$ using the Taylor Remainder theorem, there exists $\xi=\xi(\rho,\rho_{0})\in [\rho_{0},\rho]$ such that
\begin{align}
\ell(\rho)-h_{\rho_{0}}(\rho) &= [\ell(\rho_{0})- h(\rho_{0})]+[(\ell^{'}(\rho_{0})-h^{'}(\rho_{0}))(\rho-\rho_{0})] -\frac{1}{2}h^{''}_{\rho_{0}}(\xi(\rho,\rho_{0}))(\rho-\rho_{0})^{2}\nonumber\\
& = -\frac{1}{2}h^{''}_{\rho_{0}}(\xi(\rho,\rho_{0}))(\rho-\rho_{0})^{2} \geq 0\label{eq:remainder}
\end{align}
Where the inequality follows by concavity of $h$. To obtain an upper bound, we need an absolute bound on the second derivative. Using \eqref{eq:hash_second}, we get that
\begin{align}
|h^{''}_{\gamma_{0},t_{0}}| &\leq 2\max\left\{\frac{1}{(1+\rho)^{3}}t_{0}^{2},\gamma^{2}_{0}t_{0}^{2}\frac{1}{(1-\rho)^{3}}\right\}
\end{align}
Substituting the upper bounds from Corollary \ref{col:exponent_bounds} in turn gives
\begin{align}
|h^{''}_{\gamma_{0},t_{0}}| \leq 8 \max\left\{\frac{1+\rho_{0}}{1-\rho_{0}}\frac{1}{(1+\rho)^{3}},\frac{1-\rho_{0}}{1+\rho_{0}}\frac{1}{(1-\rho)^{3}} \right\}R(\ell)
\end{align}
For $\rho_{0}\leq \rho\leq \rho_{0}+\Delta(\rho_{0})\leq 0$ we have $|h^{''}_{\rho_{0}}| \leq \frac{16}{(1-|\rho_{0}|)^{2}}R(\ell)$. Setting $\Delta(\rho_{0}) = \sqrt{\frac{\epsilon}{8 R(\ell)}}(1-|\rho_{0}|)$, gives
\begin{equation}
\ell(\rho)-h_{\rho_{0}}(\rho) \leq  \frac{1}{2}|h^{''}_{\rho_{0}}|\Delta^{2}(\rho_{0})\leq \frac{8}{(1-|\rho_{0}|)^{2}}|\ell_{\min}|\Delta^{2}(\rho_{0}) \leq \epsilon
\end{equation}
Hence, we have the following inductive definition of points $\rho_{i}$:
\begin{align}
1+\rho_{i} &= 1+\rho_{i-1}+\Delta(\rho_{i-1})\\
 &= (1 + \rho_{i-1})+ \sqrt{\frac{\epsilon}{8R(\ell)}}(1+\rho_{i-1})\\
 & = (1+\sqrt{\frac{\epsilon}{8R(\ell)}}) (1+\rho_{i-1})
\end{align}
multiplying both sides with $\sqrt{\frac{\epsilon}{8R(\ell)}}$ gives us the updates for $\Delta(\rho_{i})$. We are now in a position to write an explicit expression for $\rho_{i}$:
\begin{align}
\rho_{i}  &= \rho_{-}+ \sum_{j=1}^{i}\Delta(\rho_{j-1})\\
&=\rho_{-} + \sum_{j=1}^{i}\left(1+\sqrt{\frac{\epsilon}{8R(\ell)}}\right)^{j-1}\sqrt{\frac{\epsilon}{8R(\ell)}} (1-|\rho_{-}|)\\
&= \rho_{-} + \left[\left(1+\sqrt{\frac{\epsilon}{8R(\ell)}} \right)^{i}-1 \right](1-|\rho_{-}|)
\end{align}
for $i=0,\ldots, T$ with   $T= \lfloor \frac{\log(\frac{1-|\rho_{+}|}{1-|\rho_{-}|})}{\log(1+\sqrt{\frac{\epsilon}{8R(\ell)}})}\rfloor $. The floor function is justified by the fact that if $\rho_{T}<\rho_{+}$ then $\rho_{T}+\Delta_{T}>\rho_{+}$ and as such $\phi$ is well approximated between $[\rho_{T},\rho_{+}]$ by $h_{\rho_{T}}$. The lemma follows by setting $i(\rho):=\min\{j\in\{0,\ldots, T\}|\rho_{i}\leq \rho\}$.
\hfill $\blacksquare$
\subsection{Scale-free Multi-resolution HBE}

%\noindent
%{\bf Proof of Lemma \ref{lem:complexity}} Using $\tilde{\phi}(\rho_{0})\leq R(\tilde{\phi}) = \frac{\beta}{k}R(\phi)$ and $|\rho_{0}|\leq 1-\delta_{k}^{*}=1-\frac{k}{2\beta L(\phi)}$ for $\rho_{0}\neq \pm 1$, we get for $R(\tilde{\phi})\geq \frac{1}{2}$ and the bounds from Corollary \ref{col:exponent_bounds}
%\begin{equation}
%\sup_{\rho_{0}\in \mathcal{T}_{\frac{1}{2}}(\tilde{\phi})}t_{\gamma_{0}}^{2} \leq  8 L(\tilde{\phi})R(\tilde{\phi}) \leq 8\left(\frac{\beta}{k} \right)^{2} L(\phi) R(\phi)
%\end{equation}
%\hfill $\blacksquare$
%\vskip 0.2in

\begin{proof}[Proof of Lemma \ref{lem:fidelity}]
 We bound the difference
\begin{equation}
E_{\epsilon}(\phi) := \sup_{\rho\in[-1,1]}|\sup_{\rho_{0}\in \mathcal{T}_{\epsilon}(\phi)}\{h_{\rho_{0}}(\rho)\}- \sup_{\rho_{0}\in \mathcal{T}_{\epsilon}(\phi)}\{\log(p_{\gamma_{0},t_{0}}(\rho))\}|
\end{equation}
We break the analysis into three parts depending where $\rho$ belongs to.
The first case $\rho\in [-1+\delta,1-\delta]$ is the easier one, as due to Lemma \ref{lem:collision_basic}  and Corollary \ref{cor:sensitive_bounds} we have for all $\rho_{0}\in \mathcal{T}_{\epsilon}(\phi)$
\begin{equation}
-\log(C_{1})\leq \log(p_{\gamma_{0},t_{0}}(\rho)) - h_{\rho_{0}}(\rho) \leq \log C_{1} 
\end{equation}
Hence,
\begin{equation}
\sup_{\rho\in[-1+\delta,1-\delta]}|\sup_{\rho_{0}\in \mathcal{T}_{\epsilon}(\phi)}\{h_{\rho_{0}}(\rho)\}- \sup_{\rho_{0}\in \mathcal{T}_{\epsilon}(\phi)}\{\log(p_{\gamma_{0},t_{0}}(\rho))\}|\leq \log C_{1} 
\end{equation} 
We next treat the case $\rho\in [-1,-1+\delta]$. Recall that  $h_{\pm 1}(\rho):= -\frac{1\mp \rho}{1\pm \rho}\frac{t_{\pm 1}^{2}}{2} + \phi(\pm 1)$, 
where $t^{2}_{\pm 1} = 4\max\{\pm  \phi^{'}(\pm1), 0\}$. Assuming that $\phi$ is increasing at $-1$,  by construction $t^{2}_{-1} = 0$ and hence:
\begin{align}\label{eq:probs_lower1}
\sup_{\rho_{0}\in \mathcal{T}_{\epsilon}(\phi)}\{\log(p_{\gamma_{0},t_{0}}(\rho))\}, 
\sup_{\rho_{0}\in \mathcal{T}_{\epsilon}(\phi)}\{h_{\rho_{0}}(\rho)\}\geq h_{-1}(\rho)\geq  \phi(-1) 
\end{align}
Assuming that $\phi$ is decreasing at $-1$, we have $t_{-}^{2} =4|\phi^{'}(-1)|$ 
\begin{align}
\sup_{\rho_{0}\in \mathcal{T}_{\epsilon}(\phi)}\{h_{\rho_{0}}(\rho)\}&\geq h_{-1}(\rho) =  \phi(-1) - 2\frac{1+\rho}{1-\rho}|\phi^{'}(-1)|\geq \phi(-1)\label{eq:probs_lower2}
\end{align}
and by \eqref{eq:collision_right} in Lemma \ref{lem:collision_basic}  applied to $p_{+}(-\rho)$
\begin{align}
\sup_{\rho_{0}\in \mathcal{T}_{\frac{1}{2}}(\tilde{\phi})}\{\log(p_{\gamma_{0},t_{0}}(\rho))\} \geq- \frac{1}{2}\log C_{1}  -\frac{2}{2-\delta}\delta \phi'(-1)+\phi(-1)\geq -\frac{1}{2}\log C_{1} +\phi(-1)\label{eq:probs_lower3}
\end{align}

By Proposition \ref{prop:infimum} and the fact that $\phi$ can be written as the supremum of linear functions we get that $\sup_{\rho_{0}\in \mathcal{T}_{\epsilon}(\phi)}\{h_{\rho_{0}}(\rho)\}\leq \phi(\rho)$. Using Corollary \ref{cor:sensitive_bounds} and Corollary \ref{col:exponent_bounds}, we obtain:
\begin{align}
\sup_{\rho_{0}\in \mathcal{T}_{\epsilon}(\phi)\setminus\{-1,+1\}}\{p_{\gamma_{0},t_{0}}(\rho)\} &\leq -\frac{2-\delta}{\delta}\min_{\rho_{0}\in\mathcal{T}_{\epsilon}(\phi)\setminus\{-1,+1\}}\{t^{2}_{\gamma_{0}}\} +\frac{1}{2}\log C_{1} \\
&\leq -\frac{2-\delta}{\delta}\min_{\rho_{0}\in\mathcal{T}_{\epsilon}(\phi)\setminus\{-1,+1\}}\left\{- \frac{1+\rho_{0}^{2}}{1-\rho_{0}^{2}}\phi(\rho_{0})\right\}+\frac{1}{2}\log C_{1} 
\end{align}

To bound the above quantity further, distinguish two cases:   $\phi(-1)=0$ or $\phi(1)=0$. By convexity, in the former case  we have $\phi(\rho_{0})\leq \frac{1+\rho_{0}}{2} \phi(1)$ and $\phi(\rho_{0})\leq \frac{1-\rho_{0}}{2}\phi(-1)$ in the latter. Substituting these bounds and solving the optimization problem we find that the minimizer in the first case is $\rho_{0}=-\sqrt{2}+1$ and in the latter case $\rho_{0}=\sqrt{2}-1$. In both cases we may obtain:
\begin{align}
\sup_{\rho_{0}\in \mathcal{T}_{\frac{1}{2}}(\tilde{\phi})\setminus\{-1,+1\}}\{p_{\gamma_{0},t_{0}}(\rho)\} &\leq -\frac{2-\delta}{\delta}(\sqrt{2}-1)\max\{|\tilde{\phi}(1)|,|\tilde{\phi}(-1)|\} + \frac{1}{2}\log C_{1} 
\end{align}
Next, we obtain bounds for $\rho_{0}\in \{-1,+1\}$:
\begin{align}
\log(p_{-1}(\rho)) &\leq \phi(-1)\\
\log(p_{+1}(\rho)) &\leq \frac{1}{2}\log C_{1}  -2\frac{2-\delta}{\delta}\max\{\phi^{'}(1),0\}
\end{align}
Using the above inequalities we may conclude that:
\begin{align}
\sup_{\rho\in [-1,-1+\delta]}\sup_{\rho_{0}\in \mathcal{T}_{\epsilon}(\phi)}\{\log p_{\gamma_{0},t_{0}}(\rho)\} \leq \max\{\phi(-1),\frac{1}{2}\log C_{1} \}\leq \phi(-1)+\frac{1}{2}\log C_{1} \label{eq:probs_upper}
\end{align}
We have for $\rho\in [-1,-1+\delta]$ by \eqref{eq:probs_lower1} and \eqref{eq:probs_lower3}
\begin{align}
\sup_{\rho_{0}\in \mathcal{T}_{\epsilon}(\phi)}\{h_{\rho_{0}}(\rho)\}- \sup_{\rho_{0}\in \mathcal{T}_{\epsilon}(\phi)}\{\log(p_{\gamma_{0},t_{0}}(\rho))\} &\leq \phi(\rho) -\phi(-1)+\frac{1}{2}\log C_{1}  \leq  L(\phi)\delta +\frac{1}{2}\log C_{1} 
\end{align}
where in the last step we used the fact that $\phi$ is Lipischitz. In the same vein by \eqref{eq:probs_lower1} and \eqref{eq:probs_upper}
\begin{align}
\sup_{\rho_{0}\in \mathcal{T}_{\epsilon}(\phi)}\{h_{\rho_{0}}(\rho)\}- \sup_{\rho_{0}\in \mathcal{T}_{\epsilon}(\phi)}\{\log(p_{\gamma_{0},t_{0}}(\rho))\} & \geq \phi(-1) -\phi(-1)-\frac{1}{2}\log C_{1} = -\frac{1}{2}\log C_{1}
\end{align}
Using $\delta\leq \frac{\epsilon}{L(\phi)} \Rightarrow L(\phi)\delta \leq \epsilon \leq \epsilon \log C_{1}$.  
By symmetry the case $\rho\in [1-\delta,1]$ follows. Overall, for $\epsilon=1/2$ we obtain the   bound $E_{1/2}(\phi) \leq\log C_{1}$.
\end{proof}

\begin{proof}[Proof of Lemma \ref{lem:complexity}]
Let $\delta_{1/2}:= \frac{k^{*}}{2\beta L(\phi)}$  be the constant from Lemma \ref{lem:boundary} applied for $\tilde{\phi}$, then  for all $\rho_{0}\in \mathcal{T}_{\frac{1}{2}}(\tilde{\phi})\setminus \{-1,+1\}$ we have $|\rho_{0}|\leq 1-\delta_{1/2}$.  Using $\tilde{\phi}(\rho_{0})\leq R(\tilde{\phi}) = \frac{\beta}{k}R(\phi)$ and $|\rho_{0}|\leq 1-\delta_{1/2}=1-\frac{k}{2\beta L(\phi)}$ for $\rho_{0}\neq \pm 1$, we get by Corollary \ref{col:exponent_bounds} that $\sup_{\rho_{0}\in \mathcal{T}_{\frac{1}{2}}(\tilde{\phi})}t_{\gamma_{0}}^{2} \leq  8 L(\tilde{\phi})R(\tilde{\phi}) \leq 8\left(\frac{\beta}{k} \right)^{2} L(\phi) R(\phi)$. For $\rho_{0}\in \{-1,1\}$ we have $t^{2}\leq 4|\phi'(\rho)|\leq 4 L(\tilde{\phi})\leq 8 L(\tilde{\phi})R(\tilde{\phi})$ for $R(\tilde{\phi})\geq 1/2$.
\end{proof}
\section{Open Questions} \label{sec:open}
\phantom{p}

\paragraph{Data-dependent LSH} Both the HBE and Multi-Resolution HBE approaches exhibit $1/\sqrt{\mu}$ complexity depending on $\mu=Z_{w}(y)$. For HBE~\cite{charikar2017hashing},  the instance that instantiates the worst-case variance fo the estimator is when there are $O(n\mu)$ points very close to the query such that $w(x_{1},y)=\Theta(1)$ and $O(n)$ points ``away" from the query such that $w(x_{2},y)=\Theta(\mu)$. On the other hand for MR-HBE, if one uses the full power of Theorem \ref{thm:psquared} (see Section \ref{sec:variance})  by analyzing $D_{T}(x_{1},x_{2})$ rather than its simplified version Theorem \ref{thm:scale-free}, the worst case instance for the variance appears to have $O(n\sqrt{\mu})$ points with $w(x_{1},y)=\Theta(\sqrt{\mu})$ and $O(n)$ points with $w(x_{2},y)=\Theta(\mu)$. For the Gaussian kernel this essentially  means that it involves solving a $c$-ANN problem with $c=\sqrt{2}$. Using the best data-independent LSH~\cite{andoni2006near}  the running time should be $n^{1/c^{2}+o(1)}=n^{1/2+o(1)}$ matching the $1/\sqrt{\mu}$ dependence exhibited by our data structures. This suggests that if one is able to adapt the data-dependent hashing approach~\cite{andoni2015optimal,andoni2017optimal} to this setting one might be able to get algorithms running in $n^{1/(2c^{2}-1)}=n^{1/3+o(1)}$ time or $1/\sqrt[3]{\mu}$. We believe this is an intriguing direction for future work.

\paragraph{Cell-probe Lower bounds} The batch version of the problem, where we seek to answer many queries, is equivalent to approximating a matrix-vector product. The matrix in question has elements given by  $w(x,y)$ for $x\in X$ and $y\in Y$. In   high dimensions for fast decaying functions like the Gaussian,  this problem is related to Boolean Matrix Vector Multiplication. For the latter problem and succinct data-structures recently \cite{chakraborty2018tight} a tight cell-prove lower bound of $\tilde{O}(n^{3/2})$ was given. This matches the $n^{3/2+o(1)}$ complexity of our data- structures for $L\leq \frac{1}{2}\log n$.  

The lower bound is based on the fact that there is a distribution over boolean matrices where  querying arbitrary elements of the matrix does not reveal too much information and there is a set of vector query  whose answer reveals a large amount of information about the matrix. This is used to show that any succinct data-structure that can answer the queries without reading too many elements from the matrix must have stored a lot of information.  The parallel to our case would be that ``vector" queries specify a subset of points in our data set $X$ for which we want to know the density for a fixed set of queries $Y$, and ``element" queries correspond to evaluating the value $w(x,y)$ between a query and a point. HBE essentially define  data-structures using $n^{3/2+o(1)}$ extra bits of storage  that for a single  vector (e.g. all ones) one can answer $n$ ``point" queries  using $n^{1/2+o(1)}$ evaluations of $w(x,y)$ per query point $y$. Formalizing this connection is an interesting research question.

\paragraph{Locality Sensitive Hashing}  One disadvantage of many LSH based approaches is that hash functions  often   can be expensive to compute at least in the form suggested by the theory. In recent years there has been an effort to design practical hash functions that come close to the performance of the optimal ones. For example the papers~\cite{kennedy2017fast,andoni2015practical} study practical functions for the unit sphere, while \cite{andoni2017lsh} study functions for the binary hypercube. Combining these novel LSH methods with the method of Hashing Based Estimators introduced  in \cite{charikar2017hashing} and extended here, is a promising direction to getting practical algorithms for  estimation problems.

\paragraph{Variance Reduction} The topic of Variance Reduction for Stochastic Gradient~\cite{le2012stochastic,johnson2013accelerating,shalev2013stochastic} is an important field of current research. There are roughly three almost orthogonal approaches to this problem: re-weighting schemes~\cite{lan2012optimal,allen2016variance,allen2017katyusha}, importance sampling schemes~\cite{zhao2015stochastic,allen2016even} and partition-based schemes~\cite{zhao2014accelerating,allen2016exploiting}. For almost all these approaches, the distribution that gradients are sampled is independent of the current iterate (e.g. uniform or based on Lipschitz constants of gradients),  or changes with the current iterate and requires linear time to update the new distributions. The latter approaches are referred to as Adaptive Variance Reduction methods~\cite{csiba2015stochastic,namkoong2017adaptive,salehi2017stochastic}. Our approach sidesteps the issue of recomputing such distributions through the use of Locality Sensitive Hashing. An intriguing direction is to utilize our techniques within an optimization algorithm to obtain faster optimization methods.

\section*{Acknowledgments}
The authors would like to thank Dimitris Achlioptas and Clement Canonne for valuable feedback on improving the presentation of the paper, as well as Aviad Rubinstein for helpful conversations on conditional lower bounds.  We are also grateful to Casper Freksen for pointing out a number of typos on an earlier version of the paper.  The second author is partially supported by a Onassis Foundation Scholarship.	
\bibliography{kde}
\bibliographystyle{abbrv}
\end{document}